\documentclass[a4paper,UKenglish,cleveref, autoref, thm-restate]{lipics-v2021}

\usepackage{mathabx}
\usepackage{cmll}
\usepackage{tikz}
\usetikzlibrary{cd}
\nolinenumbers
\usepackage{mathtools}
\usepackage{prftree}
\usepackage{stmaryrd}
\SetSymbolFont{stmry}{bold}{U}{stmry}{m}{n}
\usepackage{xcolor}

\newcommand{\real}{\mathsf{Real}}
\newcommand{\fst}{\mathsf{fst}}
\newcommand{\snd}{\mathsf{snd}}
\newcommand{\lam}[3]{\lambda #1\,{:}\,#2.\,#3}
\newcommand{\pair}[2]{\langle #1,#2 \rangle}
\newcommand{\type}{\mathbf{Type}}
\newcommand{\LL}{\Lambda_{\real}}
\newcommand{\prt}[1]{#1^{\bullet}}
\newcommand{\cterm}[1]{\mathbf{T}_{#1}}

\newcommand{\dlog}{\delta^{\mathrm{log}}}
\newcommand{\deq}{\delta^{\mathrm{eq}}}
\newcommand{\psem}[1]{\llparenthesis #1 \rrparenthesis}
\newcommand{\sem}[1]{\llbracket #1 \rrbracket}
\newcommand{\fsem}[1]{\{\hspace{-2.25pt}| #1 |\hspace{-2.25pt}\}}
\newcommand{\mmm}{\mbox{quasi${}^{2}$-metric}}
\newcommand{\MMM}{\mbox{Quasi${}^{2}$-Metric}}
\newcommand{\qqm}{\mathbf{Qqm}}
\newif\ifshort
\shortfalse
\newif\iflong
\longtrue
\newif\ifcolor
\colorfalse

\newcommand{\qleq}{\sqsubseteq}
\newcommand{\qgeq}{\sqsupseteq}
\newcommand{\DLR}{differential logical relation}

\ifcolor
  \newcommand{\shortv}[1]{\ifshort\color{red}#1\color{black}\fi}
  \newcommand{\longv}[1]{\iflong\color{blue}#1\color{black}\fi}
\else
  \newcommand{\shortv}[1]{\ifshort#1\fi}
  \newcommand{\longv}[1]{\iflong#1\fi}
\fi




\bibliographystyle{plainurl}

\title{On The Metric Nature \\ of (Differential) Logical Relations} 

\titlerunning{On The Metric Nature of (Differential) Logical Relations} 

\author{Ugo Dal Lago}{University of Bologna \and INRIA Sophia Antipolis }{ugo.dallago@unibo.it}{https://orcid.org/0000-0002-1825-0097}{}

\author{Naohiko Hoshino}{Sojo University, Japan}{nhoshino@cis.sojo-u.ac.jp}{[orcid]}{}

\author{Paolo Pistone}{Universit\'e Claude Bernard Lyon 1, France}{paolo.pistone@ens-lyon.fr}{[orcid]}{}

\authorrunning{Ugo Dal Lago, Naohiko Hoshino, Paolo Pistone} 

\Copyright{Ugo Dal Lago, Naohiko Hoshino, Paolo Pistone} 

\begin{CCSXML}
<ccs2012>
   <concept>
       <concept_id>10003752.10010124.10010131</concept_id>
       <concept_desc>Theory of computation~Program semantics</concept_desc>
       <concept_significance>500</concept_significance>
       </concept>
   <concept>
       <concept_id>10003752.10003790.10003798</concept_id>
       <concept_desc>Theory of computation~Equational logic and rewriting</concept_desc>
       <concept_significance>300</concept_significance>
       </concept>
 </ccs2012>
\end{CCSXML}

\ccsdesc[500]{Theory of computation~Program semantics}
\ccsdesc[300]{Theory of computation~Equational logic and rewriting}
\keywords{Differential Logical Relations, Quantales, Quasi-Metrics, Partial Metrics} 

\category{} 

\relatedversion{} 





\EventEditors{John Q. Open and Joan R. Access}
\EventNoEds{2}
\EventLongTitle{42nd Conference on Very Important Topics (CVIT 2016)}
\EventShortTitle{CVIT 2016}
\EventAcronym{CVIT}
\EventYear{2016}
\EventDate{December 24--27, 2016}
\EventLocation{Little Whinging, United Kingdom}
\EventLogo{}
\SeriesVolume{42}
\ArticleNo{23}

\hideLIPIcs

\begin{document}

\maketitle

\begin{abstract}
Differential logical relations are a method to measure distances between
  higher-order programs. They differ from standard methods based on program metrics in that differences between functional programs are themselves functions, relating errors in input with errors in output, this way providing a more fine grained, contextual, information.
The aim of this paper is
  to clarify the metric nature of differential
  logical relations.  While previous work has shown that these do not give rise, in general, to (quasi-)metric spaces nor 
  to partial metric spaces, we show that the distance functions arising from such relations, that we call quasi-quasi-metrics,   can be related to both quasi-metrics and partial metrics, the latter being also captured by suitable relational definitions.
 Moreover, we exploit such connections 
 to deduce some new compositional reasoning principles for program
  differences.
\end{abstract}

\section{Introduction}

Program equivalence is a crucial concept in
program semantics, and ensures that different
implementations of a program produce
\emph{exactly} the same results under the same
conditions, i.e., in any environment. This concept
is fundamental in program verification, code
optimization, and for enabling reliable
refactoring: by proving that two programs are
equivalent, developers and compiler designers can
confidently replace one with the other, knowing
that the behavior and outcomes will remain
consistent. In this respect, guaranteeing that the
underlying notion of program equality is a
congruence is of paramount importance.

In the research communities mentioned above, however, it is known that comparing programs through a notion of equivalence without providing the possibility of measuring the distance between non-equivalent programs makes it impossible to validate many interesting and useful program transformations \cite{Mittal2016}. All this has generated interest around the concepts of program metrics and more generally around the study of techniques through which to quantitatively compare non-equivalent programs, so as, e.g., to validate those program transformations which do not introduce too much of an error \cite{Reed2010, Plotk}. 

What corresponds, in a quantitative context, to the concept of congruence? Once differences are measured by some (pseudo-)metric, a natural answer to this question is to require that any language construct does not increase distances, that is, that they are \emph{non-expansive}. Along with this, the standard properties of (pseudo-)metrics, like the triangle inequality $ d(x,z) +d(z,y)\geq d(x,y)$, provide general principles that are very useful in metric reasoning, replacing standard qualitative principles (e.g., in this case, transitivity $\mathrm{eq}(x, z)\land\mathrm{eq}(z, y)\vdash \mathrm{eq}(x, y)$).

Still, as already observed in many occasions \cite{DGY19,DLG21}, the restriction to language constructs that are non-expansive with respect to some purely numerical metric turns out too severe in practice.
On the one hand, the literature focusing on higher-order languages has mostly restricted its attention to linear or graded languages \cite{Reed2010, Gaboardi2017}, due to well-known difficulties in constructing metric models for full ``simply-typed'' languages \cite{Honsell2022}.
On the other hand, even if one restricts to a linear language, the usual metrics defined over functional types are hardly useful in practice, as they assign distances to functions $f,g$ via a comparison of their values in the worst case: for instance, as shown in \cite{DGY19}, the two maps $\lambda x.x, \lambda x.\sin(x):\mathsf{Real}\to\mathsf{Real}$, although behaving very closely around $0$, are typically assigned the distance $\infty$, since their values grow arbitrarily far from each other in the worst case.

The \emph{differential logical relations} \cite{DGY19,DLG21,PistoneLICS,DG22} have been introduced as a solution to the aforementioned problems. In this setting, which natively works for unrestricted higher-order languages, the distance between two programs is not necessarily given as a single number: for instance, two programs of functional type are far apart according to a function itself, which measures how the {error} in the output depends on the \emph{error} in the input, but also on the \emph{value} of the input itself. This way the notion of distance becomes sufficiently expressive, at the same time guaranteeing the possibility of compositional reasoning. This paradigm also scales to languages with duplication, recursion \cite{DLG21} and works even in presence of effects \cite{DG22}.

In the literature on program metrics, it has become common to consider metrics valued on arbitrary \emph{quantales} \cite{Hofmann2014, Stubbe2014}. This means that, as for the differential logical relations, the distance between two points needs not be a non-negative real, but can 
belong to any suitable algebra of ``quantities''. This has led to the study of different classes of quantale-valued metrics, each characterized by a particular formulation of the triangular law. Among this, \emph{quasi-metrics} \cite{Goubault-Larrecq_2013} and \emph{partial metrics} \cite{matthews, Stubbe2018} have been explored for the study of domains, even for higher-order languages \cite{Geoffroy2020, maestracci2025}. While the first obey the usual triangular inequality, or transitivity, the second obey a \emph{stronger} transitivity condition, also taking into account the replacement of standard reflexivity $d(x,x)=0$ by a weaker \emph{quasi-reflexivity} condition $d(x,x)\leq d(x,y)$, implying that a point need not be a distance zero from itself.

A natural question is thus: do the distances between programs that are obtained via differential logical relations constitute some form of (quantale-valued) metric? In particular, what forms do transitivity and reflexivity do these relations support? The original paper \cite{DGY19} defined symmetric differential logical relations and gave a very weak form of triangle inequality. Subsequent works, relating to the more natural asymmetric case, have either ignored the metric question \cite{DLG21,DG22} or shown that the distances produced must violate \emph{both} the reflexivity of quasi-metric and the strong transitivity of partial metrics \cite{Geoffroy2020,PistoneLICS}.
%

This paper aims at providing a bridge between 
current methods for higher-order program differences and the well-established literature on quantale-valued metrics. More specifically, we show that the distances produced by differential logical relations, that we call \emph{quasi-quasi-metrics} (or \emph{quasi$^2$-metrics}), satisfy the \emph{quasi-}reflexivity of partial metrics and the standard transitivity of quasi-metrics. Such metrics thus sit somehow \emph{in between} quasi-metrics and partial metrics. We will establish precise connections between all those. We also exploit these results to deduce some new principles of compositional reasoning about program differences arising from the different forms of transitivity at play.
Finally, we introduce a deductive system, inspired from the quantitative equational theories of Mardare et al.~\cite{Plotk}, to deduce differences between programs.

\subparagraph{Contributions} Our contributions can be summarized as follows:
\begin{itemize}
\item We introduce a new class of quantale-valued metrics, called quasi$^2$-metrics. We show that each such metric gives rise to two \emph{observational quasi-metrics} over programs, and can be seen as a relaxation of partial \emph{quasi-}metrics \cite{KUNZI2006}. This is in Section 3;
\item we establish the equivalence of the cartesian closed structure of quasi$^2$-metrics with the standard definition of differential logical relations. We also show that observational quasi-metrics as well as partial quasi-metrics can be captured by suitable families of logical relations. We exploit all such definitions to deduce some new compositional reasoning principles for program differences. This spans through Sections 4-7;
\item finally, we introduce an equational theory for program differences via a syntactic presentation of differential logical relations and we formulate two conjectures about the comparison of the different notions of program distances introduced. This is in Sections 8 and 9.
\end{itemize}

\shortv{Due to lack of space, we omit many proofs, which
can be found in \cite{longversion}.}

%
%



\section{From Logical Relations to Differential Logical Relations}
\label{sec:from-logic-relat}

In this section we recall how differential logical relations can be seen as a quantitative generalization of standard logical relations, at the same time highlighting the metric counterparts of qualitative notions like equivalences and preorders. Moreover, we introduce
 \emph{quasi-quasi-metrics} as the metric counterpart of \emph{quasi-}reflexive and transitive relations.
 
\subparagraph{Logical Relations}

The theory of logical relations is well-known and has been exploited in various directions to establish \emph{qualitative}
 properties of type systems, like e.g.~termination \cite{Girard1989}, bisimulation \cite{Sangiorgi2007} or parametricity \cite{Plotkin1993, Hermida2014}. 
The idea is to start from some basic binary relation $\rho_o\subseteq o\times o$ over the terms of some ground type $o$. The relation $\rho_o$ can then be \emph{lifted} to a family of binary relations $\rho_{\mathsf{A}}\subseteq \mathsf{A}\times \mathsf{A}$, where $\mathsf{A}$ varies over all simple types constructed starting from $o$ (indeed, one may consider recursive \cite{Dreyer2009}, polymorphic \cite{Reynolds1983, Plotkin1993} or monadic \cite{GL2002} types as well, but we here limit our discussion to simple types).
The lifting is defined inductively by: 
\begin{align}
  (\mathsf t,\mathsf t') \in
  \rho_{\mathsf{A}\times \mathsf{B}}
  &\  \iff  \
    (\fst(\mathsf t), \fst(\mathsf t'))
    \in \rho_{\mathsf A} \ \text{and} \
    (\snd(\mathsf t), \snd(\mathsf t'))
    \in \rho_{\mathsf B},
    \tag{$\land$}\label{eq:land} \\
  (\mathsf t , \mathsf t')
  \in \rho_{\mathsf{A}\Rightarrow \mathsf{B}}
  &\  \iff  \
    (\forall \mathsf s,\mathsf s'\in\mathsf A ) \ 
    (\mathsf s, \mathsf s')\in \rho_{\mathsf A}
    \ \Rightarrow \
    ( \mathsf{t} \, \mathsf{s},
    \mathsf{t'} \, \mathsf{s'})
    \in \rho_{\mathsf B}
    \tag{$\Rightarrow $}\label{eq:to}.
\end{align}
Typically, one wishes to establish a so-called \emph{fundamental lemma}, stating that well-typed programs $x:\mathsf A\vdash\mathsf t:\mathsf B$ \emph{preserve relations}. This means that, for \emph{any} choice of a family of logical relations $\rho_{\mathsf A}$ defined as above, one can prove 
\begin{align*}
(\forall \mathsf s,\mathsf s'\in \mathsf A)\
 (\mathsf s, \mathsf s' )\in \rho_{\mathsf A}\ \Rightarrow \ 
  (\mathsf{t}[\mathsf s/\mathsf{x}] ,
  \mathsf{t}[\mathsf s'/\mathsf{x}])\in \rho_{\mathsf B}.
\tag{Fundamental Lemma}
\end{align*}
Notice that this is equivalent to the instance of reflexivity $(\lambda \mathsf{x}.\mathsf t,  \lambda \mathsf{x}.
\mathsf t)\in \rho_{\mathsf A\Rightarrow \mathsf B}$. 

Of particular interest are the \emph{equivalence} relations (that is, those which are reflexive, symmetric and transitive) and the \emph{preorders} (that is, the reflexive and transitive ones). We here focus on the latter, as we will not consider symmetry in this paper (see Remark \ref{rem:symmetry}).
A fundamental observation is that the logical relation lifting preserves preorders (and indeed, equivalences): if $\rho_{\mathsf A}$ and $\rho_{\mathsf B}$ are reflexive and transitive, then $\rho_{\mathsf A\times \mathsf B}$ and $\rho_{\mathsf A\Rightarrow \mathsf B}$ are reflexive and transitive as well, \emph{provided that the fundamental lemma holds}. The case of the function space is the most interesting one: as we observed above, 
the reflexivity condition $(\mathsf{t},\mathsf{t})\in \rho_{\mathsf A\Rightarrow \mathsf B}$ coincides with the fact that the function $\mathsf{t}$ is relation-preserving; transitivity, instead, can be proved by combining relation-preservation, the reflexivity of $\rho_{\mathsf A}$ and the transitivity of $\rho_{\mathsf B}$.

Any logical relation $\rho\subseteq \mathsf A\times \mathsf A$ induces an equivalence $\equiv_{\rho}$, called the \emph{observational equivalence}, where $\mathsf t  \equiv_{\rho} \mathsf u$ iff for all $\mathsf s\in \mathsf A$, $(\mathsf s,\mathsf t)\in \rho$ iff $(\mathsf s,\mathsf u)\in \rho$. Intuitively, two terms $\mathsf t,\mathsf u$ are equivalent if the relation $\rho$ cannot distinguish them. For example, if the definition of $\rho_o$ on basic types only depends on the values $\mathsf t\Rightarrow ^*\mathsf v$ produced by terms, one can usually deduce that terms are indistinguishable from their associated values, that is $\mathsf t \equiv_{\rho}  \mathsf v$.  
In the absence of symmetry, one obtains two \emph{observational preorders} $\sqsubseteq^l_{\rho},\sqsubseteq^r_{\rho}\ \subseteq \mathsf A\times \mathsf A$ defined by: 
\begin{align*}
\mathsf s\ \sqsubseteq^l_{\rho} \ \mathsf  t \ &\iff  \ 
(\forall \mathsf u\in \mathsf A) \ ( \mathsf t, \mathsf u)\in \rho \ \Rightarrow \ (\mathsf s, \mathsf u)\in \rho,
\\
\mathsf s\ \sqsubseteq^r_{\rho} \ \mathsf  t \ &\iff  \ 
(\forall \mathsf u\in \mathsf A) \ ( \mathsf u, \mathsf s)\in \rho \ \Rightarrow \ (\mathsf u, \mathsf t)\in \rho.
\end{align*}
These preorders satisfy the following useful and easily provable properties:
\begin{proposition}\label{prop:obs_preorder}
For any binary relation $\rho\subseteq \mathsf A\times \mathsf A$ and $c\in \{l,r\}$, 
\begin{description}
\item[(i.)] $\sqsubseteq^c_{\rho} \ \supseteq \ \rho$ iff $\rho$ is transitive;
\item[(ii.)] $\sqsubseteq^c_{\rho} \ \subseteq \ \rho$ iff $\rho$ is reflexive;
\item[(iii.)] $\sqsubseteq^c_{\rho} \ = \ \rho$ iff $\rho$ is a preorder;
\item[(iv)] The following hold:
  \begin{align}
    (\forall \mathsf s,\mathsf t,\mathsf u
    \in \mathsf A)
    \ \mathsf s \ \sqsubseteq^l_{\rho}
    \ \mathsf t \ \land \ 
    (\mathsf t, \mathsf u )
    \in \rho \ \Rightarrow \
    (\mathsf s, \mathsf u) \in \rho,
    \tag{left transitivity}\\
    (\forall \mathsf s,\mathsf t,\mathsf u\in \mathsf A )\ (\mathsf s, \mathsf t )\in \rho \ \land \ \mathsf t \ \sqsubseteq^r_{\rho}\ \mathsf u \ \Rightarrow \  (\mathsf s, \mathsf u)\in \rho.
    \tag{right transitivity}
  \end{align}
\end{description}

\end{proposition}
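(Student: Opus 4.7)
The plan is to unfold the definitions of $\sqsubseteq^l_\rho$ and $\sqsubseteq^r_\rho$ and observe that each item becomes a purely logical tautology. I will treat the four items in a slightly reorganized order: (iv) first (since it is essentially a restatement of the defining implication), then (i), (ii), and finally (iii) by combining (i) and (ii). Throughout, I will prove only the $l$-cases, since the $r$-cases are perfectly symmetric (swap the roles of the two arguments of $\rho$).

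For (iv), left transitivity, I would fix $\mathsf s,\mathsf t,\mathsf u$ with $\mathsf s\sqsubseteq^l_\rho \mathsf t$ and $(\mathsf t,\mathsf u)\in\rho$, then instantiate the universal quantifier in the definition of $\sqsubseteq^l_\rho$ with $\mathsf u$, obtaining $(\mathsf s,\mathsf u)\in\rho$. Right transitivity is handled in exactly the same way by instantiating the definition of $\sqsubseteq^r_\rho$.

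For (i), the left-to-right direction: if $\sqsubseteq^l_\rho\,\supseteq\,\rho$ and $(\mathsf s,\mathsf t),(\mathsf t,\mathsf u)\in\rho$, then $(\mathsf s,\mathsf t)\in\rho$ gives $\mathsf s\sqsubseteq^l_\rho \mathsf t$, and left transitivity from (iv) yields $(\mathsf s,\mathsf u)\in\rho$. Conversely, if $\rho$ is transitive and $(\mathsf s,\mathsf t)\in\rho$, then for every $\mathsf u$ with $(\mathsf t,\mathsf u)\in\rho$ we get $(\mathsf s,\mathsf u)\in\rho$ by transitivity; this is precisely $\mathsf s\sqsubseteq^l_\rho \mathsf t$.

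For (ii): if $\sqsubseteq^l_\rho\,\subseteq\,\rho$, then since $\mathsf t\sqsubseteq^l_\rho \mathsf t$ holds vacuously (the implication in the definition is trivially satisfied when $\mathsf s=\mathsf t$), we obtain $(\mathsf t,\mathsf t)\in\rho$. Conversely, assume $\rho$ reflexive and $\mathsf s\sqsubseteq^l_\rho \mathsf t$; instantiating the defining universal quantifier with $\mathsf u=\mathsf t$ and using $(\mathsf t,\mathsf t)\in\rho$ gives $(\mathsf s,\mathsf t)\in\rho$. Finally, (iii) is immediate from (i) and (ii), because a preorder is by definition both reflexive and transitive. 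No step is a real obstacle; the only subtle point is noticing that $\mathsf t\sqsubseteq^l_\rho \mathsf t$ holds unconditionally in (ii), which is what makes the reverse direction work.
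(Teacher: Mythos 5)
Your proposal is correct and is exactly the routine unfolding of the definitions of $\sqsubseteq^l_{\rho}$ and $\sqsubseteq^r_{\rho}$ that the paper has in mind when it calls these properties ``easily provable'' (no explicit proof is given in the paper); the reorganization of proving (iv) first and deriving the forward direction of (i) from it is a nice economy, and the key observation that $\mathsf t\sqsubseteq^l_{\rho}\mathsf t$ always holds is correctly identified. The only nitpick is terminological: that instance of the definition is a tautology ($P\Rightarrow P$) rather than ``vacuously'' true, but your parenthetical makes the intended meaning clear.
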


The reason why we delve into these basic properties of preorders is that we will soon explore their (less trivial!) quantitative counterparts, that arise naturally in the theory of differential logical relations. In particular, the left and right transitivity conditions will correspond to \emph{stronger} variants of the triangular inequality for metric spaces.

Beyond preorders, we are interested in the following weaker notion:
\begin{definition}[quasi-preorder]
A relation $\preceq\ \subseteq \mathsf A\times \mathsf A$ is called a \emph{quasi-preorder} if it is transitive and (left-)quasi-reflexive, that is, $\mathsf t  \preceq  \mathsf u \ \Rightarrow \ \mathsf t \preceq  \mathsf t$.
\end{definition}
Quasi-preorders are obtained by weakening the reflexivity condition of preorders: intuitively, only the points which are smaller than someone are smaller than themselves. 
One can easily develop a theory of logical relations for quasi-preorders. The sole delicate point is that, in order to let such relations lift to function spaces, one has to slightly modify the relation lifting as follows:
\begin{align}\label{eq:tostar}
  (\mathsf t , \mathsf t')
  \in \rho_{\mathsf{A}\Rightarrow \mathsf{B}}
  &\  \iff  \
    (\forall \mathsf s,\mathsf s'\in\mathsf A )\ 
    (\mathsf s, \mathsf s')
    \in \rho_{\mathsf A} \ \Rightarrow \
    ( \mathsf{t} \, \mathsf{s},
    \mathsf{t'} \, \mathsf{s'})
    \in  \rho_{\mathsf B}\ \land \
    (\mathsf{t} \, \mathsf{s},
    \mathsf{t} \, \mathsf{s}')
    \in \rho_{\mathsf B}
    \tag{$\Rightarrow ^*$}. 
\end{align}
Compared to \eqref{eq:to}, \eqref{eq:tostar}
includes a second clause
$( \mathsf{t} \, \mathsf{s}, \mathsf{t} \,
\mathsf{s}')\in \rho_{\mathsf B}$ relating the
action of $\mathsf t$ on both $\mathsf s$ and
$\mathsf s'$. With this definition, one can easily
check that if $\rho_{\mathsf A},\rho_{\mathsf{B}}$ are
quasi-preorders, and the fundamental lemma holds,
then $\rho_{\mathsf A\times \mathsf B}$ and
$\rho_{\mathsf A\Rightarrow \mathsf B}$ are
quasi-preorders as well.

\subparagraph{Differential Logical Relations}

We now have all elements to discuss what happens when extending logical relations to a quantitative setting. 
Rather than considering binary relations $\rho\subseteq \mathsf A\times \mathsf A$ expressing that a certain property holds for two terms $\mathsf s,\mathsf t$ or not, we will consider \emph{ternary} relations $\rho\subseteq \mathsf A\times \mathcal Q_{\mathsf A}\times \mathsf A$, where $(\mathsf s,\mathsf a,\mathsf t)\in \rho$ indicates that a certain relation holds of $\mathsf s,\mathsf t$ to \emph{a certain extent}, quantified via $\mathsf a\in \mathcal Q_{\mathsf A}$. Here $\mathcal Q_{\mathsf A}$ is a \emph{quantale}, an algebraic structure (recalled in the next section) that captures several properties of quantities as expressed by e.g.~non-negative real numbers.

In fact, just like for standard logical relations, a differential logical relation $\rho_o\subseteq o \times \mathcal Q_{o}\times o$ on a ground type can be \emph{lifted} to a family of binary relations $\rho_{\mathsf{A}}\subseteq \mathsf{A}\times \mathcal Q_{\mathsf A}\times \mathsf{A}$ over simple types. First, we define, by induction, the quantales $\mathcal Q_{\mathsf A\times \mathsf B}=\mathcal Q_{\mathsf A}\times \mathcal Q_{\mathsf B}$
and  $\mathcal Q_{\mathsf A\Rightarrow  \mathsf B}=\mathsf A \Rightarrow (\mathcal Q_{\mathsf A}\rightarrowtriangle\mathcal Q_{\mathsf B})$, where $ \mathcal Q_{\mathsf A}\rightarrowtriangle\mathcal Q_{\mathsf B}$ is the quantale of monotone functions. We then define the lifting of $\rho_o$ by:
\begin{align*}
  ((\mathsf t,\mathsf u),
  (a, b),
  (\mathsf t',\mathsf u'))
  \in \rho_{\mathsf{A}\times \mathsf{B}}
  &\iff
    (\mathsf t, a , \mathsf t')
    \in \rho_{ \mathsf A} \ \text{and} \
    (\mathsf u, b,  \mathsf u')
    \in \rho_{ \mathsf A}, \\
  (\mathsf t, f,\mathsf t')
  \in \rho_{\mathsf{A}\Rightarrow  \mathsf{B}}
  &\iff
    (\forall \mathsf s,
    \mathsf s' \in \mathsf A,
    \forall  a\in \mathcal Q_{\mathsf A})
    \text{ if }
    (\mathsf s, a, \mathsf s')\in \rho_{\mathsf A},
    \text{ then }
  \\
  &\mathrel{\phantom{\iff}}
    (\mathsf{t} \, \mathsf{s},
    f(\mathsf t)( a), \mathsf{t} \, \mathsf{s'})
    \in \rho_{\mathsf B}
    \text{ and }
    (\mathsf{t} \, \mathsf{s},
    f(\mathsf t)(a), \mathsf{t'} \, \mathsf{s'})
    \in \rho_{\mathsf B}. 
\end{align*}
Notice that the definition of $\rho_{\mathsf{A}\Rightarrow  \mathsf{B}}$ closely imitates the clause \eqref{eq:tostar} for quasi-preorders.
Also observe that the quantale $\mathcal Q_{\mathsf A\Rightarrow  \mathsf B}$ for the function type is itself a set of functions relating terms of type $\mathsf A$ and quantities in $\mathcal Q_{\mathsf A}$ with quantities in $\mathcal Q_{\mathsf B}$. As we show in Section 5, this definition gives rise to an interpretation of the simply typed $\lambda$-calculus where a fundamental lemma holds under the following form: for all terms $x:\mathsf A\vdash \mathsf t:\mathsf B$ and choice of a family of differential logical relations $\rho_{\mathsf A}$ as above, there exists a map $ {\mathsf t}^\bullet:\mathsf A\Rightarrow(\mathcal Q_{\mathsf A}\rightarrowtriangle  \mathcal Q_{\mathsf B})$ such that 
\begin{equation}\label{eq:log_rel_dlr}
  (\forall \mathsf s,\mathsf s'
  \in \mathsf A,
  \forall  a \in \mathcal Q_{\mathsf A}) \ 
  (\mathsf s, a, \mathsf s')
  \in \rho_{\mathsf A}
  \ \Rightarrow \ (\mathsf{t} \, \mathsf{s},
  {\mathsf t}^\bullet(\mathsf s)(a),
  \mathsf{t} \, \mathsf{s}')
  \in \rho_{\mathsf B}
  \tag{fundamental lemma}.
\end{equation}
The function ${\mathsf t}^\bullet$ behaves like some sort of \emph{derivative} of $\mathsf t$: it relates errors in input with errors in output. This connection is investigated in more detail in \cite{DLG21,PistoneLICS}.

So far, everything works just as in the standard, qualitative, case. However, the quantitative setting is well visible when we consider the corresponding notions of equivalences and preorders. Recall that an (integral) quantale is, in particular, an ordered monoid $(\mathcal Q,+, 0, \leq)$ of which $0$ is the minimum element.  
For a \DLR{} $\rho\subseteq \mathsf A\times \mathcal Q_{\mathsf A}\times \mathsf A$, reflexivity, symmetry and transitivity translate into the following conditions:
\begin{align}
  (\forall \mathsf t\in \mathsf A) \
  & (\mathsf t,0,\mathsf t)
    \in \rho, \tag{reflexivity} \\
  (\forall \mathsf t,
  \mathsf u \in \mathsf A,
  \forall  a \in \mathcal Q_{\mathsf A}) \
  & (\mathsf t, a,\mathsf u)
    \in \rho \ \Rightarrow \
    (\mathsf u, a,\mathsf t)
    \in \rho, \tag{symmetry} \\
  (\forall \mathsf s, \mathsf t, \mathsf u
  \in \mathsf A,
  \forall  a, b \in \mathcal Q_{\mathsf A}) \
  & (\mathsf s, a,\mathsf t)
    \in \rho \ \land \
    (\mathsf t, b,\mathsf u)
    \in \rho\ \Rightarrow \
    (\mathsf s, a+ b,\mathsf u)
    \in \rho. \tag{transitivity}
\end{align}
It is clear then that equivalence relations translate, in the quantitative setting, into some kind of metric space. Similarly,  
the quantitative counterpart of preorders are the so-called \emph{quasi-metric spaces} \cite{Goubault-Larrecq_2013}, essentially, metrics without a symmetry condition, indeed a very well-studied class of metrics. 
In particular, we will show that, similarly to preorders, any ternary relation $\rho\subseteq \mathsf A\times \mathcal Q_{\mathsf A}\times \mathsf A$ gives rise to left and right \emph{observational quasi-metrics} $q^l_{\rho},q^r_{\rho}: \mathsf A\times \mathsf A\to  \mathcal Q_{\mathsf A}$ satisfying properties analogous to those of Proposition \ref{prop:obs_preorder}.

\begin{remark}\label{rem:symmetry}
While in the original definition \cite{DGY19} differential logical relations were symmetric, symmetry was abandoned in all subsequent works. The first reason is that several interesting notions of program difference, like e.g.~those arising from \emph{incremental computing} \cite{DLG21, Giarrusso2014, AP2019}, are not symmetric. A second reason is that the cartesian closure is problematic in presence of both quasi-reflexivity and symmetry \cite{PistoneLICS}.  

\end{remark}

There is, however, an important point on which \DLR{}s differ from standard logical relations: while the former lift preorders  well to all simple types, their quantitative counterpart, the quasi-metrics, are \emph{not} preserved by the higher-order lifting of \DLR{}s. Indeed, we observed that an essential ingredient in the lifting of the reflexivity property is the fundamental lemma; yet, in the framework of \DLR{}s, the fundamental lemma produces, for any term $\mathsf t:\mathsf A\Rightarrow  \mathsf B$, the ``reflexivity'' condition $(\mathsf t,{\mathsf t}^\bullet,\mathsf t)\in \rho_{\mathsf A\Rightarrow  \mathsf B}$, which differs from standard reflexivity in that the distance is ${\mathsf t}^\bullet$ and \emph{not} the minimum element $0$. 
This means that the metric structure arising from \DLR{} cannot be that of standard (quasi-)metric spaces. Rather, it must be something close to the \emph{partial} metric spaces \cite{matthews, Stubbe2018}, that is, metric spaces in which the condition $d(x,x)=0$ is replaced by the quasi-reflexivity condition $d(x,x)\leq d(x,y)$. We will discuss the connections with partial metric spaces in the next sections.

By replacing reflexivity with quasi-reflexivity, we obtain the quantitative counterpart of quasi-preorders, that we call \emph{quasi$^2$-metrics} (being ``quasi'' both in the sense of quasi-metrics, i.e.~the rejection of symmetry, and of quasi-preorders, i.e.~the weakening of reflexivity).

\begin{definition}
For a set $X$ and a quantale $\mathcal Q$, a relation $\rho\subseteq X\times \mathcal Q\times X$ is called \emph{quasi-quasi-metric} (or more concisely \emph{quasi$^2$-metric}) if it is transitive and satisfies the condition 
\begin{align}
(\forall x,y\in X,\forall  a\in \mathcal Q) \
(x, a, y)\in \rho\ \Rightarrow \ (x, a, x)\in \rho.
\tag{quasi-reflexivity}
\end{align}
\end{definition}
As shown in Section 4,
the quasi$^2$-metrics capture the properties of distances which are preserved by \DLR{}s: indeed, the argument showing that the quasi-preorders lift to all simple types scales well to the quantitative setting, showing that a quasi$^2$-metric on the base types gives rise to quasi$^2$-metrics on all simple types.

The obvious question, however, is: what are these quasi$^{2}$-metrics? How are they related to the more standard quasi-metrics and partial metrics? This is what we are going to do in the following section.

\section{Quasi$^2$-Metric Spaces}
\label{sec:quasi2-metric-spaces}

In this section we use the language of quantale-valued relations to explore the connections between the quasi$^2$-metrics introduced in the previous section and the more well-established notions of quasi-metric and partial quasi-metric spaces.

\subparagraph{Quantale-Valued Relations}

Let us recall that a {quantale} $\mathcal Q$ is a
complete lattice $(\mathcal Q,\sqsubseteq)$
endowed with a continuous monoidal operation
$\otimes$, with unit $1$. A quantale $\mathcal Q$
is \emph{unital} when $1= \top$ and
\emph{commutative} when $\otimes$ is commutative.
Suppose $\mathcal Q$ is commutative. Given
$x,y\in \mathcal Q$, their \emph{residual} is
defined as
$x \multimap y:= \bigvee\{z \in \mathcal{Q} \mid z\otimes x\qleq
y\}$ where $\qleq$ is the partial order of
$\mathcal{Q}$. Notice that $z\qleq x \multimap y$
iff $z\otimes x\qleq y$, and that
$(x\multimap y)\otimes x\qleq y\qleq x\multimap
(y\otimes x)$. A commutative quantale $\mathcal Q$
is \emph{divisible} \cite{Stubbe2018} if for all
$x,y\in \mathcal Q$, $x\qleq y$ holds iff
$y\otimes (y\multimap x)=x$. Equivalently,
$\mathcal Q$ is divisible iff, whenever
$x\qleq y$, there exists $z$ such that
$x=y\otimes z$.
In the following we will use $\mathcal Q$ to refer
to a commutative, unital and divisible quantales.

\begin{example}
  The \emph{Lawvere quantale} is formed by the
  non-negative extended reals $[0,+\infty]$ with
  the \emph{reversed} order $x\qleq y:=x\geq y$,
  and with addition as monoidal operation. Notice
  that the ordering of quantales is
  \emph{reversed} with respect to usual metric
  intuitions: the ``$0$'' element is the $\top$,
  joins correspond to taking $\inf$s, etc.
\end{example}

Given a quantale $\mathcal Q$ and sets $X,Y$, a
\emph{$\mathcal Q$-relation over $X,Y$} is a map
$s\colon X\times Y\to \mathcal Q$, which
can be visualized as a matrix with values in
$\mathcal Q$. For $\mathcal{Q}$-relations
$s,t \colon X \times Y \to \mathcal{Q}$, we write
$s \qleq t$ when $s(x,y) \qleq t(x,y)$ for all
$x \in X$ and $y \in Y$. Given
$\mathcal Q$-relations
$s\colon X\times Y\to \mathcal Q$,
$t\colon Y\times Z\to \mathcal Q$ and
$u\colon X\times Z\to \mathcal Q$,
$w\colon Z\times Y\to \mathcal Q$, we define the
$\mathcal Q$-relations
$ s\otimes t \colon X\times Z \to \mathcal
Q$ and
$u\multimap s\colon Z\times Y \to \mathcal
Q$ and
$s\multimapinv w\colon X\times
Z \to \mathcal Q$ via the two operations:
\begin{align*}
  (s\otimes t)(x,z)
  &=\bigvee_{y\in Y}s(x,y)\otimes t(y,z), \\
  (u\multimap s)(z,y)
  &= \bigwedge_{x\in X}
    u(x,z)\multimap s(x,y),\qquad
    (s\multimapinv w)(x,z)
  = \bigwedge_{y\in Y}  w(z,y)\multimap s(x,y).
\end{align*}
The monoidal product $\otimes$ and the residuals
$\multimap,\multimapinv$ of $\mathcal Q$-relations
satisfy properties analogous to residuals in
$\mathcal Q$,
e.g.~$s\otimes (s\multimap t)\qleq t$,
$(t\multimapinv s)\otimes s\qleq t$. It is
well-known that $\mathcal Q$-relations form a
category $\mathcal Q\textbf{Rel}$ whose objects
are sets and such that
$\mathcal Q\textbf{Rel}(X,Y)$ are the
$\mathcal Q$-relations from $X$ to $Y$. The
operation $s\otimes t$ is the composition operator
of this category, while the identities are the
relations defined as $\mathbf 1_X(x,x)= 1=\top$
and $\mathbf 1_X(x,y\neq x)= \bot$.

Finally, for any relation
$s\in \mathcal Q\mathbf{Rel}(X,X)$, define the
relations
$\Delta_1 s,\Delta_2 s\in \mathcal
Q\mathbf{Rel}(X,X)$ by
$\Delta_1s=s\circ \Delta\circ \pi_1$ and
$\Delta_2s=s\circ \Delta\circ \pi_2$, that is,
$\Delta_1 s(x,y):=s(x,x)$,
$\Delta_2 s(x,y)=s(y,y)$.

\subparagraph{Quasi$^2$- and Quasi-Metric Spaces}
For a relation $s\in \mathcal Q\textbf{Rel}(X,X)$, reflexivity $s(x,x)=1$ and transitivity $s(x,z) \otimes s(z,y) \qleq s(x,y)$ can be written more concisely as 
$s \qgeq \mathbf{1}_{X}$ and $s\otimes s\qleq s$. A relation $s$ satisfying both such properties is called a \emph{quasi-metric over $X$} (with values in $\mathcal Q$). 
The following construction generalizes the observational preorders to $\mathcal Q$-relations:

\begin{proposition}\label{prop:left-right}
For all $s\in \mathcal Q\mathbf{Rel}(X,X)$, the relations $q^l_s:=s\multimapinv s, q^r_s:= s\multimap s \in \mathcal Q\mathbf{Rel}(X,X)$ are quasi-metrics and, for $c\in \{l,r\}$, the following hold:
\begin{description}
\item[(i.)] $q^c_s \qgeq s$ iff $s$ is transitive;
\item[(ii.)] $q^c_s \qleq s$ iff $s$ is reflexive;
\item[(iii.)] $q^c_s=s$ iff $s$ is a quasi-metric;
\item[(iv)] $q_s^l\otimes s\qleq s$ and 
 $s\otimes q^r_s\qleq s$, that is, the following hold:
\begin{align*}
(\forall x,y,z\in X)\ \ q^l_s(x,z)\otimes s(z,y) & \qleq s(x,y), \tag{left transitivity}\\
(\forall x,y,z\in X) \ \  s(x,z) \otimes q^r_s(z,y)& \qleq s(x,y). \tag{right transitivity}
\end{align*}
\end{description}
\end{proposition}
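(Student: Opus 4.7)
The whole proposition reduces to routine use of the two residual adjunctions
$t\otimes w\qleq s \iff t\qleq s\multimapinv w$ and $u\otimes t\qleq s \iff t\qleq u\multimap s$,
together with one elementary fact about $\mathcal Q$ itself, namely that
$a\qgeq 1$ implies $a\multimap b\qleq b$. I would handle the clauses in the order (iv), then (i), then (ii), then (iii), since (iv) is essentially the adjunctions applied to the identity, the quasi-metric properties of $q_s^c$ fall out as a corollary, (i) is one further use of the same adjunctions, (ii) requires the auxiliary fact about $\multimap$, and (iii) is just (i)$+$(ii).

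The first step is (iv). Instantiating $u\otimes t\qleq s \iff t\qleq u\multimap s$ with $u=s$ and $t=q^r_s=s\multimap s$, the tautology $q^r_s\qleq q^r_s$ transports across the adjunction to $s\otimes q^r_s\qleq s$; symmetrically, with $w=s$ and $t=q^l_s=s\multimapinv s$, the other adjunction gives $q^l_s\otimes s\qleq s$. These are exactly the two inequalities in (iv). The same two formulas then establish that $q^l_s,q^r_s$ are quasi-metrics: reflexivity is immediate since $a\multimap a\qgeq 1$, hence $q^c_s(x,x)=\bigwedge_y (\cdot\multimap\cdot)\qgeq 1$; and for transitivity of $q^r_s$ I use $s\otimes(q^r_s\otimes q^r_s)=(s\otimes q^r_s)\otimes q^r_s\qleq s\otimes q^r_s\qleq s$, which by the adjunction gives $q^r_s\otimes q^r_s\qleq s\multimap s=q^r_s$ (and dually for $q^l_s$).

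For (i), the adjunction gives $s\qleq s\multimap s \iff s\otimes s\qleq s$, that is, $q^r_s\qgeq s$ iff $s$ is transitive; the case $c=l$ is identical with $\multimapinv$ in place of $\multimap$. For (ii), the $(\Leftarrow)$ direction uses the auxiliary fact: if $s$ is reflexive then $s(z,z)\qgeq 1$, whence $s(z,z)\multimap s(z,y)\qleq s(z,y)$, and so
$q^r_s(z,y)=\bigwedge_x s(x,z)\multimap s(x,y) \qleq s(z,z)\multimap s(z,y) \qleq s(z,y)$.
The converse uses the reflexivity of $q^r_s$ already obtained: if $q^r_s\qleq s$ then $1\qleq q^r_s(z,z)\qleq s(z,z)$. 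The case $c=l$ is obtained by renaming indices. Clause (iii) is immediate from (i) and (ii).

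The only genuinely non-mechanical ingredient is the auxiliary fact $a\qgeq 1\Rightarrow a\multimap b\qleq b$, which I would verify once by combining $(a\multimap b)\otimes a\qleq b$ with monotonicity of $\otimes$: $(a\multimap b)\otimes a\qgeq (a\multimap b)\otimes 1 = a\multimap b$. The only real bookkeeping obstacle is keeping track of which variable is quantified inside $\bigwedge$ for $\multimap$ versus $\multimapinv$; the symmetry between the two cases means that writing out $c=r$ carefully and transcribing to $c=l$ is enough.
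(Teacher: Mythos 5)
Your argument is correct and complete: every clause is verified, including the one genuinely non-mechanical point (the auxiliary fact $a\qgeq 1\Rightarrow a\multimap b\qleq b$ needed for clause (ii)), and the use of the reflexivity of $q^c_s$ to get the converse of (ii). The paper states this proposition without proof, but the residuation adjunctions you rely on are exactly the properties the paper lists immediately before the statement, so your route is the intended one.
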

We call the quasi-metrics $q^l_s,q^r_s$ the \emph{left and right observational quasi-metric of $s$}.

Quasi$^{2}$-metrics correspond to $\mathcal Q$-relations $s\in \mathcal Q\mathbf{Rel}(X,X)$ satisfying 
transitivity $s\otimes s\qleq s$ and quasi-reflexivity $ s\qleq \Delta_1 s$ (i.e.~$s(x,y)\qleq s(x,x)$).
From transitivity, we deduce that, for a quasi$^2$-metric $s$, both $q_s^l,q_s^r\qgeq s$ hold, that is, the observational quasi-metrics yield \emph{tighter} distances than $s$. This implies that left and right transitivity read as \emph{stronger} forms of the triangular inequality.
 In particular, the following alternative characterization of quasi$^2$-metrics holds:

\begin{proposition}\label{prop:q2m}
For any quasi-reflexive $s\in \mathcal Q\mathbf{Rel}(X,X)$, $s$ is a quasi$^2$-metric iff
there exists a quasi-metric $q\qgeq s$ such that either $s\otimes q\qleq s$ or $q\otimes s\qleq s$ holds. 
\end{proposition}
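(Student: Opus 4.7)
The plan is to handle the two directions separately, with Proposition~\ref{prop:left-right} doing essentially all the work.

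For the forward direction, suppose $s$ is a quasi$^2$-metric, i.e.\ transitive and quasi-reflexive. I would take $q$ to be either of the two observational quasi-metrics $q^l_s$ or $q^r_s$ supplied by Proposition~\ref{prop:left-right}. These are quasi-metrics by part of that proposition; transitivity of $s$ gives $q^c_s \qgeq s$ via clause~(i.); and clause~(iv.) directly yields $q^l_s \otimes s \qleq s$ and $s \otimes q^r_s \qleq s$. Either choice thus witnesses the existential.

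For the backward direction, suppose $s$ is quasi-reflexive and that a quasi-metric $q \qgeq s$ exists with, say, $s \otimes q \qleq s$ (the other case being symmetric). I need only verify transitivity $s \otimes s \qleq s$. By monotonicity of $\otimes$ in $\mathcal{Q}\mathbf{Rel}$ applied to $s \qleq q$, we get $s \otimes s \qleq s \otimes q \qleq s$, and symmetrically $s \otimes s \qleq q \otimes s \qleq s$ in the other case. Together with the assumed quasi-reflexivity, $s$ is a quasi$^2$-metric.

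There is no real obstacle here: the content of the statement is that the comparison with a quasi-metric from above, combined with an absorption law on one side, is already strong enough to enforce transitivity once quasi-reflexivity is assumed, while conversely the observational quasi-metrics of a quasi$^2$-metric always supply such a comparison. The only point worth flagging in the write-up is that both directions are genuinely asymmetric in the choice of side (left absorption gives transitivity equally well as right absorption, mirroring the fact that both $q^l_s$ and $q^r_s$ are available).
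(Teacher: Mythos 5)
Your proof is correct and follows essentially the same route as the paper's: the converse direction uses the same monotonicity argument $s\otimes s\qleq s\otimes q\qleq s$, and the forward direction takes $q$ to be an observational quasi-metric and invokes Proposition~\ref{prop:left-right}~(iv). Your version is slightly more explicit in checking $q^c_s\qgeq s$ via clause~(i.), a step the paper leaves implicit.
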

\longv{
\begin{proof}
Suppose there exists a quasi-metric $q\qgeq s$ such that $s\otimes q\qleq s$ holds. 
Then $s\otimes s\qleq s\otimes q\qleq s$, so $s$ is transitive. A similar argument works if $q$ is 
such that $q\otimes s\qleq s$.
Conversely, if $s$ is a quasi$^2$-metric it is enough to let $q:=q_s^r$ and use Proposition \ref{prop:left-right} (iv). 
\end{proof}
}

%

\subparagraph{Partial Metric Spaces}

Let us now discuss the connection with partial metric spaces. We here consider the non-symmetric variant of the partial metric spaces from \cite{matthews}, called partial \emph{quasi-}metric spaces (PQM) \cite{KUNZI2006}.
As we anticipated, these are metrics $p$ for which the usual reflexivity condition $p(x,x)=1$ is replaced by the weaker quasi-reflexivity condition $p(x,x)\qgeq p(x,y)$. However, unlike the quasi$^2$-metrics just discussed, PQMs satisfy a \emph{stronger} transitivity condition. When $\mathcal Q=[0,+\infty]$ is the Lawvere quantale, this condition reads as 
\begin{equation}\label{eq:pqm}
p(x,z)+p(z,y)-p(z,z)\qgeq p(x,y).
\tag{strong transitivity in $[0,+\infty]$}
\end{equation}
The idea is that the self-distance of the central term $z$ is ``subtracted''. For a general quantale $\mathcal Q$, this becomes:
\begin{equation}\label{eq:pqm4}
p(x,z)\otimes(p(z,z)\multimap p(z,y)) \qleq p(x,y).
\tag{strong transitivity}
\end{equation}

 Define the relations $\Theta^l_s, \Theta^r_s\in \mathcal Q\mathbf{Rel}(X,X)$ by $\Theta^l_{s}(x,y)= s(y,y)\multimap s(x,y)$ and 
 $\Theta^r_s(x,y)=s(x,x)\multimap s(x,y)$. A PQM can be thus more concisely be defined as a relation $s\in \mathcal Q\mathbf{Rel}(X,X)$ satisfying $s\qleq \Delta_1 s$ and $s\otimes \Theta^r_s\qleq s$.
 Notice that strong transitivity $s\otimes \Theta^r_s\qleq s$ looks similar to the right transitivity $s\otimes q^r_s\qleq s$. 
Indeed, the following result relates the relations $\Theta^c_s$ and $q_s^c$:


\begin{proposition}\label{prop:quasipartial}
For all $s\in \mathcal Q\mathbf{Rel}(X,X)$ and $c\in\{l,r\}$, $q^c_s\qleq  \Theta^c_s$. Moreover, if $s$ is quasi-reflexive, 
$\Theta^c_s\qleq q^c_s$ holds iff 
$\Theta^c_s$ is a quasi-metric iff $s$ is a partial quasi-metric.

\end{proposition}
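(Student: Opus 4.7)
The plan is to prove the inequality $q^c_s\qleq\Theta^c_s$ directly, then establish the two equivalences as a cycle $(A)\Rightarrow(B)\Rightarrow(C)\Rightarrow(A)$, where $(A)$ is $\Theta^c_s\qleq q^c_s$, $(B)$ asserts that $\Theta^c_s$ is a quasi-metric, and $(C)$ asserts that $s$ is a partial quasi-metric. I will focus on the case $c=r$, the case $c=l$ being entirely symmetric.

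For the first claim, observe that $q^r_s(x,y)=\bigwedge_{z\in X}(s(z,x)\multimap s(z,y))$ is an infimum, so instantiating the index at $z:=x$ immediately yields $q^r_s(x,y)\qleq s(x,x)\multimap s(x,y)=\Theta^r_s(x,y)$; the analogous choice $z:=y$ handles $c=l$.

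The implication $(A)\Rightarrow(B)$ follows by combining $(A)$ with the first claim to deduce $\Theta^r_s = q^r_s$, so that Proposition \ref{prop:left-right} yields that $\Theta^r_s$ is a quasi-metric. For $(C)\Rightarrow(A)$, I use the adjunction defining $\multimap$: strong transitivity $s(x,z)\otimes \Theta^r_s(z,y)\qleq s(x,y)$ transposes into $\Theta^r_s(z,y)\qleq s(x,z)\multimap s(x,y)$ for every $x\in X$, and taking the infimum over $x$ gives $\Theta^r_s(z,y)\qleq q^r_s(z,y)$.

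The main obstacle is $(B)\Rightarrow(C)$, which is the unique place where divisibility of $\mathcal Q$ is essential. Quasi-reflexivity guarantees $s(x,y)\qleq s(x,x)$, so divisibility supplies the cancellation identity $s(x,x)\otimes(s(x,x)\multimap s(x,y))=s(x,y)$. Assuming transitivity of $\Theta^r_s$, namely $(s(x,x)\multimap s(x,z))\otimes(s(z,z)\multimap s(z,y))\qleq s(x,x)\multimap s(x,y)$, I multiply both sides on the left by $s(x,x)$ and use monotonicity of $\otimes$; after reassociating, cancellation collapses $s(x,x)\otimes(s(x,x)\multimap s(x,z))$ to $s(x,z)$ on the left-hand side and $s(x,x)\otimes(s(x,x)\multimap s(x,y))$ to $s(x,y)$ on the right-hand side, yielding the strong transitivity $s(x,z)\otimes(s(z,z)\multimap s(z,y))\qleq s(x,y)$. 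Together with quasi-reflexivity this makes $s$ a partial quasi-metric.
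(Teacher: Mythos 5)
Your proof is correct and follows essentially the same route as the paper's: the first inequality by instantiating the infimum at $z:=x$, the adjunction between $\otimes$ and $\multimap$ to relate strong transitivity to $\Theta^r_s\qleq q^r_s$, and divisibility plus quasi-reflexivity to cancel $s(x,x)$ in the step from transitivity of $\Theta^r_s$ to strong transitivity of $s$. The only difference is organizational — you chain the equivalences as a cycle $(A)\Rightarrow(B)\Rightarrow(C)\Rightarrow(A)$, whereas the paper proves $(A)\iff(C)$ directly and then derives $(C)\Rightarrow(B)$ from the resulting equality $\Theta^r_s=q^r_s$ — but the ingredients are identical.
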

\longv{\begin{proof}
We only argue for $c=r$, the other case being similar.
From 
$q^r_s(x,y)= \bigwedge_z s(z,x)\multimap s(z,y)\qleq s(x,x)\multimap s(x,y)=(\Theta^r_s)(x,y)$ we deduce that 
$q^r_s\qleq \Theta^r_s$.
The converse direction $q^r_s\qgeq \Theta^r_s$ corresponds to showing that $s(z,x)\otimes(s(x,x)\multimap s(x,y))\qleq s(z,y)$, which holds iff $s$ is a partial quasi-metric. We have thus shown that $s$ is a PQM iff $\Theta^r_s=q^r_s$.
This also implies that, if $s$ is a PWM, $\Theta^r_s$ is a quasi-metric. Finally, suppose $\Theta^r_s$ is a quasi-metric.
By quasi-reflexivity, and the divisibility of $\mathcal Q$, we have that $s(x,z)=s(x,x)\otimes(s(x,x)\multimap s(x,z))$.
We then have $s(x,z)\otimes(s(z,z)\multimap s(z,y))= s(x,x)\otimes (s(x,x)\multimap s(x,z))\otimes (s(z,z)\multimap s(z,y)) \qleq  s(x,y)$, so $s$ is a partial quasi-metric. 
\end{proof}
}

The result above suggests that the partial quasi-metrics can be seen as limit cases of the quasi$^2$-metrics, namely those for which the quasi-metric $q^r_s(x,y)$ can be written under the simpler form $\Theta^r_s(x,y)=s(x,x)\multimap s(x,y)$. 

Unfortunately, while the standard definition of \DLR{}s preserves quasi$^2$-metrics, it does \emph{not} preserve partial quasi-metrics: \cite{Geoffroy2020, PistoneLICS} show that the function space constructions lifts PQMs into PQMs only when the monoidal product of the underlying quantales is \emph{idempotent} (one talks in this case of a partial \emph{ultra}-metric, since strong transitivity becomes $p(x,z)\land p(z,y)\qleq p(x,y)$).
Nevertheless, we will show in Section 6 how one can capture PQMs via a suitable family of logical relations.


\section{Differential Logical Relations as Quasi$^2$-Metrics}
\label{sec:category-weak-quasi}

In this section we provide a semantic presentation of differential logical relations by defining
a cartesian closed category of quasi$^2$-metrics, this way highlighting the close correspondence between these two notions.

\subparagraph{From $\mathcal Q$-Relations to Ternary Relations}
While in the previous section we discussed $\mathcal Q$-relations, that is, \emph{binary} relations valued in a quantale $\mathcal Q$, the theory of differential logical relations is expressed in terms of \emph{ternary} relations $\rho\subseteq X\times\mathcal Q\times X$.
In fact, any such relation $\rho\subseteq X\times \mathcal Q\times X$ induces a $\mathcal Q$-relation 
$\widehat \rho\in \mathcal Q\mathbf{Rel}(X,X)$ defined by 
\[
\widehat{\rho}(x,y)=\bigvee\{ a\in \mathcal{Q}\mid (x, a,y)\in \rho\}.
\]
Intuitively, $\widehat{\rho}(x,y)$ is the \emph{smallest} (recall the inversion of the order) distance between $x$ and $y$. 
This correspondence can be made more precise as follows: let a ternary relation $\rho\subseteq X\times \mathcal Q\times X$ be said \emph{$\mathcal Q$-closed} when the following hold:
\begin{itemize}
\item $(x,a,y)\in\rho$ and $a'\sqsubseteq a$ implies $(x,a',y)\in \rho$;
\item if $(x,a_i,y)\in \rho$, for all $i\in I$, then $(x,\bigvee_{i\in I}a_i,y)\in \rho$.
\end{itemize}

\longv{
We have the following correspondence:
\begin{lemma}\label{lemma:qclosure}
The map $\rho\mapsto \widehat\rho$ defines a bijection between the $\mathcal Q$-closed relations $\rho\subseteq X\times \mathcal Q\times X$ and $\mathcal Q\mathbf{Rel}(X,X)$.
\end{lemma}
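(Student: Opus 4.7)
The plan is to exhibit an explicit inverse to the assignment $\rho \mapsto \widehat\rho$ and to verify that the two round-trips are the identity, with both directions hinging on the two $\mathcal Q$-closure conditions.

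First I would define the candidate inverse: for each $s\in \mathcal Q\mathbf{Rel}(X,X)$, set $\check s := \{(x,a,y)\in X\times\mathcal Q\times X \mid a\qleq s(x,y)\}$. A quick check shows that $\check s$ is $\mathcal Q$-closed: downward closure is immediate from transitivity of $\qleq$, and closure under joins follows because $a_i\qleq s(x,y)$ for all $i\in I$ entails $\bigvee_{i\in I} a_i\qleq s(x,y)$ by the universal property of joins.

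Next I would verify the identity $\widehat{\check s} = s$ for every $s\in \mathcal Q\mathbf{Rel}(X,X)$. Unfolding, $\widehat{\check s}(x,y) = \bigvee\{a\in\mathcal Q \mid a\qleq s(x,y)\}$, which equals $s(x,y)$ since the set of $a$'s has $s(x,y)$ as its maximum.

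For the other round-trip, given a $\mathcal Q$-closed $\rho$, I would show $\check{\widehat\rho} = \rho$. The inclusion $\rho \subseteq \check{\widehat\rho}$ is by definition of $\widehat\rho$: any $(x,a,y)\in\rho$ satisfies $a\qleq \widehat\rho(x,y)$. The reverse inclusion is where both closure conditions come in: closure under joins applied to the family $\{a' \mid (x,a',y)\in\rho\}$ yields $(x,\widehat\rho(x,y),y)\in\rho$; then given $(x,a,y)\in \check{\widehat\rho}$, i.e.~$a\qleq \widehat\rho(x,y)$, downward closure of $\rho$ gives $(x,a,y)\in\rho$.

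The only potential pitfall is the ``closure under joins'' step, since it is applied to the possibly empty family when no pair $(x,a',y)$ lies in $\rho$; but in that case the join is $\bot$ and we need $(x,\bot,y)\in\rho$. This is guaranteed by closure under arbitrary (including empty) joins, which is the standing convention in the definition of $\mathcal Q$-closed relations. Modulo this bookkeeping, the argument is routine and no further structural properties of $\mathcal Q$ beyond the existence of arbitrary joins are needed.
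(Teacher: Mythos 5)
Your proposal is correct and takes essentially the same route as the paper: your explicit inverse $\check s$ is precisely the paper's $\rho^s$, and your round-trip verification $\check{\widehat\rho}=\rho$ reproduces the paper's injectivity step (join-closure yields $(x,\widehat\rho(x,y),y)\in\rho$, then downward closure concludes). Your observation that closure under the \emph{empty} join is needed---so that every $\mathcal Q$-closed relation contains $(x,\bot,y)$---is a point the paper leaves implicit but which is genuinely required for injectivity.
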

\begin{proof}
Let $\rho,\tau$ be closed and let $\widehat\rho(x,y)=\widehat\tau(x,y)$. Observe that, for all $x,y\in X$, by $\mathcal Q$-closure we have $(x,\widehat\rho(x,y),y)\in \rho$. 
Suppose now that $(x,a,y)\in \tau$, then $a\sqsubseteq \widehat\tau(x,y)=\widehat\rho(x,y)$, and from $(x,\widehat\rho(x,y),y)\in \rho$ and $a\sqsubseteq \widehat\rho(x,y)$ we deduce $(x,a,y)\in \rho$. By a similar argument we can also prove that $(x,a,y)\in\rho$ implies $(x,a,y)\in \tau$, so in the end $\rho=\tau$. We conclude then that the map $\rho\mapsto \widehat\rho$ is injective.
For surjectivity, observe that any $s\in\mathcal Q\mathbf{Rel}(X,X)$ induces a relation $(x,a,y)\in\rho^s$ iff $a\sqsubseteq s(x,y)$, so that $s=\widehat{\rho^s}$.
\end{proof}
}
%
%

\shortv{
It is easily checked that the map $\rho\mapsto \widehat\rho$ defines a bijection between the $\mathcal Q$-closed relations $\rho\subseteq X\times \mathcal Q\times X$ and $\mathcal Q\mathbf{Rel}(X,X)$, and that the \DLR{} lifting preserves $\mathcal Q$-closure. In the sequel, we will identify metrics with their corresponding $\mathcal Q$-closed relations.}
\longv{
In the sequel, we will identify metrics with their corresponding $\mathcal Q$-closed relations. 
}

\subparagraph{A Cartesian Closed Category of Quasi$^2$-Metrics}
\label{sec:cart-clos-struct}

We 
now 
define a category of \mmm{}s. Let
us recall notations from
Section~\ref{sec:from-logic-relat}. For sets $A$
and $B$, we denote the set of functions from $A$
to $B$ by $A \Rightarrow B$; for quantales
$\mathcal{Q}$ and $\mathcal{R}$, we denote the set
of monotone functions from $\mathcal{Q}$ to
$\mathcal{R}$ by
$\mathcal{Q} \rightarrowtriangle \mathcal{R}$.
Below, 
we write $f \cdot x$ for the
application of 
$f \colon A \to B$ to
$x \in A$, and we suppose that $(-) \cdot (-)$ is
left-associative, i.e., $f \cdot x \cdot y$ is an
abbreviation of $(f \cdot x) \cdot y$.

The category $\qqm$ of quasi$^2$-metrics is defined as follows:
\begin{itemize}
\item objects are triples
  $X = (\mathcal{Q}_{X}, |X|,\rho_{X})$ consisting
  of a quantale $\mathcal{Q}_{X}$, a set $|X|$ and
  a \mmm{}
  $\rho_{X} \subseteq |X| \times \mathcal{Q}_{X}
  \times |X|$;
\item morphisms from $X$ to $Y$ are triples
  $(f,a,f')$ consisting of functions
  $f,f' \colon |X| \to |Y|$ and
  $a \colon |X| \to (\mathcal{Q}_{X}
  \rightarrowtriangle \mathcal{Q}_{Y})$ such that
  for all $(x,b,x') \in \rho_{X}$, we have
  $(f \cdot x,a \cdot x \cdot b, f' \cdot x') \in
  \rho_{Y}$ and
  $(f \cdot x,a \cdot x \cdot b, f \cdot x') \in
  \rho_{Y}$.
\end{itemize}
The identity morphism on an object $X$ is
$(\mathrm{id}_{X},i_{X},\mathrm{id}_{X})$
consisting of the identity function
$\mathrm{id}_{X}$ on $|X|$ and a function
$i_{X} \colon |X| \to (\mathcal{Q}_{X}
\rightarrowtriangle \mathcal{Q}_{X})$ given by
$i_{X} \cdot x \cdot a = a$. The composition of
$(f,a,f') \colon X \to Y$ and
$(g,b,g') \colon Y \to Z$ is
$(g \circ f,c, g' \circ f')$ where
$c \colon |X| \to (\mathcal{Q}_{X}
\rightarrowtriangle \mathcal{Q}_{Z})$ is given by
$c \cdot x = (b \cdot (f \cdot x)) \circ (a \cdot
x)$.

\begin{proposition}
  The category $\qqm$ is cartesian closed.
\end{proposition}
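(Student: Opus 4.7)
The plan is to construct explicitly the terminal object, binary products, and exponentials, and then check the universal properties. The terminal object is $(\{0\}, \{\ast\}, \{(\ast, 0, \ast)\})$ built on the one-element quantale, which is trivially a quasi$^{2}$-metric and admits a unique morphism from any object. The binary product $X \times Y$ is the componentwise construction: underlying set $|X| \times |Y|$, quantale $\mathcal{Q}_X \times \mathcal{Q}_Y$ with componentwise operations, and $((x,y),(a,b),(x',y')) \in \rho_{X \times Y}$ iff $(x,a,x') \in \rho_X$ and $(y,b,y') \in \rho_Y$. Transitivity and quasi-reflexivity transfer componentwise, and the projections together with their obvious derivative components satisfy the universal property routinely.

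For exponentials I would take $|Y^X| = |X| \Rightarrow |Y|$, $\mathcal{Q}_{Y^X} = |X| \Rightarrow (\mathcal{Q}_X \rightarrowtriangle \mathcal{Q}_Y)$ with pointwise quantale structure, and define $(f, \phi, f') \in \rho_{Y^X}$ iff for all $(x, a, x') \in \rho_X$, both $(f \cdot x, \phi \cdot x \cdot a, f' \cdot x') \in \rho_Y$ and $(f \cdot x, \phi \cdot x \cdot a, f \cdot x') \in \rho_Y$. This is the \DLR{} lifting of Section~\ref{sec:from-logic-relat} in categorical dress. The main technical step is verifying that $\rho_{Y^X}$ is itself a quasi$^{2}$-metric. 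Quasi-reflexivity is immediate: if $(f, \phi, f') \in \rho_{Y^X}$, then the second clause applied to every $(x,a,x') \in \rho_X$ already witnesses $(f, \phi, f) \in \rho_{Y^X}$. For transitivity, given $(f,\phi,f'), (f',\psi,f'') \in \rho_{Y^X}$ and $(x,a,x') \in \rho_X$, one invokes quasi-reflexivity of $\rho_X$ to get $(x,a,x) \in \rho_X$; applying the first hypothesis to $(x,a,x)$ yields $(f \cdot x, \phi \cdot x \cdot a, f' \cdot x) \in \rho_Y$, applying the second to $(x,a,x')$ yields $(f' \cdot x, \psi \cdot x \cdot a, f'' \cdot x') \in \rho_Y$, and transitivity of $\rho_Y$ composes these to $(f \cdot x, (\phi \otimes \psi) \cdot x \cdot a, f'' \cdot x') \in \rho_Y$. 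The companion second clause $(f \cdot x, (\phi \otimes \psi) \cdot x \cdot a, f \cdot x') \in \rho_Y$ follows from $\mathcal{Q}$-closure together with integrality, since $(\phi \otimes \psi) \cdot x \cdot a \qleq \phi \cdot x \cdot a$.

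Finally, the exponential adjunction is established through the usual currying bijection. Evaluation $\mathrm{ev} : Y^X \times X \to Y$ is $(f,x) \mapsto f \cdot x$, with derivative component sending $(f,x)$ to the monotone map $(\phi, a) \mapsto \phi \cdot x \cdot a$. The currying of a morphism $(g,b,g') : Z \times X \to Y$ is the triple $(h,c,h') : Z \to Y^X$ with $h \cdot z \cdot x = g \cdot (z,x)$, $h' \cdot z \cdot x = g' \cdot (z,x)$, and $c \cdot z \cdot a_Z \cdot x \cdot a_X = b \cdot (z,x) \cdot (a_Z, a_X)$. Checking that currying yields a morphism in $\qqm$ and that the resulting bijection $\mathrm{Hom}(Z \times X, Y) \cong \mathrm{Hom}(Z, Y^X)$ is natural in $Z$ reduces to unpacking the nested \DLR{} clauses and is essentially symbol-pushing. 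The main conceptual obstacle is thus the quasi$^{2}$-metric structure on $\rho_{Y^X}$, but this is encoded directly in the second clause of the lifting---the very feature that, in the qualitative case, makes quasi-preorders close under the arrow type.
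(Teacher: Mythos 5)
Your construction coincides with the paper's: the same one-point terminal object, the componentwise product, and the exponential given by the asymmetric two-clause \DLR{} lifting, with the same currying and evaluation data; in fact you supply more verification than the paper, which only exhibits the structure. The one step worth flagging---your appeal to downward ($\mathcal{Q}$-)closure plus integrality to obtain the second clause $(f \cdot x, (\phi \otimes \psi)\cdot x \cdot a, f \cdot x') \in \rho_Y$ in the transitivity check for the exponential---is legitimate under the paper's standing convention of identifying quasi$^2$-metrics with their $\mathcal{Q}$-closed ternary presentations, and does appear to be genuinely needed there.
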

The cartesian closed structure corresponds to the
construction of differential logical relations in
Section~\ref{sec:from-logic-relat}. The terminal
object $\top$ is $(\{\ast\},\{\ast\},\rho_{\top})$
where $\rho_{\top} = \{(\ast,\ast,\ast)\}$, and
the product of $X$ and $Y$ is
$X \times Y = (\mathcal{Q}_{X} \times
\mathcal{Q}_{Y}, |X| \times |Y|, \rho_{X \times
  Y}) $, where $\rho_{X \times Y}$ is given by
\begin{equation*}
  ((x,y),(a,b),(x',y'))
  \in
  \rho_{X \times Y}
  \iff
  (x,a,x') \in \rho_{X}
  \text{ and }
  (y,b,y') \in \rho_{Y}.
\end{equation*}
The exponential $X \Rightarrow Y$ is
given by
$(|X| \Rightarrow (\mathcal{Q}_{X}
\rightarrowtriangle \mathcal{Q}_{Y}), |X|
\Rightarrow |Y|, \rho_{X \Rightarrow Y})$
where
\begin{equation*}
  (f,a,f')
  \in
  \rho_{X \Rightarrow Y}
  \iff
  \text{for all }
  (x,b,x') \in \rho_{X}
  \text{ and }
  g \in \{f,f'\}, \,
  (f \cdot x, a \cdot x \cdot b,
  g \cdot x') \in \rho_{Y}.
\end{equation*}
Here, the quantale structure of
$\mathcal{Q}_{X \times Y}$ and
$\mathcal{Q}_{X \Rightarrow Y}$ are given by the
pointwise manner. \longv{The first projection
  from $X \times Y$ to $Y$ is given by
  $(\mathrm{proj}_{X,Y},\varpi_{X,Y},\mathrm{proj}_{X,Y})$
  where
  $\mathrm{proj}_{X,Y} \colon |X| \times |Y| \to
  |X|$ is the first projection, and
  \begin{equation*}
    \varpi_{X,Y} \colon |X| \times |Y| \to
    (\mathcal{Q}_{X} \times \mathcal{Q}_{Y} \rightarrowtriangle
    \mathcal{Q}_{X})
  \end{equation*}
  is given by
  $\varpi_{X,Y} \cdot (x,y) \cdot (a,b) = a$.
  The second projection is given in the same
  manner. The tupling of
  $(f,a,f') \colon Z \to X$ and
  $(g,b,g') \colon Z \to Y$ is
  $(\langle f,g \rangle, \langle a,b \rangle,
  \langle f',g' \rangle)$ where
  $\langle f,g \rangle \colon |Z| \to |X| \times
  |Y|$ and
  $\langle a,b \rangle \colon |Z| \to
  (\mathcal{Q}_{Z} \rightarrowtriangle
  \mathcal{Q}_{X} \times \mathcal{Q}_{Y})$ are
  the tupling of $f,g$ and
  $a,b$ respectively:
  \begin{equation*}
    \langle f,g \rangle \cdot z
    = (f \cdot z, g \cdot z),
    \qquad
    \langle a,b \rangle \cdot z \cdot c
    =
    (a \cdot z \cdot c,
    a \cdot z \cdot c).
  \end{equation*}
  The currying of
  $(f,a,f') \colon Z \times X \to Y$ is
  $(f^{\wedge},a^{\wedge},f'^{\wedge})$ where
  $f^{\wedge} \colon |Z| \to (|X| \Rightarrow
  |Y|)$ and
  $f'^{\wedge} \colon |Z| \to (|X| \Rightarrow
  |Y|)$ are the currying of the following
  functions
  \begin{equation*}
    f \colon |Z| \times |X| \to |Y|,
    \qquad
    f' \colon |Z| \times |X| \to |Y|,
  \end{equation*}
  and
  $a^{\wedge} \colon |Z| \to (\mathcal{Q}_{Z}
  \rightarrowtriangle \mathcal{Q}_{X \Rightarrow
    Y})$ is the currying of
  \begin{equation*}
    a \colon |Z| \times |X| \to (\mathcal{Q}_{Z \times
      X} \rightarrowtriangle \mathcal{Q}_{Y}),
  \end{equation*}
  namely,
  $a^{\wedge} \cdot z \cdot a \cdot x \cdot b$
  is defined to be $a \cdot (z,x) \cdot (a,b)$.
  The evaluation morphism
  \begin{equation*}
    (\mathrm{eval}_{X,Y},
    \varepsilon_{X,Y},
    \mathrm{eval}_{X,Y})
    \colon
    (X \Rightarrow Y)
    \times X
    \to Y 
  \end{equation*}
  consists of the evaluation function
  $\mathrm{eval}_{X,Y} \colon (|X| \Rightarrow
  |Y|) \times |X| \to |Y|$ and
  $\varepsilon_{X,Y}$ is given by
  \begin{equation*}
    \varepsilon_{X,Y}
    \cdot (a,b) \cdot (f,x)
    =
    a \cdot x \cdot b.
  \end{equation*}
}

\begin{example}\label{eg:cube}
  We define an object $R \in \qqm$ to be
  $(\mathbb{R},[0,+\infty],\rho_{R})$, where
  \begin{equation*}
    (x,a,x') \in \rho_{R}
    \iff
    |x - x'| \sqsupseteq a.
  \end{equation*}
  Observe that the distance
  $\widehat{\rho}_R:\mathbb R\times \mathbb
  R\to[0,+\infty]$ is just the Euclidean distance
  $\widehat{\rho}_R(x,y)=|y-x|$. For functions
  $f,g \colon \mathbb{R} \to \mathbb{R}$, an
  element $a \in \mathcal{Q}_{R \Rightarrow R}$
  satisfies $(f,a,g) \in \rho_{R \Rightarrow R}$
  if and only if
  we have
  \begin{math}
    a \cdot x \cdot b \qleq
    \bigwedge_{|x - y| \sqsupseteq b}
    |f \cdot x - g \cdot y|, 
  \end{math}
  i.e., $a$ bounds gaps between outputs of $f$ and
  $g$. In particular, we have
  $(f,\top,f) \in \rho_{R \Rightarrow R}$ if and
  only if $f$ is a constant function. We note that
  the largest element
  $\top \in \mathcal{Q}_{R \Rightarrow R}$ is given by
  $\top \cdot x \cdot b = 0$.
\end{example}


\section{The Fundamental Lemma}
\label{sec:fundamental-lemma}

In this section we establish the fundamental lemma
of differential logical relations for a simply
typed lambda calculus $\LL$, by relying on the
cartesian closed category $\qqm$ of
quasi$^2$-metrics. We then apply this result to
measure differences between functions.

\subparagraph{Syntax and Set-theoretic Semantics}
\label{sec:synt-equat-theory}

Our language $\LL$ comprises a type of real
numbers and first order functions on $\mathbb{R}$.
Let $\mathsf{Var}$ be a countably infinite set of
variables. We define \emph{type}s and \emph{term}s
as follows:
\begin{align*}
  \text{(type)}
  &&
     \mathsf{A},\mathsf{B}
  &\coloneq
    \real
    \mid \mathsf{A} \times \mathsf{B}
    \mid \mathsf{A} \Rightarrow \mathsf{B} , \\
  (\text{term})
  &&
     \mathsf{t},\mathsf{s}
  &\coloneq
    \mathsf{x} \in \mathsf{Var}
    \mid \underline{r}
    \mid \phi
    (\mathsf{t}_{1},\ldots,\mathsf{t}_{n})
    \mid \mathsf{t} \, \mathsf{s}
    \mid \lam{\mathsf{x}}{\mathsf{A}}{\mathsf{t}}
    \mid \pair{\mathsf{t}}{\mathsf{s}}
    \mid \fst(\mathsf{t})
    \mid \snd(\mathsf{s})
    .
\end{align*}
Here, $r$ varies over $\mathbb{R}$, and $\phi$
varies over the set of multi-arity functions on
$\mathbb{R}$, namely, $\phi$ is a function from
$\mathbb{R}^{n}$ to $\mathbb{R}$ for some
$n \in \mathbb{N}$. We call $n$ the \emph{arity}
of $\phi$, and we denote the arity of $\phi$ by
$\mathrm{ar}(\phi)$. We adopt the standard typing
rules given in Figure~\ref{fig:context}. Below, we
denote the set of types by $\type$ and the set of
closed terms of type $\mathsf{A}$ by
$\cterm{\mathsf{A}}$.

\begin{figure}
  \fbox{
    \begin{minipage}{.96\textwidth}
      \footnotesize \centering
      \begin{equation*}
        \begin{array}{c}
          \prftree{
          \mathsf{x}:\mathsf{A} \in \mathsf{\Gamma}
          }{
          \mathsf{\Gamma} \vdash
          \mathsf{x}:\mathsf{A}
          }
          \hspace{15pt}
          \prftree{
          r \in \mathbb{R}
          }{
          \mathsf{\Gamma}
          \vdash
          \underline{r}:\real
          }
          \hspace{15pt}
          \prftree{
          \mathsf{\Gamma} \vdash 
          \mathsf{t}_{1}
          :\real
          }{\ldots}{
          \mathsf{\Gamma} \vdash 
          \mathsf{t}_{\mathrm{ar}(\phi)}
          :\real
          }{
          \mathsf{\Gamma} \vdash 
          \phi
          (\mathsf{t}_{1},
          \ldots,
          \mathsf{t}_{\mathrm{ar}(\phi)})
          : \real
          }
          \hspace{15pt}
          \prftree{
          \mathsf{\Gamma} \vdash 
          \mathsf{t}
          : \mathsf{A} \Rightarrow \mathsf{B}
          }{
          \mathsf{\Gamma} \vdash 
          \mathsf{s}
          : \mathsf{A}
          }{
          \mathsf{\Gamma} \vdash 
          \mathsf{t} \,
          \mathsf{s}
          : \mathsf{B}
          }
          \\[7pt]
          \prftree{
          \mathsf{\Gamma},\mathsf{x}:\mathsf{A} \vdash 
          \mathsf{t} : \mathsf{B}
          }{
          \mathsf{\Gamma} \vdash 
          \lam{\mathsf{x}}{\mathsf{A}}{\mathsf{t}}
          : \mathsf{A} \Rightarrow \mathsf{B}
          }
          \qquad
          \prftree{
          \mathsf{\Gamma} \vdash 
          \mathsf{t} : \mathsf{A}
          }{
          \mathsf{\Gamma} \vdash 
          \mathsf{s} : \mathsf{B}
          }{
          \mathsf{\Gamma} \vdash 
          \pair{
          \mathsf{t}
          }{
          \mathsf{s}
          }
          : \mathsf{A} \times \mathsf{B}
          }
          \qquad
          \prftree{
          \mathsf{\Gamma} \vdash 
          \mathsf{t}
          :\mathsf{A} \times \mathsf{B}
          }{
          \mathsf{\Gamma} \vdash 
          \fst(\mathsf{t})
          :\mathsf{A}
          }
          \qquad
          \prftree{
          \mathsf{\Gamma} \vdash 
          \mathsf{t}
          :\mathsf{A} \times \mathsf{B}
          }{
          \mathsf{\Gamma} \vdash 
          \snd(\mathsf{t})
          :\mathsf{B}
          }
        \end{array}
      \end{equation*}
    \end{minipage}
  }
  \caption{Typing Rules}
  \label{fig:context}
\end{figure}

We denote the standard set theoretic interpretation
of $\LL$ by $\psem{-}$. (See \cite{cft} for
example.) To be concrete, the interpretation
$\psem{\mathsf{A}}$ of a type $\mathsf{A}$ is a
set inductively defined by
\begin{equation*}
  \psem{\real} = \mathbb{R},
  \qquad
  \psem{\mathsf{A} \times \mathsf{B}} =
  \psem{\mathsf{A}} \times \psem{\mathsf{B}},
  \qquad
  \psem{\mathsf{A} \Rightarrow \mathsf{B}} =
  \psem{\mathsf{A}} \Rightarrow \psem{\mathsf{B}};
\end{equation*}
and we interpret a term
$\mathsf{x}_{1}:\mathsf{A}_{1},\ldots,
\mathsf{x}_{n}:\mathsf{A}_{n} \vdash \mathsf{t} :
\mathsf{B}$ as a function $\psem{\mathsf{t}}$ from
$\psem{\mathsf{A}_{1}} \times \cdots \times
\psem{\mathsf{A}_{n}}$ to $\psem{\mathsf{B}}$.
\shortv{For the definition of $\psem{\mathsf{t}}$,
  see \cite{longversion}.} \longv{The function
  $\psem{\mathsf{t}}$ is given by induction on the
  derivation of
  $\mathsf{x}_{1}:\mathsf{A}_{1},\ldots,
  \mathsf{x}_{n}:\mathsf{A}_{n} \vdash \mathsf{t}
  : \mathsf{B}$ as follows:
  \begin{align*}
    \psem{\mathsf{x}_{i}}
    \cdot (x_{1},\ldots,x_{n})
    &= x_{i}, \\
    \psem{\underline{r}}
    \cdot (x_{1},\ldots,x_{n})
    &= r, \\
    \psem{\phi
    (\mathsf{t}_{1},\ldots,
    \mathsf{t}_{\mathrm{ar}(\phi)})}
    \cdot (x_{1},\ldots,x_{n})
    &= \phi(
      \psem{\mathsf{t}_{1}}
      \cdot (x_{1},\ldots,x_{n}),
      \ldots,
      \psem{\mathsf{t}_{\mathrm{ar}(\phi)}}
      \cdot (x_{1},\ldots,x_{n})
      ), \\
    \psem{\mathsf{t} \, \mathsf{s}}
    \cdot (x_{1},\ldots,x_{n})
    &=
      \psem{\mathsf{t}}
      \cdot
      (x_{1},\ldots,x_{n}) \cdot
      (\psem{\mathsf{s}}
      \cdot
      (x_{1},\ldots,x_{n})),
    \\
    \psem{\lam{\mathsf{x}}
    {\mathsf{A}}{\mathsf{t}}}
    \cdot (x_{1},\ldots,x_{n})
    \cdot y
    &= \psem{\mathsf{t}}
      \cdot (x_{1},\ldots,x_{n},y),
    \\
    \psem{\fst(\mathsf{t})}
    \cdot (x_{1},\ldots,x_{n})
    &=
      \text{the first component of }
      \psem{\mathsf{t}}
      \cdot (x_{1},\ldots,x_{n}), \\
    \psem{\snd(\mathsf{t})}
    \cdot (x_{1},\ldots,x_{n})
    &=
      \text{the second component of }
      \psem{\mathsf{t}}
      \cdot (x_{1},\ldots,x_{n}), \\
    \psem{\pair{\mathsf{t}}{\mathsf{s}}}
    \cdot (x_{1},\ldots,x_{n})
    &= (
      \psem{\mathsf{t}}
      \cdot (x_{1},\ldots,x_{n}),
      \psem{\mathsf{s}}
      \cdot (x_{1},\ldots,x_{n})
      ).
  \end{align*}}

\subparagraph{The Fundamental Lemma}

We inductively define a \mmm{} space
$\sem{\mathsf{A}}$ by
\begin{equation*}
  \sem{\real} = R,
  \qquad
  \sem{\mathsf{A} \times \mathsf{B}} =
  \sem{\mathsf{A}} \times \sem{\mathsf{B}},
  \qquad
  \sem{\mathsf{A} \Rightarrow \mathsf{B}} =
  \sem{\mathsf{A}} \Rightarrow \sem{\mathsf{B}},
\end{equation*}
and we simply denote the structure of an object
$\sem{\mathsf{A}}$ by
$(|\mathsf{A}|,\mathcal{Q}_{\mathsf{A}},\rho_{\mathsf{A}})$.
It is straightforward to check that for every type
$\mathsf{A}$, we have
$|\mathsf{A}| = \psem{\mathsf{A}}$.
The \mmm{}s $\sem{\mathsf{A}}$ are the categorical
interpretation of types $\mathsf{A}$, and the
following fundamental lemma is derived from the
categorical interpretation of $\LL$-terms in
$\qqm$.

\begin{theorem}[Fundamental Lemma]\label{thm:fl}
  Let
  $\mathsf{\Gamma} =
  (\mathsf{x}_{1}:\mathsf{A}_{1},\ldots,
  \mathsf{x}_{n}:\mathsf{A}_{n})$ be a typing
  context. For every term
  $\mathsf{\Gamma} \vdash \mathsf{t} :
  \mathsf{A}$, and for every
  $(x,a,x') \in \rho_{\mathsf{A}_{1} \times \cdots
    \times \mathsf{A}_{n}}$, we have
  \begin{equation*}
    (\psem{\mathsf{t}} \cdot x,
    \fsem{\mathsf{t}} \cdot x \cdot a,
    \psem{\mathsf{t}} \cdot x')
    \in \rho_{{\mathsf{A}}} 
  \end{equation*}
  where we inductively define
  $\fsem{\mathsf{t}} \in \mathcal
  Q_{\mathsf{A}_{1} \times \cdots \times
    \mathsf{A}_{n} \Rightarrow \mathsf{B}}$ as
  follows:
  \begin{itemize}
  \item We define
    $\fsem{\mathsf{x}_{i}} \cdot
    (x_{1},\ldots,x_{n}) \cdot
    (a_{1},\ldots,a_{n})$ to be
    \begin{math}
      a_{i}.
    \end{math}
  \item We define
    $\fsem{\underline{r}} \cdot x \cdot a$ to be
    \begin{math}
      \underline{0}.
    \end{math}
  \item We define
    $\fsem{\phi(\mathsf{t}_{1},\ldots,\mathsf{t}_{n})}
    \cdot x \cdot a$ to be
    \begin{math}
      \phi^{\mathrm{d}}( \psem{\mathsf{t}_{1}} \cdot x, \ldots,
      \psem{\mathsf{t}_{n}} \cdot x,
      \fsem{\mathsf{t}_{1}} \cdot x \cdot a,
      \ldots, \fsem{\mathsf{t}_{n}} \cdot x \cdot
      a)
    \end{math}
    where we define
    $\phi^{\mathrm{d}} \colon \mathbb{R}^{n}
    \times [0,+\infty]^{n} \to [0,+\infty]$ by
    \begin{equation*}
      \phi^{\mathrm{d}}(y_{1},\ldots,y_{n},b_{1},\ldots,b_{n})
      =
      \bigwedge_{|y_{1} - z_{1}| \sqsupseteq b_{1}}
      \cdots
      \bigwedge_{|y_{n} - z_{n}| \sqsupseteq b_{n}}
      |\phi(y_{1},\ldots,y_{n})
      - \phi(z_{1},\ldots,z_{n})|.
    \end{equation*}
  \item We define
    $\fsem{\mathsf{t} \, \mathsf{s}} \cdot x \cdot
    a$ to be
    $\fsem{\mathsf{t}} \cdot x \cdot a \cdot
    (\psem{\mathsf{s}} \cdot x) \cdot
    (\fsem{\mathsf{s}} \cdot x \cdot a)$.
  \item We define
    $\fsem{\lam{\mathsf{x}}{\mathsf{A}}{\mathsf{t}}}
    \cdot (x_{1},\ldots,x_{n}) \cdot
    (a_{1},\ldots,a_{n}) \cdot y \cdot b$ to be
    $\fsem{\mathsf{t}} \cdot
    (x_{1},\ldots,x_{n},y) \cdot
    (a_{1},\ldots,a_{n},b)$.
  \item We define
    $\fsem{\pair{\mathsf{t}}{\mathsf{s}}} \cdot x
    \cdot a$ to be
    $(\fsem{\mathsf{t}} \cdot x \cdot a,
    \fsem{\mathsf{s}} \cdot x \cdot a)$.
  \item We define
    $\fsem{\fst(\mathsf{t})} \cdot x \cdot a$ to
    be the first component of
    $\fsem{\mathsf{t}} \cdot x \cdot a$.
  \item We define
    $\fsem{\snd(\mathsf{t})} \cdot x \cdot a$ to
    be the second component of
    $\fsem{\mathsf{t}} \cdot x \cdot a$.
  \end{itemize}
\end{theorem}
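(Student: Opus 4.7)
My plan is to proceed by induction on the derivation of $\mathsf{\Gamma}\vdash\mathsf{t}:\mathsf{A}$ and to leverage the cartesian closed structure of $\qqm$ established in the previous section. The key observation is that a morphism $(f,a,f')\colon X\to Y$ in $\qqm$ is precisely the data required by the conclusion of the lemma, \emph{provided} that $f=f'$: in that case the two defining conditions of a $\qqm$-morphism collapse to the single requirement that $(f\cdot x,a\cdot x\cdot b,f\cdot x')\in\rho_{Y}$ for all $(x,b,x')\in\rho_{X}$. The whole argument therefore reduces to constructing, for each term $\mathsf{t}$, a $\qqm$-morphism $\sem{\mathsf{\Gamma}}\to\sem{\mathsf{A}}$ whose outer components are both equal to $\psem{\mathsf{t}}$ and whose middle component coincides with the explicit $\fsem{\mathsf{t}}$ defined in the statement.

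For the base cases lying outside the cartesian closed machinery, the work is direct. Variables $\mathsf{x}_{i}$ are interpreted via the projection morphisms of the product in $\qqm$, whose middle components select the $i$-th coordinate of a quantale tuple. For constants $\underline{r}$, the triple $(\mathrm{const}_{r},\underline{0},\mathrm{const}_{r})\colon\top\to R$ is trivially a $\qqm$-morphism since $|r-r|=0$. The non-trivial base case is the primitive $\phi$: I would verify that $(\phi,\phi^{\mathrm{d}},\phi)$ is a $\qqm$-morphism from $R^{n}$ to $R$, which amounts to showing that whenever $|y_{i}-z_{i}|\qgeq b_{i}$ holds for every $i$, then $|\phi(y_{1},\dots,y_{n})-\phi(z_{1},\dots,z_{n})|\qgeq\phi^{\mathrm{d}}(y_{1},\dots,y_{n},b_{1},\dots,b_{n})$; but this is immediate from the very definition of $\phi^{\mathrm{d}}$ as an infimum (in $\mathcal{Q}$) ranging precisely over such $z_{i}$. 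Monotonicity of $\phi^{\mathrm{d}}$ in its arguments $b_{i}$ is likewise immediate.

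The inductive cases are then handled uniformly by the cartesian closed structure: application is composition with the evaluation morphism, abstraction is currying, and pairing and projections come from the universal property of the product. A simultaneous induction shows that each such construction preserves the diagonal condition $f=f'$, since identity, projection, evaluation and currying all enjoy it and composition preserves it. The main obstacle I anticipate is purely calculational: verifying that the middle component of the categorically-defined morphism unfolds to the explicit formula for $\fsem{\mathsf{t}}$ given in the statement. For example, in the application case one must trace through the evaluation morphism of $\qqm$ (whose middle component sends $(a,b)\cdot(f,x)$ to $a\cdot x\cdot b$) composed with the pairing of the interpretations of $\mathsf{t}$ and $\mathsf{s}$, and confirm that the result matches $\fsem{\mathsf{t}}\cdot x\cdot a\cdot(\psem{\mathsf{s}}\cdot x)\cdot(\fsem{\mathsf{s}}\cdot x\cdot a)$. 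This is routine unfolding, with no conceptual difficulty, since the existence and properties of the relevant morphisms are guaranteed by the cartesian closure of $\qqm$.
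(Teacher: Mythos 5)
Your proposal is correct and follows essentially the same route as the paper: the paper's proof likewise observes that $(\psem{\mathsf{t}},\fsem{\mathsf{t}},\psem{\mathsf{t}})$ is the categorical interpretation of $\mathsf{t}$ in the cartesian closed category $\qqm$, with $\phi(\mathsf{t}_1,\ldots,\mathsf{t}_n)$ interpreted via the morphism $(\phi,\phi^{\mathrm{d}},\phi)$ post-composed with the tupling of the $\sem{\mathsf{t}_i}$, and concludes from the fact that this triple is a $\qqm$-morphism. Your additional remarks --- that the two morphism conditions collapse when the outer components coincide, that this diagonal property is preserved by the categorical constructions, and that one must check the middle component unfolds to the explicit $\fsem{\mathsf{t}}$ --- are exactly the details the paper leaves implicit.
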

\longv{
  \begin{proof}
    The triple
    $(\psem{\mathsf{t}}, \fsem{\mathsf{t}},
    \psem{\mathsf{t}})$ is the interpretation of
    $\mathsf{\Gamma} \vdash \mathsf{t} :
    \mathsf{A}$ in the cartesian closed category
    $\qqm$ where we interpret
    $\mathsf{\Gamma} \vdash
    \phi(\mathsf{t}_{1},\ldots,\mathsf{t}_{n})$ by
    \begin{equation*}
      \sem{\mathsf{A}_{1}}
      \times \cdots \times
      \sem{\mathsf{A}_{n}}
      \xrightarrow{
        \langle
        \sem{\mathsf{t}_{1}},
        \ldots,
        \sem{\mathsf{t}_{n}}
        \rangle
      }
      R \times \cdots \times R
      \xrightarrow{
        (\phi,\phi^{\mathrm{d}},\phi)
      }
      R.
    \end{equation*}
    The statement follows from that
    $(\psem{\mathsf{t}}, \fsem{\mathsf{t}},
    \psem{\mathsf{t}})$ is a morphism from
    $\sem{\mathsf{A}_{1}} \times \cdots \times
    \sem{\mathsf{A}_{n}}$ to $\sem{\mathsf{A}}$ in
    $\qqm$.
  \end{proof}
} 

The fundamental lemma is a way to compositionally
reason about distances.
\begin{example}\label{eg:phipsi}
  Let us fix a positive real number $\epsilon$. We
  define
  $D_\epsilon \colon (\mathbb{R} \Rightarrow
  \mathbb{R}) \to (\mathbb{R} \Rightarrow
  \mathbb{R})$ by
  \begin{equation*}
    D_\epsilon \cdot f =
    \lambda x:\mathbb{R}. \,
    \frac{f(x + \epsilon) - f(x)}{\epsilon}.
  \end{equation*}
  For $f \colon \mathbb{R} \to \mathbb{R}$ and
  $x \in \mathbb{R}$, $D_\epsilon \cdot f \cdot x$
  calculates an approximation of the derivative of
  $f$ at $x$. By the fundamental lemma, we obtain
  $(D_\epsilon,E_\epsilon,D_\epsilon) \in \rho_{(R
    \Rightarrow R) \Rightarrow (R \Rightarrow R)}$
  where $E_\epsilon$ is a function from
  $|R \Rightarrow R|$ to
  $\mathcal Q_{R \Rightarrow R}
  \rightarrowtriangle \mathcal Q_{R \Rightarrow
    R}$ given by
  \begin{equation*}
    E_\epsilon \cdot f \cdot a
    =
    \lambda x:\mathbb{R}. \,
    \lambda b:[0,+\infty]. \,
    \frac{a \cdot (x + \epsilon) \cdot b
      + a \cdot x \cdot b}{\epsilon}.
  \end{equation*}
  In Example~\ref{eg:id-sin}, we will observe that
  $(\mathrm{id}_{\mathbb{R}},a,\sin)$ is an
  element of $\rho_{R \Rightarrow R}$ where
  $\mathrm{id}_{\mathbb{R}}$ is the identity
  function on $\mathbb{R}$, and
  $a \in \mathcal{Q}_{R \Rightarrow R}$ is given
  by $a \cdot x \cdot b = |x - \sin(x)| + b$. By
  applying
  $(D_\epsilon,E_\epsilon,D_\epsilon)$
  to $(\mathrm{id}_{\mathbb{R}},a,\sin)$,
  we obtain
  $(D_{\epsilon} \cdot \mathrm{id}_{\mathbb{R}},
  a',
  D_{\epsilon} \cdot \sin) \in
  \rho_{R}$ where
  \begin{equation*}
    a' \cdot x \cdot b =
    \frac{|x + \epsilon - \sin(x + \epsilon)|
    +
    |x - \sin(x)|
    + 2b}{\epsilon}.
  \end{equation*}
  From this, we see that the distance between
  $D_{\epsilon} \cdot \mathrm{id}_{\mathbb{R}}
  \cdot 0$ and $D_{\epsilon} \cdot {\sin} \cdot 0$
  is bounded by
  $\frac{|\epsilon - \sin(\epsilon)|}{\epsilon}$.
  We note that $a'$ is not the exact distance
  between
  $D_{\epsilon} \cdot \mathrm{id}_{\mathbb{R}}$
  and $D_{\epsilon} \cdot \sin$. For example,
  while
  $|D_{0.1} \cdot \mathrm{id}_{\mathbb{R}} \cdot 0
  - D_{0.1} \cdot {\sin} \cdot 0.1| \approx 0.01$,
  we have $a' \cdot 0 \cdot 0.1 \approx 2$. This
  gap stems in the fact that
  $(D_{\epsilon}, E_{\epsilon},D_{\epsilon})$
  takes all functions into account and can not
  exploit continuity of specific functions.
\end{example}

\section{Quasi-Metric Logical Relations}
\label{sec:quasi-metr-assoc}

As described in Section 3, any quasi$^2$-metric
gives rise to left and right observational
quasi-metrics. In this section, we introduce a
class of logical relations $\gamma_{\mathsf A}$
that capture the left observational quasi-metric
associated to $\rho_{{\mathsf A}}$. We will then show
how such relations can be used to derive
over-approximations of distances between
functions.

For a type $\mathsf{A}$, we define
$\gamma_{\mathsf{A}} \subseteq |\mathsf{A}| \times
\mathcal{Q}_{\mathsf{A}} \times |\mathsf{A}|$ by
induction on $\mathsf{A}$ as follows:
\begin{align*}
  (x,a,x') \in
  \gamma_{\real}
  &\iff |x - x'| \sqsupseteq a, \\
  (f,a,f') \in
  \gamma_{\mathsf{A} \Rightarrow \mathsf{B}}
  &\iff \text{for all }
    (x,b,x') \in
    \rho_{\mathsf{A}}, \,
    (f \cdot x, a \cdot x \cdot b,
    f' \cdot x) \in \gamma_{\mathsf{B}},
    \text{ and}
  \\
  &\mathrel{\phantom{\iff}}
    \text{for all }
    (f',b,f')
    \in \rho_{\mathsf{A} \Rightarrow \mathsf{B}}, \,
    (f,a \otimes b, f) \in
    \rho_{\mathsf{A} \Rightarrow \mathsf{B}}, \\
  ((x,y),(a,b),(x',y'))
  \in
  \gamma_{\mathsf{A} \times \mathsf{B}}
  &\iff
    (x,a,x') \in \gamma_{\mathsf{A}}
    \text{ and }
    (y,b,y') \in \gamma_{\mathsf{B}}.
\end{align*}
We give some explanation on the definition of
$\gamma_{\mathsf{A} \Rightarrow \mathsf{B}}$. The
definition consists of two conditions. The first
condition means that, if $(f,a,f')$ is an element
of $\gamma_{\mathsf{A} \Rightarrow \mathsf{B}}$,
then the distance $a \cdot x \cdot b$
over-approximates the distance between $f$ and
$f'$ at \emph{the same} point $x$ (rather than on
distinct points, as is the case for the relation $\rho_{\mathsf A \Rightarrow \mathsf B}$). The
second condition means that $a$ also
over-approximates the \emph{gap} between the
self-distance of $f'$ and the self-distance of
$f$.
%

\longv{Let us introduce a notation. For a type
  $\mathsf{A}$ and $x \in |\mathsf{A}|$, we write
  $\lceil x \rceil \in \mathcal{Q}_{\mathsf{A}}$ for
  $\hat{\rho_{\mathsf{A}}}(x,x)$, i.e.,
  \begin{equation*}
    \lceil x \rceil
    = \sup
    \{a \in \mathcal{Q}_{\mathsf{A}} \mid
    (x,a,x) \in \rho_{\mathsf{A}}\}.
  \end{equation*}
  Since $\rho_{\mathsf{A}}$ is closed under
  supremum, for any $x \in |\mathsf{A}|$, we have
  $(x,\lceil x \rceil,x) \in \rho_{\mathsf{A}}$.
  \begin{lemma}\label{lem:ceil-refl}
    For every type $\mathsf{A}$ and for every
    $x,x' \in |\mathsf{A}|$, if
    $(x,\lceil x \rceil,x') \in
    \rho_{\mathsf{A}}$, then $x = x'$.
  \end{lemma}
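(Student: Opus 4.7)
The plan is to proceed by induction on $\mathsf A$. For the base case $\mathsf A = \real$, the self-distance $\lceil x \rceil$ coincides with the top of the Lawvere quantale, namely $0$, since $|x-x| = 0 \sqsupseteq a$ for every $a \in [0,+\infty]$; then $(x,\lceil x \rceil,x') \in \rho_{\real}$ unfolds to $|x - x'| \leq 0$, so $x = x'$. The product case is straightforward: the pointwise structure of the quantale gives $\lceil (x,y) \rceil = (\lceil x \rceil, \lceil y \rceil)$, $\rho_{\mathsf A \times \mathsf B}$ decomposes componentwise, and the IHs on $\mathsf A$ and $\mathsf B$ conclude.

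The function case $\mathsf A \Rightarrow \mathsf B$ is where the real work happens. Suppose $(f, \lceil f \rceil, f') \in \rho_{\mathsf A \Rightarrow \mathsf B}$ and fix $x \in |\mathsf A|$; the goal is $f \cdot x = f' \cdot x$. Since $\rho_{\mathsf A}$ is closed under suprema, $(x,\lceil x \rceil,x) \in \rho_{\mathsf A}$, so unfolding the definition of $\rho_{\mathsf A \Rightarrow \mathsf B}$ with $g = f'$ yields $(f \cdot x, \lceil f \rceil \cdot x \cdot \lceil x \rceil, f' \cdot x) \in \rho_{\mathsf B}$. To apply the IH on $\mathsf B$, I need the middle slot to be exactly $\lceil f \cdot x \rceil$, so the heart of the proof is the identity
\begin{equation*}
\lceil f \rceil \cdot x \cdot \lceil x \rceil \;=\; \lceil f \cdot x \rceil.
\end{equation*}

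The $\sqsubseteq$-direction is easy: taking $g = f$ in the relation $(f,\lceil f \rceil, f) \in \rho_{\mathsf A \Rightarrow \mathsf B}$ gives $(f \cdot x, \lceil f \rceil \cdot x \cdot \lceil x \rceil, f \cdot x) \in \rho_{\mathsf B}$, placing $\lceil f \rceil \cdot x \cdot \lceil x \rceil$ in the defining set of $\lceil f \cdot x \rceil$. For the converse, I would exhibit an explicit witness $a \in \mathcal Q_{\mathsf A \Rightarrow \mathsf B}$ defined by
\begin{equation*}
a \cdot y \cdot b \;=\; \bigwedge \{ \widehat{\rho}_{\mathsf B}(f \cdot y, f \cdot y') \;:\; (y,b,y') \in \rho_{\mathsf A} \},
\end{equation*}
with the convention that the meet of the empty set is $\top$. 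Monotonicity in $b$ follows from $\mathcal Q$-closure (enlarging $b$ shrinks the $y'$-set and so lowers the meet in $\sqsubseteq$), and $(f,a,f) \in \rho_{\mathsf A \Rightarrow \mathsf B}$ holds because for each $(y,b,y') \in \rho_{\mathsf A}$ the individual term $\widehat{\rho}_{\mathsf B}(f \cdot y, f \cdot y')$ bounds $a \cdot y \cdot b$ from above in $\sqsubseteq$. The decisive step is then to invoke the IH on $\mathsf A$ applied to the pair $(x, \lceil x \rceil, y')$: it forces $y' = x$, so the meet at $(x,\lceil x \rceil)$ collapses to the single term $\widehat{\rho}_{\mathsf B}(f \cdot x, f \cdot x) = \lceil f \cdot x \rceil$. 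Combining with $a \cdot x \cdot \lceil x \rceil \sqsubseteq \lceil f \rceil \cdot x \cdot \lceil x \rceil$ (from the supremum definition of $\lceil f \rceil$) yields $\lceil f \cdot x \rceil \sqsubseteq \lceil f \rceil \cdot x \cdot \lceil x \rceil$.

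With the identity in hand, the already-derived relation reads $(f \cdot x, \lceil f \cdot x \rceil, f' \cdot x) \in \rho_{\mathsf B}$, and the IH on $\mathsf B$ delivers $f \cdot x = f' \cdot x$. The main obstacle is the nontrivial direction of the key identity: it is not a formal consequence of the $\rho_{\mathsf A \Rightarrow \mathsf B}$-definition or of $\mathcal Q$-closure alone, but requires the construction of the witness $a$ together with a careful invocation of the IH on $\mathsf A$ to eliminate all $y' \neq x$ from the meet at $(x, \lceil x \rceil)$.
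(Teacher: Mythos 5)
Your proof is correct and follows essentially the same route as the paper's: induction on the type, with the arrow case reduced to the identity $\lceil f\rceil\cdot x\cdot\lceil x\rceil=\lceil f\cdot x\rceil$, proved by using the induction hypothesis on $\mathsf A$ to collapse the quantification over $x'$ with $(x,\lceil x\rceil,x')\in\rho_{\mathsf A}$ to $x'=x$, and then concluding via the induction hypothesis on $\mathsf B$. The only difference is that you make explicit (via the witness $a$) the step the paper asserts directly, namely that the value-level supremum defining $\lceil f\cdot x\rceil$ is realized by an actual element of $\mathcal Q_{\mathsf A\Rightarrow\mathsf B}$ below $\lceil f\rceil$ --- a welcome extra bit of rigor.
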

  \begin{proof}
    By induction on $\mathsf{A}$. It is
    straightforward to check the case $\real$ and
    the case $\mathsf{A} \times \mathsf{B}$. For
    the case $\mathsf{A} \Rightarrow \mathsf{B}$,
    if
    $(f,\lceil f \rceil,f') \in \rho_{\mathsf{A}
      \Rightarrow \mathsf{B}}$, then for any
    $x \in |\mathsf{A}|$, we have
    \begin{equation*}
      (f \cdot x,
      \lceil f \rceil \cdot x \cdot \lceil x \rceil,
      f' \cdot x)
      \in \rho_{\mathsf{A}}.
    \end{equation*}
    Here, by the induction hypothesis,
    \begin{align*}
      \lceil f \rceil
      \cdot x \cdot \lceil x \rceil
      &=
        \sup \{a \in \mathcal{Q}_{\mathsf{B}} \mid
        \text{for all }
        (x,\lceil x \rceil, x')
        \in \rho_{\mathsf{A}}, \,
        (f \cdot x, a, f \cdot x') \in \rho_{\mathsf{B}}
        \} \\
      &=
        \sup \{a \in \mathcal{Q}_{\mathsf{B}} \mid
        (f \cdot x, a, f \cdot x) \in \rho_{\mathsf{B}}
        \} \\
      &= \lceil f \cdot x \rceil.
    \end{align*}
    Hence, $f' \cdot x = f \cdot x$.
  \end{proof}

  \begin{lemma}\label{lem:ceil-ceil}
    For any type $\mathsf{A}$ and $\mathsf{B}$, if
    $f \in |\mathsf{A} \Rightarrow \mathsf{B}|$
    and $x \in |\mathsf{A}|$, then
    $\lceil f \rceil \cdot
    x \cdot \lceil x \rceil= \lceil
    f \cdot x\rceil$.
  \end{lemma}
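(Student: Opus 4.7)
The plan is to prove both $\qleq$-inequalities $\lceil f \rceil \cdot x \cdot \lceil x \rceil \qleq \lceil f \cdot x \rceil$ and $\lceil f \cdot x \rceil \qleq \lceil f \rceil \cdot x \cdot \lceil x \rceil$ in $\mathcal{Q}_{\mathsf{B}}$.

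For the forward inequality I rely on the $\mathcal{Q}$-closure of $\rho$ discussed in Section~4, which forces the joins defining $\lceil f \rceil$ and $\lceil x \rceil$ to be themselves attained, so that $(f, \lceil f \rceil, f) \in \rho_{\mathsf{A} \Rightarrow \mathsf{B}}$ and $(x, \lceil x \rceil, x) \in \rho_{\mathsf{A}}$. Instantiating the first at the second gives $(f \cdot x, \lceil f \rceil \cdot x \cdot \lceil x \rceil, f \cdot x) \in \rho_{\mathsf{B}}$, and then $\lceil f \rceil \cdot x \cdot \lceil x \rceil$ is one of the elements whose supremum defines $\lceil f \cdot x \rceil$, giving the desired inequality.

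For the reverse inequality I construct a pointwise-optimal witness $a^{\ast} \in \mathcal{Q}_{\mathsf{A} \Rightarrow \mathsf{B}}$ by setting
\[
a^{\ast} \cdot y \cdot b \;=\; \bigvee\bigl\{\, d \in \mathcal{Q}_{\mathsf{B}} \bigm| \forall y' \in |\mathsf{A}|,\ (y,b,y') \in \rho_{\mathsf{A}} \Rightarrow (f \cdot y, d, f \cdot y') \in \rho_{\mathsf{B}}\,\bigr\}.
\]
Monotonicity of $a^{\ast}\cdot y$ in $b$ follows from $\mathcal{Q}$-closure of $\rho_{\mathsf{A}}$: as $b$ grows in $\qleq$, the set of admissible $y'$'s shrinks, hence the set of $d$'s meeting the constraint grows. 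The $\mathcal{Q}$-closure of $\rho_{\mathsf{B}}$ then yields $(f, a^{\ast}, f) \in \rho_{\mathsf{A} \Rightarrow \mathsf{B}}$, so $a^{\ast} \qleq \lceil f \rceil$ pointwise. The crucial step is to invoke Lemma~\ref{lem:ceil-refl}: any $x'$ with $(x,\lceil x \rceil, x') \in \rho_{\mathsf{A}}$ must equal $x$, so the universal quantification in $a^{\ast}\cdot x \cdot \lceil x \rceil$ collapses to the single instance $y' = x$, giving $a^{\ast}\cdot x\cdot \lceil x\rceil = \bigvee\{d \mid (f\cdot x,d,f\cdot x) \in \rho_{\mathsf{B}}\} = \lceil f\cdot x\rceil$. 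Combining with $a^{\ast} \qleq \lceil f \rceil$ yields $\lceil f\cdot x\rceil \qleq \lceil f \rceil \cdot x \cdot \lceil x \rceil$.

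The main obstacle is this reverse inequality: a priori $\lceil f \rceil \cdot x \cdot \lceil x \rceil$ might be a strictly looser quantity than $\lceil f\cdot x\rceil$, because the global supremum defining $\lceil f\rceil$ must also accommodate cross-point constraints $(f\cdot x, d, f\cdot y')$ for $y' \neq x$. Lemma~\ref{lem:ceil-refl} is precisely what removes such cross-point terms at the distinguished pair $(x,\lceil x\rceil)$, allowing $a^{\ast}$ to be chosen as tightly as $\lceil f\cdot x\rceil$ itself.
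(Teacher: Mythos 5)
Your proof is correct and follows essentially the same route as the paper, which establishes this identity inside the induction step of Lemma~\ref{lem:ceil-refl}: both arguments hinge on $\mathcal{Q}$-closure giving $(f,\lceil f\rceil,f)\in\rho_{\mathsf{A}\Rightarrow\mathsf{B}}$ and $(x,\lceil x\rceil,x)\in\rho_{\mathsf{A}}$, plus Lemma~\ref{lem:ceil-refl} at type $\mathsf{A}$ to collapse the quantification over $x'$ with $(x,\lceil x\rceil,x')\in\rho_{\mathsf{A}}$ to the single case $x'=x$. Your explicit pointwise-optimal witness $a^{\ast}$ actually spells out the reverse inequality that the paper asserts implicitly in its chain of equalities, so if anything your version is more detailed.
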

  \begin{proof}
    This is shown in the proof of
    Lemma~\ref{lem:ceil-refl}.
  \end{proof}
  \begin{lemma}\label{lem:rho-gamma}
    For every type $\mathsf{A}$, if
    $(x,a \otimes \lceil x' \rceil,x') \in
    \rho_{\mathsf{A}}$, then
    $(x,a,x') \in \gamma_{\mathsf{A}}$.
  \end{lemma}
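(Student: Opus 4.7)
The plan is to proceed by induction on the type $\mathsf{A}$. The base case $\mathsf{A} = \real$ is immediate: $\rho_{\real}$ and $\gamma_{\real}$ coincide as relations, and $\lceil x' \rceil$ equals $0$, the unit of the Lawvere quantale, so $a \otimes \lceil x' \rceil = a$ and the hypothesis literally states the conclusion. The product case $\mathsf{A} \times \mathsf{B}$ reduces to the induction hypothesis applied componentwise: since $\mathcal{Q}_{\mathsf{A} \times \mathsf{B}}$ inherits the pointwise quantale structure, one has $\lceil (x',y') \rceil = (\lceil x' \rceil, \lceil y' \rceil)$, and the tensor distributes through the product.

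The function case $\mathsf{A} \Rightarrow \mathsf{B}$ is the interesting one. Starting from $(f, a \otimes \lceil f' \rceil, f') \in \rho_{\mathsf{A} \Rightarrow \mathsf{B}}$, I would verify the two clauses defining $\gamma_{\mathsf{A} \Rightarrow \mathsf{B}}$. The self-distance clause (the second conjunct) follows easily: quasi-reflexivity of $\rho_{\mathsf{A} \Rightarrow \mathsf{B}}$ converts the hypothesis into $(f, a \otimes \lceil f' \rceil, f) \in \rho_{\mathsf{A} \Rightarrow \mathsf{B}}$, and since any witness $(f', c, f')$ of $\rho_{\mathsf{A} \Rightarrow \mathsf{B}}$ satisfies $c \sqsubseteq \lceil f' \rceil$, monotonicity of $\otimes$ combined with the $\mathcal{Q}$-closure of $\rho_{\mathsf{A} \Rightarrow \mathsf{B}}$ yields $(f, a \otimes c, f) \in \rho_{\mathsf{A} \Rightarrow \mathsf{B}}$.

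The main delicate point is the first clause: given $(x,b,x') \in \rho_{\mathsf{A}}$, to derive $(f \cdot x, a \cdot x \cdot b, f' \cdot x) \in \gamma_{\mathsf{B}}$ via the induction hypothesis I must exhibit $(f \cdot x, (a \cdot x \cdot b) \otimes \lceil f' \cdot x \rceil, f' \cdot x) \in \rho_{\mathsf{B}}$. The tempting move of instantiating the hypothesis at $(x,b,x) \in \rho_{\mathsf{A}}$ (valid by quasi-reflexivity) does not directly work: the factor $\lceil f' \rceil \cdot x \cdot b$ arising inside $(a \otimes \lceil f' \rceil) \cdot x \cdot b$ only lies \emph{below} $\lceil f' \cdot x \rceil$ in the quantale order, which is the wrong direction for $\mathcal{Q}$-closure. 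The key trick is to instantiate the hypothesis instead at $(x, \lceil x \rceil, x) \in \rho_{\mathsf{A}}$: by Lemma~\ref{lem:ceil-ceil} this produces $(f \cdot x, (a \cdot x \cdot \lceil x \rceil) \otimes \lceil f' \cdot x \rceil, f' \cdot x) \in \rho_{\mathsf{B}}$, and then using $b \sqsubseteq \lceil x \rceil$ (valid because quasi-reflexivity gives $(x,b,x) \in \rho_{\mathsf{A}}$), monotonicity of $a \cdot x \cdot (-)$ together with $\mathcal{Q}$-closure of $\rho_{\mathsf{B}}$ allows me to weaken the first factor to $a \cdot x \cdot b$, after which the induction hypothesis closes the case.
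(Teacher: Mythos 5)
Your proof is correct and follows essentially the same route as the paper's: the only nontrivial case is the arrow type, where the hypothesis is instantiated at the self-distance $(x,\lceil x\rceil,x)$ (the paper writes $b\vee\lceil x\rceil$, which equals $\lceil x\rceil$ since $b\sqsubseteq\lceil x\rceil$), Lemma~\ref{lem:ceil-ceil} turns $\lceil f'\rceil\cdot x\cdot\lceil x\rceil$ into $\lceil f'\cdot x\rceil$, and monotonicity of $a\cdot x$ plus $\mathcal{Q}$-closure brings the first factor back down to $a\cdot x\cdot b$ before invoking the induction hypothesis. Your diagnosis of why the naive instantiation at $(x,b,x)$ fails, and your handling of the second clause of $\gamma_{\mathsf{A}\Rightarrow\mathsf{B}}$ via $b\sqsubseteq\lceil f'\rceil$ and quasi-reflexivity, both match the paper.
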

  \begin{proof}
    By induction on $\mathsf{A}$. The only
    non-trivial case is
    $\mathsf{A} \Rightarrow \mathsf{B}$. If
    $(f,a \otimes \lceil f' \rceil,f') \in
    \rho_{\mathsf{A} \Rightarrow \mathsf{B}}$,
    then for any $(x,b,x') \in \rho_{\mathsf{A}}$,
    since
    $(x, b \vee \lceil x \rceil,x) \in
    \rho_{\mathsf{A}}$, we obtain
    \begin{equation*}
      (f \cdot x,
      (a \cdot x \cdot
      (b \vee \lceil x \rceil))
      \otimes
      (\lceil f' \rceil
      \cdot x \cdot (b \vee \lceil x \rceil)),
      f' \cdot x)
      \in
      \rho_{\mathsf{B}}.
    \end{equation*}
    It follows from monotonicity of $a$
    and Lemma~\ref{lem:ceil-ceil}
    that we have
    \begin{equation*}
      (f \cdot x,
      (a \cdot x \cdot b)
      \otimes
      (\lceil f' \rceil
      \cdot x \cdot
      \lceil x \rceil),
      f' \cdot x)
      =
      (f \cdot x,
      (a \cdot x \cdot b) \otimes
      \lceil f' \cdot x \rceil,
      f' \cdot x)
      \in \rho_{\mathsf{B}}.
    \end{equation*}
    By the induction hypothesis, we conclude
    $(f \cdot x, a \cdot x \cdot b, f' \cdot x)
    \in \gamma_{\mathsf{B}}$. For any
    $(f',b,f') \in \rho_{\mathsf{A} \Rightarrow
      \mathsf{B}}$, since
    $b \sqsubseteq \lceil f' \rceil$, it follows
    from
    $(f,a \otimes \lceil f' \rceil,f') \in
    \rho_{\mathsf{A} \Rightarrow \mathsf{B}}$ that
    $(f,a \otimes b,f') \in \rho_{\mathsf{A}
      \Rightarrow \mathsf{B}}$. By
    left-quasi-reflexivity, we obtain
    $(f,a \otimes b,f) \in \rho_{\mathsf{A}
      \Rightarrow \mathsf{B}}$.
  \end{proof}
}

Let $q^{l}_{{\mathsf{A}}}$ indicate the
quasi-metric representing the left observational
quasi-metrics associated with the \mmm{}
$\rho_{\mathsf A}$.

\begin{proposition}\label{prop:rho-gamma}
  For every type $\mathsf{A}$, we have
  $q^{l}_{{\mathsf{A}}} = \gamma_{\mathsf{A}}$.
\end{proposition}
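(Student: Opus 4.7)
I would proceed by induction on the type $\mathsf{A}$. The base case $\mathsf{A} = \real$ is immediate: $\gamma_{\real}$ coincides verbatim with $\rho_{\real}$, and the latter, being already reflexive and transitive, equals its own left observational quasi-metric by Proposition~\ref{prop:left-right}(iii). The product case is handled component-wise from the pointwise definitions of $\gamma_{\mathsf{A}\times\mathsf{B}}$ and $q^{l}_{\mathsf{A}\times\mathsf{B}}$. The only substantive step is the function space $\mathsf{A}\Rightarrow\mathsf{B}$, which I split into two inclusions.

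For $\gamma_{\mathsf{A}\Rightarrow\mathsf{B}}\subseteq q^{l}_{\mathsf{A}\Rightarrow\mathsf{B}}$, suppose $(f,a,f')\in\gamma_{\mathsf{A}\Rightarrow\mathsf{B}}$ and fix an arbitrary $(f',b,g)\in\rho_{\mathsf{A}\Rightarrow\mathsf{B}}$; the goal $(f,a\otimes b,g)\in\rho_{\mathsf{A}\Rightarrow\mathsf{B}}$ unfolds to showing, for every $(x,c,x')\in\rho_{\mathsf{A}}$ and every $h\in\{f,g\}$, that $(f\cdot x,(a\otimes b)\cdot x\cdot c,h\cdot x')\in\rho_{\mathsf{B}}$. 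For $h=g$, I would combine the first clause of $\gamma$ (which yields $(f\cdot x,a\cdot x\cdot c,f'\cdot x)\in\gamma_{\mathsf{B}}=q^{l}_{\mathsf{B}}$ by the inductive hypothesis) with the datum $(f'\cdot x,b\cdot x\cdot c,g\cdot x')\in\rho_{\mathsf{B}}$ extracted from $(f',b,g)\in\rho$, and close using left transitivity of $q^{l}_{\mathsf{B}}$ (Proposition~\ref{prop:left-right}(iv)). For $h=f$, the crucial observation is that the $h'=f'$ branch hidden in the unfolding of $(f',b,g)\in\rho_{\mathsf{A}\Rightarrow\mathsf{B}}$ already witnesses $(f',b,f')\in\rho_{\mathsf{A}\Rightarrow\mathsf{B}}$, so the second clause of $\gamma$ applies directly and yields $(f,a\otimes b,f)\in\rho_{\mathsf{A}\Rightarrow\mathsf{B}}$, from which the desired statement follows by evaluation at $(x,c,x')$.

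For the reverse inclusion $q^{l}_{\mathsf{A}\Rightarrow\mathsf{B}}\subseteq\gamma_{\mathsf{A}\Rightarrow\mathsf{B}}$, I would reduce to Lemma~\ref{lem:rho-gamma}. The self-distance element satisfies $(f',\lceil f'\rceil,f')\in\rho_{\mathsf{A}\Rightarrow\mathsf{B}}$ by $\mathcal{Q}$-closure, so instantiating the defining property of $q^{l}$ at this witness yields $(f,a\otimes\lceil f'\rceil,f')\in\rho_{\mathsf{A}\Rightarrow\mathsf{B}}$, and Lemma~\ref{lem:rho-gamma} then delivers $(f,a,f')\in\gamma_{\mathsf{A}\Rightarrow\mathsf{B}}$. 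The main technical obstacle therefore lies not in the present proposition but in Lemma~\ref{lem:rho-gamma}, whose inductive step at the function type requires the commutation identity $\lceil f'\cdot x\rceil=\lceil f'\rceil\cdot x\cdot\lceil x\rceil$ (Lemma~\ref{lem:ceil-ceil}) in order to push self-distances through evaluation and correctly line up the quantale multiplication.
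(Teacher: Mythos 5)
Your proof is correct and follows essentially the same route as the paper's: the inclusion $\gamma_{\mathsf{A}\Rightarrow\mathsf{B}}\subseteq q^{l}_{\mathsf{A}\Rightarrow\mathsf{B}}$ is argued identically (left transitivity of $q^{l}_{\mathsf{B}}=\gamma_{\mathsf{B}}$ via the inductive hypothesis for the $h=g$ branch, and the second clause of $\gamma$ together with quasi-reflexivity of $\rho_{\mathsf{A}\Rightarrow\mathsf{B}}$ for the $h=f$ branch), and the reverse inclusion hinges on Lemma~\ref{lem:rho-gamma} in both treatments. The only difference is cosmetic: you apply Lemma~\ref{lem:rho-gamma} once at the type $\mathsf{A}\Rightarrow\mathsf{B}$ itself, turning the reverse inclusion into a uniform corollary of that lemma, whereas the paper unfolds the function type, applies the lemma at $\mathsf{B}$, and checks the two clauses of $\gamma$ separately --- a mild streamlining on your part rather than a genuinely different argument.
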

\longv{
  \begin{proof}
    We first show that $\gamma_{\mathsf{A}}$ is a
    subset of $q^{l}_{\mathsf{A}}$ by induction
    on $\mathsf{A}$. It is straightforward to
    check the case $\real$ and the case
    $\mathsf{A} \times \mathsf{B}$. We check the
    case $\mathsf{A} \Rightarrow \mathsf{B}$. Let
    $(f,a,f')$ be an element of
    $\gamma_{\mathsf{A} \Rightarrow \mathsf{B}}$,
    and let $(f',a',f'')$ be an element of
    $\rho_{\mathsf{A} \Rightarrow \mathsf{B}}$. We
    show that $(f,a \otimes a',f'')$ is an element of
    $\rho_{\mathsf{A} \Rightarrow \mathsf{B}}$.
    For any $(x,b,x') \in \rho_{\mathsf{A}}$,
    since $(x,b,x) \in \rho_{\mathsf{A}}$, we have
    \begin{align*}
      (f \cdot x,
      a \cdot x \cdot b,
      f' \cdot x)
      &\in \gamma_{\mathsf{B}},
      \\
      (f' \cdot x,
      a' \cdot x \cdot b,
      f'' \cdot x')
      &\in \rho_{\mathsf{B}}.
    \end{align*}
    Hence, by the induction hypothesis, we see
    that
    $(f \cdot x, (a \otimes a') \cdot x \cdot b,
    f'' \cdot x')$ is an element of
    $\rho_{\mathsf{B}}$. It remains to check that
    $(f \cdot x, (a \otimes a') \cdot x \cdot b, f
    \cdot x')$ is an element of
    $\rho_{\mathsf{B}}$. Since
    $(f',a',f'') \in \rho_{\mathsf{A} \Rightarrow
      \mathsf{B}}$, we have
    $(f',a',f') \in \rho_{\mathsf{A} \Rightarrow
      \mathsf{B}}$. Then, by the definition of
    $\gamma_{\mathsf{A} \Rightarrow \mathsf{B}}$,
    we obtain
    $(f,a \otimes a',f) \in \rho_{\mathsf{A}
      \Rightarrow \mathsf{B}}$. Hence,
    $(f \cdot x,(a \otimes a') \cdot x \cdot b, f
    \cdot x')$ is an element of
    $\rho_{\mathsf{B}}$. We next show that
    $q^{l}_{\mathsf{A}}$ is a subset of
    $\gamma_{\mathsf{A}}$. Again, it is
    straightforward to check the case $\real$ and
    the case $\mathsf{A} \times \mathsf{B}$. We
    check the case
    $\mathsf{A} \Rightarrow \mathsf{B}$. Let
    $(f,a,f')$ be an element of
    $q^{l}_{\mathsf{A} \Rightarrow
      \mathsf{B}}$, and let $(x,b,x')$ be an
    element of $\rho_{\mathsf{A}}$. Since
    $(f,a \otimes \lceil f' \rceil,f') \in
    \rho_{\mathsf{A} \Rightarrow \mathsf{B}}$ and
    $(x,\lceil x \rceil,x) \in \rho_{\mathsf{A}}$,
    we obtain
    \begin{equation*}
      (f \cdot x,
      (a \cdot x \cdot \lceil x \rceil)
      \otimes
      (\lceil f' \rceil \cdot x \cdot
      \lceil x \rceil),f' \cdot x)
      =
      (f \cdot x,
      (a \cdot x \cdot \lceil x \rceil)
      \otimes
      \lceil f' \cdot x \rceil ,
      f' \cdot x)
      \in \rho_{\mathsf{B}}
    \end{equation*}
    Hence, by Lemma~\ref{lem:rho-gamma},
    $(f \cdot x,a \cdot x \cdot \lceil x \rceil,f'
    \cdot x)$ is an element of
    $\rho_{\mathsf{B}}$. Since $a$ is monotone,
    for any $(x,b,x') \in \rho_{\mathsf{A}}$, we
    have
    $(f \cdot x,a \cdot x \cdot b,f' \cdot x) \in
    \rho_{\mathsf{B}}$. If
    $(f',b,f') \in \rho_{\mathsf{A} \Rightarrow
      \mathsf{B}}$, then by the definition of
    $q^{l}_{\mathsf{A} \Rightarrow \mathsf{B}}$,
    we obtain
    $(f,a \otimes b,f) \in \rho_{\mathsf{A}
      \Rightarrow \mathsf{B}}$.
  \end{proof}
}

We can use Proposition~\ref{prop:rho-gamma} to
over-approximate $\rho$-distances in terms of
$\gamma$-distances and the left observational
quasi-metric. Let us sketch our idea.
First, thanks to Proposition \ref{prop:rho-gamma}
and Proposition \ref{prop:left-right}, we can exploit
left-transitivity to pass from a $\gamma$-distance
between $\mathsf t$ and $\mathsf s$ and a
self-$\rho$-distance of $\mathsf s$
to a $\rho$-distance between
$\mathsf{t}$ and $\mathsf{s}$:
\begin{align}
  (\psem{\mathsf t}, a, \psem{\mathsf s})
  \in \gamma_{\mathsf A}
  \ \text{and} \ 
  (\psem{\mathsf s},b,\psem{\mathsf s})
  \in \rho_{{\mathsf A}}
  \ \implies \ 
  (\psem{\mathsf t}, a \otimes b,\psem{\mathsf s})
  \in \rho_{{\mathsf A}}.
  \tag{left transitivity}
\end{align} 
Second, thanks to the fundamental lemma, we can
always obtain a $\rho$-distance by summing a
$\gamma$-distance with the self-distance
$\fsem{\mathsf s}$:
\begin{align}
  (\psem{\mathsf t}, a, \psem{\mathsf s})
  \in \gamma_{\mathsf A} 
  \ \implies \ 
  (\psem{\mathsf t}, a \otimes \fsem{\mathsf s},
  \psem{\mathsf s}) \in \rho_{{\mathsf A}}.
  \tag{$\rho\sqsupseteq \gamma\ \otimes$ self-$\rho$}
\end{align} 
The following result exploits this last idea to
bound the distance between two functions $f$ and
$g$ by summing the ``vertical distance'' between
$f$ and $g$ (that is, the distance of $f(x)$ and
$g(x)$ for some fixed $x$) with an approximation
of the self-distances of $f$ and $g$:
%


\begin{theorem}\label{thm:approx}
  Let $\mathsf{A}$ be a type. For any
  $f,f' \in |\mathsf{A} \Rightarrow \real|$ and
  any
  $a,a' \in \mathcal {Q}_{\mathsf{A}
    \Rightarrow \real}$, if
  \begin{itemize}
  \item $|f \cdot x - f' \cdot x|
    \sqsupseteq a \cdot x \cdot b$ for all
    $(x,b,x') \in \rho_{\mathsf{A}}$; and
  \item
    $(f,a',f) \in \rho_{\mathsf{A} \Rightarrow
      \real}$ and
    $(f',a',f') \in \rho_{\mathsf{A} \Rightarrow
      \real}$,
  \end{itemize}
  then
  $(f, a \otimes a', f') \in \rho_{\mathsf{A}
    \Rightarrow \real}$.
\end{theorem}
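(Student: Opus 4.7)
Following the discussion preceding the theorem, the natural route would be to regard the first hypothesis as a ``vertical distance'' witness placing $(f,a,f')$ in $\gamma_{\mathsf{A} \Rightarrow \real}$, and then to compose it with the self-distance $(f',a',f') \in \rho$ via left-transitivity, that is, Proposition~\ref{prop:left-right}(iv) read through the identification $q^{l}_{\mathsf{A} \Rightarrow \real} = \gamma_{\mathsf{A} \Rightarrow \real}$ of Proposition~\ref{prop:rho-gamma}. The hitch is that the second clause in the definition of $\gamma_{\mathsf{A} \Rightarrow \real}$ — that $(f',b,f') \in \rho$ should entail $(f, a \otimes b, f) \in \rho$ — is not immediate from the given hypotheses, because the first hypothesis only bounds $|f \cdot x - f' \cdot x|$ at points $x$ appearing as the first component of some triple in $\rho_{\mathsf{A}}$, and left quasi-reflexivity of $\rho_{\mathsf{A}}$ does not supply a self-distance at the target $x'$. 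I would therefore give a direct proof via the triangle inequality in $\real$, which effectively inlines the same vertical-plus-self-distance pattern without detouring through $\gamma$.

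\textbf{Unpacking.} By the definition of the exponential in $\qqm$, the goal $(f, a \otimes a', f') \in \rho_{\mathsf{A} \Rightarrow \real}$ is equivalent to: for every $(x,c,x') \in \rho_{\mathsf{A}}$ and every $g \in \{f,f'\}$, $(f \cdot x,\, (a \otimes a') \cdot x \cdot c,\, g \cdot x') \in \rho_{\real}$. Since the quantale structure of $\mathcal{Q}_{\mathsf{A} \Rightarrow \real}$ is pointwise and $\otimes$ on $\mathcal{Q}_{\real} = [0,+\infty]$ is ordinary addition, this unfolds numerically to $|f \cdot x - g \cdot x'| \leq a \cdot x \cdot c + a' \cdot x \cdot c$.

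\textbf{Verification and obstacle.} For $g = f$, the hypothesis $(f,a',f) \in \rho_{\mathsf{A} \Rightarrow \real}$ yields $|f \cdot x - f \cdot x'| \leq a' \cdot x \cdot c$, and this is majorised by $a \cdot x \cdot c + a' \cdot x \cdot c$ because $a \cdot x \cdot c \in [0,+\infty]$. For $g = f'$, the Euclidean triangle inequality gives
\[
|f \cdot x - f' \cdot x'|
\;\leq\;
|f \cdot x - f' \cdot x|
\;+\;
|f' \cdot x - f' \cdot x'|,
\]
whose first summand is bounded by $a \cdot x \cdot c$ via the first hypothesis applied to $(x,c,x') \in \rho_{\mathsf{A}}$, and whose second summand is bounded by $a' \cdot x \cdot c$ via $(f',a',f') \in \rho_{\mathsf{A} \Rightarrow \real}$; summing matches the required bound. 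There is ultimately no deep obstacle: the whole argument is a single triangle inequality in $\real$ plus pointwise unfolding of the monoidal and exponential structure of $\qqm$. The only point worth flagging is that the exponential clause in $\qqm$ quantifies over \emph{both} $g = f$ and $g = f'$, which is precisely why the statement demands a self-distance $a'$ for both $f$ and $f'$: the $g = f'$ case instantiates the vertical-plus-self-distance decomposition motivating the theorem, while the $g = f$ case is forced by the exponential definition and consumes the self-distance of $f$.
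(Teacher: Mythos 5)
Your proof is correct, but it takes a genuinely different route from the paper's. You verify the conclusion directly: unfold the exponential clause of $\qqm$, reduce to the numerical inequality $|f \cdot x - g \cdot x'| \leq a \cdot x \cdot c + a' \cdot x \cdot c$ for $g \in \{f,f'\}$, and dispatch the two cases by the self-distance hypotheses plus one Euclidean triangle inequality. The paper instead first establishes $(f,\, a \otimes (\lceil f' \rceil \multimap \lceil f \rceil),\, f') \in \gamma_{\mathsf{A} \Rightarrow \real}$, then applies Proposition~\ref{prop:rho-gamma} (i.e.\ $\gamma = q^{l}$) and the ``$\rho \sqsupseteq \gamma \otimes$ self-$\rho$'' principle to land in $\rho_{\mathsf{A} \Rightarrow \real}$ with weight $a \otimes (\lceil f' \rceil \multimap \lceil f \rceil) \otimes \lceil f' \rceil$, and finally observes that $a' \qleq (\lceil f' \rceil \multimap \lceil f \rceil) \otimes \lceil f' \rceil$ because $a'$ is below both $\lceil f \rceil$ and $\lceil f' \rceil$. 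One remark on your ``obstacle'' paragraph: the difficulty you flag with the second clause of $\gamma_{\mathsf{A} \Rightarrow \real}$ is real for the naive candidate $(f,a,f')$, but the paper sidesteps it by inflating the weight with the correction term $\lceil f' \rceil \multimap \lceil f \rceil$, which makes that clause hold automatically (from $b \qleq \lceil f' \rceil$ one gets $a \otimes (\lceil f' \rceil \multimap \lceil f \rceil) \otimes b \qleq \lceil f \rceil$); so the $\gamma$-route does go through, just not with the weight you tried. What your argument buys is elementarity and independence from Section~6's machinery; what the paper's buys is an illustration of exactly the compositional pattern (vertical $\gamma$-distance composed with a self-distance via left transitivity) that the surrounding section is advertising, and a shape of argument that would survive replacing the codomain $\real$ by an arbitrary type, where no concrete triangle inequality is available.
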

\longv{
  \begin{proof}
    By the definition of
    $\gamma_{\mathsf{A} \Rightarrow \real}$, we
    obtain
    \begin{equation*}
      (f,a \otimes
      (\lceil f' \rceil \multimap \lceil f \rceil),
      f')
      \in \gamma_{\mathsf{A} \Rightarrow \real}.
    \end{equation*}
    Therefore, it follows from
    Proposition~\ref{prop:rho-gamma} that
    \begin{equation*}
      (f,a \otimes
      (\lceil f' \rceil \multimap \lceil f \rceil)
      \otimes \lceil f' \rceil,f')
      \in \rho_{\mathsf{A} \Rightarrow \real}.
    \end{equation*}
    Since
    $(f,a',f) \in \rho_{\mathsf{A} \Rightarrow
      \real}$ and
    $(f',a',f') \in \rho_{\mathsf{A} \Rightarrow
      \real}$, we have
    $a' \sqsubseteq (\lceil f' \rceil \multimap
    \lceil f \rceil) \otimes \lceil f' \rceil$.
    Hence,
    $(f, a \otimes a', f') \in \rho_{\mathsf{A}
      \Rightarrow \real}$.
  \end{proof}
}

\begin{example}\label{eg:id-sin}
  Let $\mathrm{id}_{\mathbb{R}}$ be the identity
  function on $\mathbb{R}$. By the fundamental
  lemma with a simple calculation, we obtain
  $(\mathrm{id}_{\mathbb{R}},a',
  \mathrm{id}_{\mathbb{R}}) \in \rho_{R
    \Rightarrow R}$ and
  $(\sin,a',\sin) \in \rho_{R \Rightarrow R}$
  where $a' \cdot x \cdot b = b$. By
  Theorem~\ref{thm:approx},
  $a \in \mathcal{Q}_{R \Rightarrow R}$ given by
  $a \cdot x \cdot b = |x - \sin(x)|$
  satisfies
  $(\mathrm{id}_{\mathbb{R}},a \otimes a', \sin)
  \in \rho_{R \Rightarrow R}$. To be concrete,
  $(a \otimes a') \cdot x \cdot b = |x - \sin(x)|
  + b$, which means that the distance between
  $x$ and $\sin(y)$ is
  small when $x$ and $y$ are close to $0$.
\end{example}

\begin{remark}
  Due to asymmetry in the definition of the
  exponential $X \Rightarrow Y$ in $\qqm$, it is
  not clear how to capture the \emph{right}
  observational quasi-metrics in a similar manner.
  However, we will see that right observational
  quasi-metrics can be captured by partial metric
  logical relations introduced in the next
  section.
\end{remark}



\section{Partial Metric Logical Relations}
\label{sec:diff-logic-relat}

As discussed in Section 3, the quasi$^2$-metrics $\rho_{\mathsf A}$ are not, in general, partial metrics. 
In this section we introduce a family of differential logical relations $(\eta_{\mathsf A})_{\mathsf A\in \mathsf{Types}}$ that defines a class of partial quasi-metrics over $\LL$.
The fundamental (indeed, the only) difference with respect to the family $\rho_{\mathsf A}$ is, as it may be expected, in the case of the function type.

For any type $\mathsf A$, we define $\eta_{\mathsf A}\subseteq |\mathsf A|\times \mathcal{Q}_{\mathsf A}\times |\mathsf A|$ by induction on $\mathsf A$ as follows:
\begin{align*}
  (x,a,x') \in \eta_{\real}
  &\iff 
    |x-x'|\sqsupseteq a,
  \\
  (f,a,f') \in \eta_{\mathsf A \Rightarrow \mathsf B}
  &\iff
    \text{there are }  a_1, a_{2}\in \mathcal{Q}_{\mathsf A \Rightarrow \mathsf B}
    \text{ such that }
    a_{1} \otimes a_{2}
    \sqsupseteq a
    \text{ and } \\
  &\phantom{{} \iff {}}
    \text{for all }
    (x, b,x') \in \eta_{\mathsf A}, \ 
    (f\cdot x, a_{1} \cdot x \cdot  b,f\cdot x')
    \in \eta_{\mathsf B}
    \text{ and }\\
  &\phantom{{} \iff {}
    \text{for all }
    (x, b,x') \in \eta_{\mathsf A}\ }
    (f\cdot x', a_{2} \cdot x \cdot b,f'\cdot x')
    \in \eta_{\mathsf B},\\
  ((x,y), (a,b),(x',y')) \in
  \eta_{\mathsf A \times \mathsf B}
  &\iff
    (x,a,x')\in  \eta_{\mathsf A}
    \text{ and }
    (y,b,y')\in       \eta_{\mathsf B}.
\end{align*}
The idea of the definition of
$\eta_{\mathsf A \Rightarrow\mathsf B}$ is that if
$(f,a,f') \in \eta_{\mathsf A \Rightarrow
  \mathsf B}$, then $a$ must be larger than or equal to the sum of 
the self-distance of $f$ and of the ``vertical''
distances between $f$ and $f'$.
\shortv{The following result shows that the relations $\eta_{\mathsf A}$ define partial quasi-metrics on all types.}

\longv{
For
$(x,a,x') \in \eta_{\mathsf A \Rightarrow
  \mathsf B}$, we call a pair
$ a_{1}, a_{2} \in
\mathcal{Q}_{\mathsf A \Rightarrow \mathsf B}$
satisfying the condition in the definition of
$(x, a,x') \in \eta_{\mathsf A \Rightarrow
  \mathsf B}$ a \emph{decomposition of
  $(x, a,x')\in \eta_{\mathsf A \Rightarrow
  \mathsf B}$}.
}

\longv{
The following result shows that the relations $\eta_{\mathsf A}$ are $\mathcal Q$-closed.
\begin{lemma}\label{lem:inf}
For any type $\mathsf A $, the relation $\rho_{\mathsf A}$ is $\mathcal Q$-closed.
\end{lemma}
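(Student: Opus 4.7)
The plan is to prove the lemma by structural induction on the type $\mathsf A$, verifying both $\mathcal Q$-closure conditions (downward closure under $\sqsubseteq$ and closure under arbitrary joins) simultaneously at each step. Throughout, I will rely on the fact that the quantale $\mathcal Q_{\mathsf A}$ is itself built inductively (products and exponentials computed pointwise), so that joins in $\mathcal Q_{\mathsf A\times \mathsf B}$ are componentwise and joins in $\mathcal Q_{\mathsf A\Rightarrow \mathsf B}$ are computed pointwise in the monotone-function space.

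For the base case $\mathsf A=\real$, the definition $(x,a,x')\in \rho_{\real}\iff |x-x'|\sqsupseteq a$ makes both properties immediate: downward closure follows from transitivity of $\sqsupseteq$, and for joins, if $|x-x'|\sqsupseteq a_i$ for each $i$, then $|x-x'|$ is an upper bound for $\{a_i\}_{i\in I}$, hence $|x-x'|\sqsupseteq \bigvee_i a_i$. The product case $\mathsf A\times \mathsf B$ is routine: a pair $(a,b)$ belongs to $\rho_{\mathsf A\times \mathsf B}$ at $((x,y),(x',y'))$ precisely when the components belong to $\rho_{\mathsf A}$ and $\rho_{\mathsf B}$ respectively, and since the order and joins in $\mathcal Q_{\mathsf A\times \mathsf B}=\mathcal Q_{\mathsf A}\times \mathcal Q_{\mathsf B}$ are componentwise, both properties follow directly from the induction hypotheses on $\mathsf A$ and $\mathsf B$.

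The interesting case is the function type $\mathsf A\Rightarrow \mathsf B$. Suppose $(f,a,f')\in \rho_{\mathsf A\Rightarrow \mathsf B}$ and $a'\sqsubseteq a$ in $\mathcal Q_{\mathsf A\Rightarrow \mathsf B}$. Since the order is pointwise, $a'\cdot x\cdot b\sqsubseteq a\cdot x\cdot b$ for every $x\in |\mathsf A|$ and $b\in \mathcal Q_{\mathsf A}$. For any $(x,b,x')\in \rho_{\mathsf A}$ we then have $(f\cdot x,a\cdot x\cdot b,f'\cdot x')\in \rho_{\mathsf B}$ and $(f\cdot x,a\cdot x\cdot b,f\cdot x')\in \rho_{\mathsf B}$, and the induction hypothesis (downward closure of $\rho_{\mathsf B}$) replaces $a\cdot x\cdot b$ by $a'\cdot x\cdot b$, giving $(f,a',f')\in \rho_{\mathsf A\Rightarrow \mathsf B}$. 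For joins, given a family $(f,a_i,f')\in \rho_{\mathsf A\Rightarrow \mathsf B}$ for $i\in I$, set $a=\bigvee_i a_i$. Because $\mathcal Q_{\mathsf A\Rightarrow \mathsf B}$ is a pointwise monotone-function quantale, $a\cdot x\cdot b=\bigvee_i(a_i\cdot x\cdot b)$ for every $x$ and $b$; since each $(f\cdot x,a_i\cdot x\cdot b,f'\cdot x')\in \rho_{\mathsf B}$ (and similarly for $f\cdot x'$), the induction hypothesis (join closure of $\rho_{\mathsf B}$) yields $(f\cdot x,a\cdot x\cdot b,f'\cdot x')\in \rho_{\mathsf B}$ and $(f\cdot x,a\cdot x\cdot b,f\cdot x')\in \rho_{\mathsf B}$, which is exactly what is required for $(f,a,f')\in \rho_{\mathsf A\Rightarrow \mathsf B}$.

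The only step that requires any care is verifying that the pointwise join of monotone functions is still monotone and that it genuinely computes the supremum in $\mathcal Q_{\mathsf A}\rightarrowtriangle \mathcal Q_{\mathsf B}$. This is a standard fact about function spaces of complete lattices, so I expect no real obstacle; the induction essentially propagates $\mathcal Q$-closure along the type structure in the most direct way.
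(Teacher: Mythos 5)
Your induction is correct for the relation $\rho_{\mathsf A}$ as the statement literally names it: downward closure and sup-closure both reduce pointwise to the induction hypothesis at $\mathsf B$, because the exponential clause of $\rho_{\mathsf A\Rightarrow\mathsf B}$ is a universally quantified condition on the values $a\cdot x\cdot b$, and the order and joins of $\mathcal Q_{\mathsf A\Rightarrow\mathsf B}$ are computed pointwise. The paper omits a proof of this lemma, but your argument is the natural one and I see no gap in it for $\rho$.

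Be aware, however, that the lemma sits in Section 7, is introduced by the sentence ``the relations $\eta_{\mathsf A}$ are $\mathcal Q$-closed'', and is immediately used to conclude that the set of decompositions of an element of $\eta_{\mathsf A}$ is a complete lattice; the occurrence of $\rho_{\mathsf A}$ in the statement is almost certainly a typo for $\eta_{\mathsf A}$. For $\eta_{\mathsf A\Rightarrow\mathsf B}$ the function clause is \emph{existential}: one must exhibit $a_1,a_2$ with $a_1\otimes a_2\sqsupseteq a$ satisfying the two conditions. Downward closure is still immediate (reuse the same decomposition), but sup-closure needs an idea your proof does not contain: given a family $(f,a_i,f')\in\eta_{\mathsf A\Rightarrow\mathsf B}$ with decompositions $(a_1^i,a_2^i)$, take $a_1=\bigvee_i a_1^i$ and $a_2=\bigvee_i a_2^i$; the two relational conditions for $(a_1,a_2)$ follow from the induction hypothesis (sup-closure of $\eta_{\mathsf B}$), and $a_1\otimes a_2\sqsupseteq\bigvee_i a_i$ follows from monotonicity of $\otimes$, since $a_i\sqsubseteq a_1^i\otimes a_2^i\sqsubseteq a_1\otimes a_2$ for each $i$. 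This combining-of-decompositions step is also exactly what makes the subsequent corollary (decompositions form a complete lattice) true, so if the intended relation is $\eta$, your proof addresses the wrong clause at function types and must be supplemented as above.
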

}
\longv{An immediate consequence of the lemma is the following:
\begin{corollary}
For all $(x, a, x')\in \eta_{\mathsf  A}$, the set of decompositions of $(x, a, x')\in \eta_{\mathsf  A}$ is a complete lattice.
\end{corollary}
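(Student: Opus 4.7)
The plan is to derive the corollary as a direct consequence of the $\mathcal{Q}$-closure established in Lemma~\ref{lem:inf}. Specifically, I would show that the set $S$ of decompositions of a fixed $(x, a, x') \in \eta_{\mathsf{A} \Rightarrow \mathsf{B}}$ is closed under arbitrary (non-empty) joins taken componentwise in the ambient complete lattice $\mathcal{Q}_{\mathsf{A} \Rightarrow \mathsf{B}} \times \mathcal{Q}_{\mathsf{A} \Rightarrow \mathsf{B}}$, and then invoke a standard order-theoretic argument to conclude the complete-lattice structure.

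For the closure step, I would take an arbitrary family $\{(a_1^i, a_2^i)\}_{i \in I}$ of decompositions and consider the pair $(A_1, A_2)$ with $A_j := \bigvee_i a_j^i$ computed pointwise in the function quantale. Verifying that $(A_1, A_2)$ is again a decomposition splits into two parts. The product condition $A_1 \otimes A_2 \sqsupseteq a$ is immediate from the monotonicity of $\otimes$ in each argument: for any single $i \in I$, $A_1 \otimes A_2 \sqsupseteq a_1^i \otimes a_2^i \sqsupseteq a$. The two relation conditions, that $(f \cdot x, A_1 \cdot x \cdot b, f \cdot x') \in \eta_{\mathsf{B}}$ and $(f \cdot x', A_2 \cdot x \cdot b, f' \cdot x') \in \eta_{\mathsf{B}}$ for every $(x, b, x') \in \eta_{\mathsf{A}}$, are precisely where Lemma~\ref{lem:inf} intervenes: the values $A_j \cdot x \cdot b = \bigvee_i (a_j^i \cdot x \cdot b)$ are joins (in $\mathcal{Q}_{\mathsf{B}}$) of elements that already witness membership in $\eta_{\mathsf{B}}$, and $\mathcal{Q}$-closure of $\eta_{\mathsf{B}}$ lifts these witnesses to their join.

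Non-emptiness of $S$ is granted by the hypothesis $(x, a, x') \in \eta_{\mathsf{A} \Rightarrow \mathsf{B}}$, so $\bigvee S$ is itself a decomposition, providing the $\sqsubseteq$-top of $S$. The complete-lattice structure is then obtained by defining the meet of any $T \subseteq S$ as the join (inside $S$) of the set of its common lower bounds in $S$. The main subtle point, which I expect to be the principal technical obstacle, is establishing that such sets of lower bounds are never empty; this requires careful use of the pointwise structure of the constraints on $a_1, a_2$, together with divisibility of the underlying quantale $\mathcal{Q}$ which allows one to factorize distances canonically and thereby construct a minimal witness for each subset.
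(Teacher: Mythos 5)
The first half of your plan---closure of the set $S$ of decompositions under arbitrary non-empty componentwise joins---is exactly the content the paper extracts from Lemma~\ref{lem:inf}: the product condition $A_1\otimes A_2\sqsupseteq a$ follows by monotonicity of $\otimes$ from any single member of the family, and the two relational conditions are preserved because $\eta_{\mathsf B}$ is closed under joins of witnesses. This part is correct, and it already delivers the only consequence the paper actually uses downstream, namely the existence of a \emph{greatest} decomposition (invoked in the proof of the transitivity proposition that follows).

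The gap is in the second half. Passing from ``closed under non-empty joins, with a top'' to ``complete lattice'' by defining the meet of $T\subseteq S$ as the join of its set of lower bounds in $S$ requires every such set of lower bounds to be non-empty; taking $T=S$, this is equivalent to $S$ having a least element. You flag this yourself as the principal obstacle and defer to divisibility, but you give no argument, and the route does not obviously go through: the componentwise meet of two decompositions need not satisfy the constraint $a_1\otimes a_2\sqsupseteq a$. Concretely, in the Lawvere quantale that constraint reads $a_1+a_2\leq a$ pointwise, and if $(a_1,a_2)$ and $(a_1',a_2')$ both satisfy it, the pair of componentwise maxima generally does not; when each relational condition is individually satisfiable with value $0$ (e.g.\ for constant functions, cf.\ Example~\ref{eg:cube}) one obtains decompositions of the shape $(a,0)$ and $(0,a)$ with no common lower bound inside $S$ at all. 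So either the meet must be produced by a genuinely different, non-componentwise construction, or the conclusion should be weakened to the complete join-semilattice structure---which is all the subsequent development needs. As written, your proof establishes only the join half of the claim.
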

}
\longv{The following result shows that the relations $\eta_{\mathsf A}$ define partial quasi-metrics on all types.}

\begin{proposition}
  For all types $\mathsf A$:
  \begin{itemize}
  \item If
    $(x, a,x') \in
    \eta_{\mathsf A}$, then
    $(x, a,x) \in
    \eta_{\mathsf A}$.
  \item If $(x, a,z) \in \eta_{\mathsf A}$ and
    $(z, b,y) \in \eta_{\mathsf A}$, then there
    exists $c_1, c_2 \in \mathcal{Q}_{\mathsf{A}}$
    such that
    $a \otimes b \sqsubseteq c_{1} \otimes c_{2}$,
    $(z, c_{1}, z) \in \eta_{\mathsf A}$ and
    $(x,c_{2},y) \in \eta_{\mathsf{A}}$. In
    particular,
    $(x, a \otimes (c_{1} \multimap b), y) \in
    \eta_{\mathsf A}$.
%
  \end{itemize}
\end{proposition}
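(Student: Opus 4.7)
I would prove both statements by simultaneous induction on the type $\mathsf{A}$.

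\textbf{Base case} $\mathsf{A} = \real$. Quasi-reflexivity is immediate because $|x - x| = 0$, which is the top $\top$ of the Lawvere quantale, so $(x, a, x) \in \eta_{\real}$ for every $a$. For the transitivity clause, I take $c_1 = \top$ (so $(z, c_1, z) \in \eta_{\real}$ trivially) and $c_2 = a \otimes b$, using the triangle inequality $|x - y| \leq |x - z| + |z - y|$ which in quantale terms reads $a \otimes b \sqsubseteq |x - y|$, hence $(x, a \otimes b, y) \in \eta_{\real}$. The ``in particular'' clause then follows because $c_1 \multimap b = b$ when $c_1 = \top$.

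\textbf{Product case} $\mathsf{A} = \mathsf{B} \times \mathsf{C}$. The decompositions happen componentwise, so I apply the inductive hypothesis in parallel on each component.

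\textbf{Function case} $\mathsf{A} = \mathsf{B} \Rightarrow \mathsf{C}$. This is the substantive case. For quasi-reflexivity, given $(f, a, g) \in \eta_{\mathsf{B} \Rightarrow \mathsf{C}}$ with decomposition $(a_1, a_2)$, I claim $(f, a, f) \in \eta_{\mathsf{B} \Rightarrow \mathsf{C}}$ with the \emph{same} decomposition. The first clause $(f \cdot x, a_1 \cdot x \cdot d, f \cdot x') \in \eta_{\mathsf{C}}$ is unchanged; the second clause $(f \cdot x', a_2 \cdot x \cdot d, f \cdot x') \in \eta_{\mathsf{C}}$ is obtained by applying the quasi-reflexivity IH at type $\mathsf{C}$ to $(f \cdot x', a_2 \cdot x \cdot d, g \cdot x') \in \eta_{\mathsf{C}}$.

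For transitivity, suppose $(f, a, h)$ and $(h, b, g)$ are in $\eta_{\mathsf{B} \Rightarrow \mathsf{C}}$ with decompositions $(a_1, a_2)$ and $(b_1, b_2)$. I take $c_1 = b$ with decomposition $(b_1, b_2)$: the first clause for $(h, b, h)$ is already given, and the second clause follows by applying the quasi-reflexivity IH to $(h \cdot x', b_2 \cdot x \cdot d, g \cdot x') \in \eta_{\mathsf{C}}$. To build $c_2$, for each $(x, d, x') \in \eta_{\mathsf{B}}$ I apply transitivity IH at $\mathsf{C}$ to the pair $(f \cdot x', a_2 \cdot x \cdot d, h \cdot x')$ and $(h \cdot x', b_2 \cdot x \cdot d, g \cdot x')$, obtaining $e_1(x,d), e_2(x,d) \in \mathcal{Q}_{\mathsf{C}}$ whose existence gives, via the ``in particular'' conclusion, the relation $(f \cdot x', (a_2 \cdot x \cdot d) \otimes (e_1(x,d) \multimap b_2 \cdot x \cdot d), g \cdot x') \in \eta_{\mathsf{C}}$. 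I then set $c_2$ to have decomposition $(a_1, d_2)$ with $d_2 \cdot x \cdot d = (a_2 \cdot x \cdot d) \otimes (e_1(x,d) \multimap (b_2 \cdot x \cdot d))$. The quantale inequality $c_1 \otimes c_2 \sqsupseteq a \otimes b$ then holds pointwise from the inequalities $a_1 \otimes a_2 \sqsupseteq a$ and $b_1 \otimes b_2 \sqsupseteq b$ together with the choice of $d_2$.

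\textbf{Main obstacle.} The delicate point is ensuring both that $d_2$ is monotone in its quantale argument and that the choice of $e_1(x,d)$ is coherent enough to yield the required global quantale inequality. My plan is to make the choice of $e_1(x,d)$ canonical by setting it to be $b_1 \cdot x' \cdot d$ (or more precisely the self-distance of $h \cdot x'$ induced by the decomposition of $b$, using quasi-reflexivity of the already-established clauses), and then to exploit divisibility of $\mathcal{Q}$ together with the monotonicity of the underlying decompositions $a_2, b_2$ to transfer monotonicity to $d_2$. The ``in particular'' conclusion at type $\mathsf{B} \Rightarrow \mathsf{C}$ would finally follow from the fact that $\eta_{\mathsf{B} \Rightarrow \mathsf{C}}$ is downward closed in its middle argument (using the structural closure properties of the $\eta$-family), so that $(f, a \otimes (c_1 \multimap b), g) \in \eta_{\mathsf{B} \Rightarrow \mathsf{C}}$ follows from $(f, c_2, g) \in \eta_{\mathsf{B} \Rightarrow \mathsf{C}}$ combined with $a \otimes (c_1 \multimap b) \sqsubseteq c_2$, itself a consequence of residuation applied to $c_1 \otimes c_2 \sqsupseteq a \otimes b$.
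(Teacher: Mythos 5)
Your quasi-reflexivity argument, your base and product cases, and your two membership verifications in the function case (that $(h,b,h)\in\eta_{\mathsf B\Rightarrow\mathsf C}$ and that $(f,c_2,g)\in\eta_{\mathsf B\Rightarrow\mathsf C}$) are all correct. The gap is the quantale inequality $a\otimes b\sqsubseteq c_1\otimes c_2$, which your choice $c_1=b$ cannot deliver. Unfolding your definitions pointwise, $c_1\otimes c_2 = b\otimes a_1\otimes a_2\otimes\bigl(e_1(x,d)\multimap (b_2\cdot x\cdot d)\bigr)$. In the Lawvere quantale, where $\sqsubseteq$ is $\geq$ and $\otimes$ is $+$, the requirement reads $c_1+c_2\leq a+b$, but you get $c_1+c_2= b+a_1+a_2+\max(b_2-e_1,0)$; since $a_1+a_2$ may equal $a$ exactly, this is $a+b+\max(b_2-e_1,0)$. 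Already one level above the base type the recursive call returns $e_1=0$ (the self-distance component produced at $\real$ is the top element), so the surplus is exactly $b_2$, which is nonzero whenever $h\neq g$. The root cause is double counting: the vertical component $b_2$ of $b$ is charged once inside $c_1=b$ and a second time, in residuated form $e_1\multimap b_2$, inside $c_2$.

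The paper's proof avoids this by never taking $c_1=b$: it distributes the four budget items $a_1,a_2,b_1,b_2$ exactly once across the two outputs, taking $c_1$ with decomposition $(b_1,k)$ and $c_2$ with decomposition $(a_1,l)$, where $(k\cdot x\cdot d,\,l\cdot x\cdot d)$ is the pair returned by the inductive hypothesis on $(a_2\cdot x\cdot d,\,b_2\cdot x\cdot d)$ --- i.e., your $(e_1(x,d),e_2(x,d))$ --- used \emph{directly} as the second components of the two decompositions, rather than folded into the residuated ``in particular'' form. Then $c_1\otimes c_2 = b_1\otimes k\otimes a_1\otimes l\sqsupseteq a_1\otimes a_2\otimes b_1\otimes b_2\sqsupseteq a\otimes b$, using the inductive inequality $k\otimes l\sqsupseteq a_2\otimes b_2$. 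This reorganisation also dissolves your stated ``main obstacle'': the paper strengthens the induction hypothesis to the existence of a \emph{map} $\varphi_{\mathsf A}\colon\mathcal Q_{\mathsf A}\times\mathcal Q_{\mathsf A}\to\mathcal Q_{\mathsf A}\times\mathcal Q_{\mathsf A}$ producing $(c_1,c_2)$ from $(a,b)$, and applies it to the \emph{greatest} decompositions of $a$ and $b$ (which exist because the relations $\eta_{\mathsf A}$ are $\mathcal Q$-closed), so that $k$ and $l$ are defined canonically and monotonicity is inherited from $\varphi_{\mathsf C}$. I would rework your function case along these lines; your membership verifications then carry over essentially unchanged.
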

\longv{
  \begin{proof}
    We only prove the second, more delicate,
    statement. By induction on $\mathsf A$, we
    show that there is a map
    $\varphi_{\mathsf A} \colon
    \mathcal{Q}_{\mathsf{A}} \times
    \mathcal{Q}_{\mathsf{A}} \Rightarrow
    \mathcal{Q}_{\mathsf{A}} \times
    \mathcal{Q}_{\mathsf{A}}$ such that if
    $(x,a,z) \in \eta_{\mathsf A}$ and
    $(z,b,y) \in \eta_{\mathsf A}$, then
    $(c_1,c_2) = \varphi_{\mathsf A}(a, b)$
    satisfies the required conditions. For the
    base case, we define
    $ \varphi_{\real}(a, b) = (0, a + b)$. For the
    case
    $\mathsf A = (\mathsf{B} \Rightarrow \mathsf{C})$,
    let
    $(a_{1},a_{2}) \in \mathcal{Q}_{\mathsf{B}
      \Rightarrow \mathsf{C}} \times
    \mathcal{Q}_{\mathsf{B} \Rightarrow
      \mathsf{C}}$ and
    $(b_{1},b_{2}) \in \mathcal{Q}_{\mathsf{B}
      \Rightarrow \mathsf{C}} \times
    \mathcal{Q}_{\mathsf{B} \Rightarrow
      \mathsf{C}}$ be the greatest decompositions
    of $a$ and $b$, respectively. We define
    $\varphi_{\mathsf{B} \Rightarrow
      \mathsf{C}}(a,b)$ by
    \begin{equation*}
      \varphi_{\mathsf{B} \Rightarrow
      \mathsf{C}}(a,b)
      = (a_{1}
      \otimes k,
      b_{1}
      \otimes
      l)
    \end{equation*}
    where
    $(k \cdot w \cdot d,l \cdot w \cdot d) =
    \varphi_{\mathsf C}
    (a_{2} \cdot w \cdot d,b_{2} \cdot w \cdot d)$.
    Below, we write
    $(c_1,c_2)$ for $\varphi_{\mathsf{B}
    \Rightarrow \mathsf{C}}(a,b)$.
    Let us
    check that $\varphi_{\mathsf B \Rightarrow \mathsf C}$
    is a witness.
    \begin{itemize}
    \item We first show that
      $(z,c_1,z)\in \eta_{\mathsf B \Rightarrow
        \mathsf C}$. For any
      $(w,d,w')\in \eta_{\mathsf B}$, we have
      \begin{align}
        &(x \cdot w',
          a_{2}\cdot w \cdot d,
          z \cdot w')\in \eta_{\mathsf{C}},
          \label{aeq:nn1} \\
        &(z \cdot w,
          b_{1}\cdot w \cdot d,
          z \cdot w')\in \eta_{\mathsf{C}},
          \label{aeq:nn2} \\
        &(z \cdot w',
          b_{2}\cdot w \cdot d,
          y \cdot w')\in \eta_{\mathsf{C}}.
          \label{aeq:nn3}
      \end{align}
      Then, by applying the induction hypothesis
      to \eqref{aeq:nn1} and \eqref{aeq:nn3}, we
      obtain
      \begin{equation}
        (z \cdot w',
        k \cdot w \cdot d,
        z \cdot w')
        \in \eta_{\mathsf C}.
        \label{aeq:nn4}
      \end{equation}
      By \eqref{aeq:nn2} and \eqref{aeq:nn4}, we
      obtain
      $(z,c_{1},z) \in \eta_{\mathsf{B}
        \Rightarrow \mathsf{C}}$.
    \item We next show that
      $(x,c_2 ,y) \in \eta_{\mathsf{B} \Rightarrow
        \mathsf{C}}$. For any
      $(w,d,w')\in
      \eta_{\mathsf B}$, we have
      \begin{align}
        &(x \cdot w, a_{1} \cdot w \cdot d,
          x \cdot w') \in \eta_{\mathsf C},
          \label{aeq:ml1} \\
        &(x \cdot w',
          a_{2} \cdot w \cdot d,
          z \cdot w') \in \eta_{\mathsf C},
          \label{aeq:ml2} \\
        &(z \cdot w',
          b_{2} \cdot w \cdot d,
          y \cdot w')
          \in \eta_{\mathsf C}.
          \label{aeq:ml3}
      \end{align}
      By applying the induction hypothesis to
      \eqref{aeq:ml2} and \eqref{aeq:ml3}, we
      obtain
      \begin{equation}
        (x \cdot  w',
        l \cdot w \cdot d,
        y \cdot w') \in \eta_{\mathsf C}.
        \label{aeq:ml4}
      \end{equation}
      By \eqref{aeq:ml1} and \eqref{aeq:ml4},
      we obtain
      $(x,c_{2},y) \in \eta_{\mathsf{B}
      \Rightarrow \mathsf{C}}$.
    \item Finally, we have
      \begin{align*}
        (c_{1} \cdot w \cdot d)
        \otimes
        (c_{2} \cdot w \cdot d)
        &=
          (a_{1} \cdot w \cdot d)
          \otimes
          (k \cdot w \cdot d)
          \otimes
          (b_{1} \cdot w \cdot d)
          \otimes
          (l \cdot w \cdot d)
        \\
        &\sqsupseteq
          (a_{1} \cdot w \cdot d)
          \otimes
          (a_{2} \cdot w \cdot d)
          \otimes
          (b_{1} \cdot w \cdot d)
          \otimes
          (b_{2} \cdot w \cdot e)
        \\
        &\sqsupseteq
          (a \cdot w \cdot d)
          \otimes
          (b \cdot w \cdot d).
      \end{align*}
    \end{itemize}
  \end{proof}
}

By adapting the definition of $\gamma_{\mathsf A}$ from Section 6, we can capture the left observational quasi-metrics $q_{\mathsf A}^l$ associated with the partial quasi-metrics $\eta_{\mathsf A}$.
Moreover, by Proposition 
\ref{prop:quasipartial}, the \emph{right} observational quasi-metric $q_{\mathsf A}^r$ satisfies $
(x,\widehat{\eta}_{\mathsf A}(x,x)\multimap a,y)\in q_{\mathsf A}^r \ \iff (x,a,y)\in \eta_{\mathsf A}$.
%
Thanks to this, we can capture this quasi-metrics via the logical relations $\delta_{\mathsf A}\subseteq |\mathsf A|\times \mathcal{Q}_{\mathsf A}\times |\mathsf A|$ defined by induction on $\mathsf A$, letting the base and product case being defined as for $\gamma_{\mathsf A}$, and the function case being as follows:
\begin{align*}
  (f,a,f') \in
  \delta_{\mathsf{A} \Rightarrow \mathsf{B}}
  &\iff \text{for all }
    (f,b,f) \in
    \eta_{\mathsf{A}\Rightarrow \mathsf{B}},\
    (f,a \otimes b, f') \in
    \eta_{\mathsf{A} \Rightarrow \mathsf{B}}.
\end{align*}

\begin{proposition}
For every type $\mathsf A$, we have $q_{\mathsf A}^r=\delta_{\mathsf A}$.
\end{proposition}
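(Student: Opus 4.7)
The plan is to leverage Proposition~\ref{prop:quasipartial} to reduce the statement to a direct calculation. Indeed, by the previous proposition, $\eta_{\mathsf A}$ is a partial quasi-metric for every type $\mathsf A$, and so Proposition~\ref{prop:quasipartial} gives
\begin{equation*}
\widehat{q^{r}_{\mathsf A}}(x,y) \;=\; \widehat{\eta}_{\mathsf A}(x,x) \multimap \widehat{\eta}_{\mathsf A}(x,y).
\end{equation*}
By commutativity of $\otimes$ and the residuation adjunction, this is equivalent to: $(x,a,y)\in q^{r}_{\mathsf A}$ iff $(x, a\otimes \widehat{\eta}_{\mathsf A}(x,x), y)\in \eta_{\mathsf A}$. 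The strategy is then to establish, by induction on $\mathsf A$, that the very same characterization holds for $\delta_{\mathsf A}$.

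Before the induction, I would record a preliminary $\mathcal Q$-closure lemma for $\eta_{\mathsf A}$ (analogous to the one mentioned for $\rho$), which in particular guarantees that the self-distance is attained: $(x,\widehat{\eta}_{\mathsf A}(x,x),x)\in \eta_{\mathsf A}$. At function type this requires noting that the set of decompositions $(a_{1},a_{2})$ of an element $a$ is closed under componentwise suprema, so that one can choose a largest decomposition and take suprema of families pointwise.

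For the induction, the base case $\real$ is immediate: $\widehat{\eta}_{\real}(x,x)=1_{\mathcal Q}$, so $a\otimes \widehat{\eta}_{\real}(x,x)=a$, and both $\delta_{\real}$ and $q^{r}_{\real}$ reduce to $|x-x'|\sqsupseteq a$. The product case is a straightforward componentwise computation using that $\widehat{\eta}_{\mathsf{A}\times\mathsf{B}}((x,y),(x,y))=(\widehat{\eta}_{\mathsf A}(x,x),\widehat{\eta}_{\mathsf B}(y,y))$. The function case is the heart of the argument: unfolding the definition, $(f,a,f')\in\delta_{\mathsf A\Rightarrow\mathsf B}$ means that for all $(f,b,f)\in\eta_{\mathsf A\Rightarrow\mathsf B}$ one has $(f,a\otimes b,f')\in\eta_{\mathsf A\Rightarrow\mathsf B}$. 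Since $\widehat{\eta}_{\mathsf A\Rightarrow\mathsf B}(f,f)$ is the largest such $b$ (by $\mathcal Q$-closure), and since $\eta$ is monotone in its $\mathcal Q$-component, this universal quantification collapses to the single condition $(f, a\otimes \widehat{\eta}_{\mathsf A\Rightarrow\mathsf B}(f,f),f')\in \eta_{\mathsf A\Rightarrow\mathsf B}$. By the opening observation, this is exactly $(f,a,f')\in q^{r}_{\mathsf A\Rightarrow\mathsf B}$, closing the induction.

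The main obstacle is the $\mathcal Q$-closure lemma at function types: one must show that the union of a family of valid decompositions can itself be rearranged into a single decomposition of the join. This is where the lattice structure on decompositions is essential, and amounts to taking pointwise suprema in both slots and checking that the resulting pair still witnesses membership in $\eta_{\mathsf A\Rightarrow \mathsf B}$. Once this is in hand, the rest of the argument is a transparent consequence of Proposition~\ref{prop:quasipartial}.
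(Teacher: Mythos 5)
Your proposal is correct and follows essentially the same route as the paper: both reduce the claim to Proposition~\ref{prop:quasipartial} (using that $\eta_{\mathsf A}$ is a partial quasi-metric), rewrite membership in $q^{r}_{\mathsf A}$ via residuation as $a\otimes\widehat{\eta}_{\mathsf A}(f,f)\sqsubseteq\widehat{\eta}_{\mathsf A}(f,f')$, and then observe that this single condition is equivalent to the universal quantification over self-distances $(f,b,f)\in\eta_{\mathsf A}$ defining $\delta_{\mathsf A}$. The only difference is presentational: you make explicit the $\mathcal Q$-closure of $\eta_{\mathsf A}$ (needed so that the supremal self-distance is itself attained), which the paper isolates in a separate lemma rather than inside this proof.
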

\longv{\begin{proof}
The only interesting case is that of a function type $\mathsf A=\mathsf B\Rightarrow \mathsf C$.
By Proposition 
\ref{prop:quasipartial}, $(f,a,f')\in q_{\mathsf A}^r$ holds iff for all $a\sqsubseteq \widehat{q}_{\mathsf A}^r(f,f')=\widehat{q}_{\mathsf A}^r(f,f)\multimap \widehat{q}_{\mathsf A}^r(f,f')$, which is in turn equivalent to $a\otimes \widehat{q}_{\mathsf A}^r(f,f)\sqsubseteq \widehat{q}_{\mathsf A}^r (f,f')$. 
This implies then that $(f,a,f')\in q_{\mathsf A}^r$ iff for all $(f,b,f)\in q_{\mathsf A}^r$ (i.e.~for all $b\sqsubseteq \widehat{q}^r_{\mathsf A}(f,f)$), $(f,a\otimes b,f')\in q_{\mathsf A}^r$ (i.e.~$a\otimes b\sqsubseteq \widehat{q}^r_{\mathsf A}(f,f')$), that is, iff $(f,a,f')\in \delta_{\mathsf A}$.
\end{proof}
}


\section{A Quantitative Equational Theory}
\label{sec:synt-constr-mmm}

The goal of this section is to introduce an
equational theory to formally deduce differences
between programs. To this end, we first give a
syntactic presentation of differential logical
relations internally to the language of $\LL$, and
then introduce a deductive system to deduce
program differences.

While our idea is inspired by the quantitative
equational theories of Mardare et
al.~\cite{Plotk}, it differs in two respects:
first, distances need not be real numbers, but are
presented as arbitrary $\LL$-programs; second,
non-expansiveness is replaced by the condition
corresponding to the fundamental lemma of
differential logical relations.

\subparagraph{Preparation}

Before we go into construction, we prepare some
syntactic counter parts of constructions
in the fundamental lemma for $\qqm$. We first
inductively define a type $\prt{\mathsf{A}}$ by
\begin{equation*}
  \prt{\real} = \real,
  \qquad
  \prt{(\mathsf{A} \Rightarrow \mathsf{B})}
  = \mathsf{A} \Rightarrow
  \prt{\mathsf{A}} \Rightarrow
  \prt{\mathsf{B}},
  \qquad
  \prt{(\mathsf{A} \times \mathsf{B})}
  = \prt{\mathsf{A}} \times \prt{\mathsf{B}}.
\end{equation*}
This is
a syntactic counter part of quantales
$\mathcal{Q}_{\mathsf{A}}$. The reason that we
define $\prt{\real}$ to be $\real$ even though
$\prt{\real}$ should be a type of non-negative
extended real numbers is to keep the syntax of
$\LL$ simple. It is possible to extend $\LL$ with
a type $\real_{\geq 0}^{\infty}$ of non-negative
extended real numbers and types
$\mathsf{A} \rightarrowtriangle \mathsf{B}$ of
monotone functions. We next give syntactic counter
part of $\fsem{\mathsf{t}}$. For this purpose, we
suppose that there is a partition
$\mathsf{Var} = \mathsf{Var}_{0} \cup
\mathsf{Var}_{1}$, i.e., there are mutually
disjoint subsets
$\mathsf{Var}_{0},\mathsf{Var}_{1} \subseteq
\mathsf{Var}$ such that $\mathsf{Var}$ is equal to
$\mathsf{Var}_{0} \cup \mathsf{Var}_{1}$.
Furthermore, we suppose that there is a bijection
$\dot{(-)} \colon \mathsf{Var}_{0} \to
\mathsf{Var}_{1}$. In the sequel, we denote
variables in $\mathsf{Var}_{1}$ by dotted symbols
$\dot{\mathsf{x}}, \dot{\mathsf{y}},
\dot{\mathsf{z}},\ldots$, and we denote variables
in $\mathsf{Var}_{0}$ by
$\mathsf{x},\mathsf{y},\mathsf{z},\ldots$. Based
on this convention, for a typing context
$\mathsf{\Gamma} = (\mathsf{x}_{1}:\mathsf{A}_{1},
\ldots, \mathsf{x}_{n}:\mathsf{A}_{n})$, we define
a typing context $\prt{\mathsf{\Gamma}}$ by
$\prt{\mathsf{\Gamma}} =
(\dot{\mathsf{x}}_{1}:\prt{\mathsf{A}}_{1},
\ldots,
\dot{\mathsf{x}}_{n}:\prt{\mathsf{A}}_{n})$. Now,
for a term
$\mathsf{\Gamma} \vdash \mathsf{t} : \mathsf{A}$,
we define a term
$\mathsf{\Gamma},\prt{\mathsf{\Gamma}} \vdash
\prt{\mathsf{t}}:\mathsf{A}$, which we call the
\emph{derivative} of $\mathsf{t}$, in
Figure~\ref{fig:diff-term}. The definition of
$\prt{\mathsf{t}}$ corresponds to the definition
of $\fsem{\mathsf{t}}$, and we can find the same
construction in \cite{DLG21}.



\begin{figure}
  \fbox{\begin{minipage}{.96\textwidth}
      \footnotesize \centering
      \begin{equation*}
        \begin{array}{c}
          \prt{\mathsf{x}}
          = \dot{\mathsf{x}}
          \qquad
          \prt{\underline{r}}
          = \underline{0}
          \qquad
          \prt{(\mathsf{t} \, \mathsf{s})}
          =
          \prt{\mathsf{t}} \,
         \mathsf{s}\,
         \prt{\mathsf{s}}
          \qquad
          \prt{(\lam{\mathsf{x}}{
          \mathsf{A}}{\mathsf{t}})}
          = 
          {
          \lam{\mathsf{x}}{\mathsf{A}}
          \lam{\dot{\mathsf{x}}}{
          \prt{\mathsf{A}}}
          {
          \prt{\mathsf{t}}}}
          \qquad
          \prt{\pair{\mathsf{t}}{\mathsf{s}}}
          = \pair{\prt{\mathsf{t}}}{\prt{\mathsf{s}}}
          \\[7pt]
          \prt{(\fst(\mathsf{t}))}
          = \fst(\prt{\mathsf{t}})
          \qquad
          \prt{(\snd(\mathsf{t}))}
          = \snd(\prt{\mathsf{t}})
          \qquad
          \prt{(\phi
          (\mathsf{t}_{1},\ldots,\mathsf{t}_{n}))}
          = \phi^{\mathrm{d}}
          (\mathsf{t}_{1},
          \ldots,
          \mathsf{t}_{n},
          \prt{\mathsf{t}_{1}},
          \ldots,
          \prt{\mathsf{t}_{n}})
        \end{array}
      \end{equation*}
    \end{minipage}}
  \caption{Derivative of Term}
  \label{fig:diff-term}
\end{figure}

\subparagraph*{Syntactic Differential Logical Relations}
\label{sec:defin-diff-logic}

By adopting the structure of $\qqm$, we
define a type-indexed family
$\{\dlog_{\mathsf{A}} \subseteq \cterm{\mathsf{A}}
\times \cterm{\prt{\mathsf{A}}} \times
\cterm{\mathsf{A}}\}_{\mathsf{A} \in \type}$ of
ternary predicates as follows:
\begin{align*}
  (\mathsf{t},\mathsf{a},\mathsf{t}')
  \in \dlog_{\real}
  &\iff
    \text{there are }
    r,r' \in \mathbb{R}
    \text{ and }
    s \in [0,+\infty]
    \text{ such that }
    |r - r'| \sqsupseteq s
    \text{ and } \\
  &\mathrel{\phantom{\iff}}
    \vdash \mathsf{t} = \underline{r}:\real
    \text{ and }
    \vdash \mathsf{a} = \underline{s}:\real
    \text{ and }
    \vdash \mathsf{t}' = \underline{r}':\real
    , \\
  (\mathsf{t},\mathsf{a},\mathsf{t}')
  \in \dlog_{\mathsf{A}
  \Rightarrow \mathsf{B}}
  &\iff
    \text{for any }
    (\mathsf{s},\mathsf{b},\mathsf{s}')
    \in \dlog_{\mathsf{A}},
    (
    \mathsf{t} \, \mathsf{s},
    \mathsf{a} \, \mathsf{s} \, \mathsf{b},
    \mathsf{t}' \, \mathsf{s}'
    )
    \in \dlog_{\mathsf{B}}
    \text{ and }
    (
    \mathsf{t} \, \mathsf{s},
    \mathsf{a} \, \mathsf{s} \, \mathsf{b},
    \mathsf{t} \, \mathsf{s}'
    )
    \in \dlog_{\mathsf{B}}
    , \\
  (\mathsf{t},\mathsf{a},\mathsf{t}')
  \in \dlog_{\mathsf{A} \times \mathsf{B}}
  &\iff
    (
    \fst(\mathsf{t}),
    \fst(\mathsf{a}),
    \fst(\mathsf{t}')
    )
    \in \dlog_{\mathsf{A}}
    \text{ and }
    (
    \snd(\mathsf{t}),
    \snd(\mathsf{a}),
    \snd(\mathsf{t}'))
    \in \dlog_{\mathsf{B}}
\end{align*}
where we write
$\mathsf{\Gamma} \vdash \mathsf{t} = \mathsf{s}:
\mathsf{A}$ when the equality between
$\mathsf{\Gamma} \vdash \mathsf{t} :\mathsf{A}$
and
$\mathsf{\Gamma} \vdash \mathsf{s} :\mathsf{A}$ is
derivable from the standard equational theory
consisting of $\beta\eta$-equalities extended with
the following axiom for every multi-arity function
$\phi$:
\begin{equation*}
    \mathsf{\Gamma} \vdash
    \phi(\underline{r_{1}},
    \ldots,\underline{r_{\mathrm{ar}(\phi)}})
    =
    \underline{\phi(r_{1},\ldots,r_{\mathrm{ar}(\phi)})}
    : \real.
\end{equation*}
Although $\cterm{\mathsf{A}}$ is not a quantale in
general, we can show that $\dlog_{\mathsf{A}}$
satisfies ``left quasi-reflexivity'',
``transitivity'' and a fundamental lemma
in the following form.
\begin{proposition}\label{lem:dlog}
  Let $\mathsf{A}$ be a type.
  \begin{itemize}
  \item If
    $(\mathsf{t},\mathsf{a},\mathsf{t}') \in
    \dlog_{\mathsf{A}}$, then
    $(\mathsf{t},\mathsf{a},\mathsf{t}) \in
    \dlog_{\mathsf{A}}$.
  \item If
    $(\mathsf{t},\mathsf{a},\mathsf{t}') \in
    \dlog_{\mathsf{A}}$ and
    $(\mathsf{t}',\mathsf{a'},\mathsf{t}'') \in
    \dlog_{\mathsf{A}}$, then
    $(\mathsf{t},\mathsf{add}_{\mathsf{A}} \,
    \mathsf{a} \, \mathsf{a'},\mathsf{t}'') \in
    \dlog_{\mathsf{A}}$ where
    $\mathsf{add}_{\mathsf{A}} \in
    \cterm{\mathsf{A} \Rightarrow \mathsf{A}
      \Rightarrow \mathsf{A}}$ \shortv{is the
      pointwise extension of the standard addition
      on $\real$.} \longv{is given by
      \begin{align*}
        \mathsf{add}_{\real}
        &= \lam{\mathsf{x}\mathsf{y}}{\real}
          {\underline{\mathrm{add}}
          (\mathsf{x}, \mathsf{y})},
        \\
        \mathsf{add}_{\mathsf{A} \Rightarrow \mathsf{B}}
        &=
          \lam{\mathsf{x}\mathsf{y}}
          {\mathsf{A} \Rightarrow \mathsf{B}}
          {\lam{\mathsf{z}}{\mathsf{A}}
          {\mathsf{add}_{\mathsf{B}} \,
          (\mathsf{x} \, \mathsf{z}) \,
          (\mathsf{y} \, \mathsf{z})}},
        \\
        \mathsf{add}_{\mathsf{A} \times \mathsf{B}}
        &=
          \lam{\mathsf{x}\mathsf{y}}{\mathsf{A} \times \mathsf{B}}
          {\pair
          {\mathsf{add}_{\mathsf{A}} \, \fst(\mathsf{x}) \, \fst(\mathsf{y})}
          {\mathsf{add}_{\mathsf{B}} \, \snd(\mathsf{x}) \, \snd(\mathsf{y})}
          }.
      \end{align*}}
  \item For any term $\mathsf{x}_{1}:\mathsf{A}_{1},\ldots,\mathsf{x}_{n}:\mathsf{A}_{n}
    \vdash \mathsf{t}:\mathsf{A}$, and for any family
    $\{(\mathsf{s}_{i},\mathsf{a}_{i},\mathsf{s}'_{i}) \in \dlog_{\mathsf{A}_{i}}\}_{1 \leq i \leq n}$,
    \begin{equation*}
      (\mathsf{t}[\mathsf{s}_{1}/\mathsf{x}_{1},\ldots,\mathsf{s}_{n}/\mathsf{x}_{n}],
      \prt{\mathsf{t}}[\mathsf{s}_{1}/\mathsf{x}_{1},\ldots,\mathsf{s}_{n}/\mathsf{x}_{n},
      \mathsf{a}_{1}/\dot{\mathsf{x}}_{1},\ldots,\mathsf{a}_{n}/\dot{\mathsf{x}}_{n}],
      \mathsf{t}'[\mathsf{s}'_{1}/\mathsf{x}_{1},\ldots,\mathsf{s}'_{n}/\mathsf{x}_{n}]
      ) \in \dlog_{\mathsf{A}}.
    \end{equation*}
    \longv{\item If
      $(\mathsf{t}, \mathsf{a}, \mathsf{t'}) \in
      \dlog_{\mathsf{A}}$ and
      $\vdash \mathsf{t} = \mathsf{s} :
      \mathsf{A}$ and
      $\vdash \mathsf{a} = \mathsf{b} :
      \prt{\mathsf{A}}$ and
      $\vdash \mathsf{t}' = \mathsf{s}' :
      \mathsf{A}$, then
      $(\mathsf{s}, \mathsf{b}, \mathsf{s}') \in
      \dlog_{\mathsf{A}}$.}
  \end{itemize}
\end{proposition}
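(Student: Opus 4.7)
The plan is to prove the four items in order, treating item (4)---closure of $\dlog_{\mathsf A}$ under the equational theory---first, and then freely invoking it to absorb $\beta\eta$-rewrites in the other three proofs. Item (4) follows by induction on $\mathsf A$: the base case is built into the definition of $\dlog_{\real}$, and for function and product types provable equality of any of $\mathsf t,\mathsf a,\mathsf t'$ with $\mathsf s,\mathsf b,\mathsf s'$ propagates inside the defining clauses, where the induction hypothesis on the smaller types kicks in.

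For quasi-reflexivity I would induct on $\mathsf A$. The base case is immediate because $|r-r|=0\sqsupseteq s$ for every $s\in[0,+\infty]$, and the product case is componentwise. The function case is essentially free from the definition: the second clause of $\dlog_{\mathsf A\Rightarrow \mathsf B}$ already yields $(\mathsf t\,\mathsf s,\mathsf a\,\mathsf s\,\mathsf b,\mathsf t\,\mathsf s')\in \dlog_{\mathsf B}$ for every $(\mathsf s,\mathsf b,\mathsf s')\in \dlog_{\mathsf A}$, and this is exactly what is needed to witness $(\mathsf t,\mathsf a,\mathsf t)\in \dlog_{\mathsf A\Rightarrow \mathsf B}$, since both clauses of the target then collapse to this one.

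For transitivity I would induct on $\mathsf A$ using the pointwise addition $\mathsf{add}_{\mathsf A}$ of the statement. The base case is the usual triangle inequality for $|\cdot|$. In the function case, given $(\mathsf t,\mathsf a,\mathsf t'),(\mathsf t',\mathsf a',\mathsf t'')$ and $(\mathsf s,\mathsf b,\mathsf s')\in \dlog_{\mathsf A}$, I first invoke quasi-reflexivity to also obtain $(\mathsf s,\mathsf b,\mathsf s)\in \dlog_{\mathsf A}$, then apply the second clause of the first hypothesis on $(\mathsf s,\mathsf b,\mathsf s)$ and the first clause of the second hypothesis on $(\mathsf s,\mathsf b,\mathsf s')$; chaining the two outputs through the induction hypothesis on $\mathsf B$ produces a membership with distance $\mathsf{add}_{\mathsf B}(\mathsf a\,\mathsf s\,\mathsf b)(\mathsf a'\,\mathsf s\,\mathsf b)$, which is $\beta$-equal to $(\mathsf{add}_{\mathsf A\Rightarrow \mathsf B}\,\mathsf a\,\mathsf a')\,\mathsf s\,\mathsf b$. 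An appeal to item (4) then concludes, and the second required clause (with both outer endpoints equal to $\mathsf t$) is treated analogously by instantiating on $(\mathsf s,\mathsf b,\mathsf s)$ throughout.

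For the fundamental lemma I would induct on the derivation of $\mathsf\Gamma\vdash \mathsf t:\mathsf A$. Variables unfold directly to the assumed $(\mathsf s_i,\mathsf a_i,\mathsf s'_i)\in \dlog_{\mathsf A_i}$; constants are discharged by $0\sqsubseteq 0$; the $\phi$-case follows from the definition of $\phi^{\mathrm d}$ together with the base clause of $\dlog_\real$; pairs, projections, and application are routine uses of the defining clauses combined with the respective induction hypotheses. The delicate case is abstraction: for $\lam{\mathsf x}{\mathsf A}{\mathsf t}$ the derivative is $\lam{\mathsf x}{\mathsf A}{\lam{\dot{\mathsf x}}{\prt{\mathsf A}}{\prt{\mathsf t}}}$, and after $\beta$-reducing its two applications one is left with exactly the induction hypothesis on $\mathsf t$ in the extended environment, again relying on item (4) to identify the $\beta$-redexes with their contracta. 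I expect the main technical point to be this systematic use of equational closure, together with the careful coordination of the two clauses of the $\Rightarrow$-case in transitivity; no genuinely new idea beyond those already at play in the categorical Theorem~\ref{thm:fl} is required.
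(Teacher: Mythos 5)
Your overall strategy coincides with the paper's: a simultaneous induction on the type for quasi-reflexivity, transitivity and equational closure (the latter established first and used to absorb $\beta$-redexes), with quasi-reflexivity falling out of the second conjunct in the definition of $\dlog_{\mathsf{A}\Rightarrow\mathsf{B}}$, transitivity proved by feeding the quasi-reflexive instance $(\mathsf{s},\mathsf{b},\mathsf{s})$ of the argument into the hypotheses and chaining at type $\mathsf{B}$, and the fundamental lemma by induction on the typing derivation. This is exactly the route of the paper's proof sketch.

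There are, however, two concrete mis-steps in your arrow case for transitivity. First, you invoke the \emph{second} clause of the hypothesis $(\mathsf{t},\mathsf{a},\mathsf{t}')$ on $(\mathsf{s},\mathsf{b},\mathsf{s})$, which yields $(\mathsf{t}\,\mathsf{s},\mathsf{a}\,\mathsf{s}\,\mathsf{b},\mathsf{t}\,\mathsf{s})$; this does not compose with $(\mathsf{t}'\,\mathsf{s},\mathsf{a}'\,\mathsf{s}\,\mathsf{b},\mathsf{t}''\,\mathsf{s}')$, since the middle endpoints $\mathsf{t}\,\mathsf{s}$ and $\mathsf{t}'\,\mathsf{s}$ differ. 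You need the \emph{first} clause, giving $(\mathsf{t}\,\mathsf{s},\mathsf{a}\,\mathsf{s}\,\mathsf{b},\mathsf{t}'\,\mathsf{s})$, as in the paper. Second, the remaining conjunct of the goal is $(\mathsf{t}\,\mathsf{s},(\mathsf{add}\,\mathsf{a}\,\mathsf{a}')\,\mathsf{s}\,\mathsf{b},\mathsf{t}\,\mathsf{s}')$, whose right endpoint is $\mathsf{t}\,\mathsf{s}'$; ``instantiating on $(\mathsf{s},\mathsf{b},\mathsf{s})$ throughout'' can only produce triples ending in $\mathsf{t}\,\mathsf{s}$, so it cannot establish this. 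The second clause of the first hypothesis does give $(\mathsf{t}\,\mathsf{s},\mathsf{a}\,\mathsf{s}\,\mathsf{b},\mathsf{t}\,\mathsf{s}')$, but one must still pad the distance by $\mathsf{a}'\,\mathsf{s}\,\mathsf{b}$, and no self-distance of $\mathsf{t}\,\mathsf{s}'$ of that magnitude is available from the hypotheses. What is needed is an auxiliary weakening lemma, provable by the same induction on types: if $(\mathsf{u},\mathsf{c},\mathsf{u}')\in\dlog_{\mathsf{B}}$ and $\mathsf{d}$ is realized as a $\dlog_{\mathsf{B}}$-distance between some pair of terms, then $(\mathsf{u},\mathsf{add}_{\mathsf{B}}\,\mathsf{c}\,\mathsf{d},\mathsf{u}')\in\dlog_{\mathsf{B}}$ (at base type this is just monotonicity of the clause $|r-r'|\sqsupseteq s$ in $s$). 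The paper's sketch is silent on this conjunct as well, so the omission is understandable, but as written your recipe does not close it.
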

\longv{
  \begin{proof}[Proof Sketch]
    We prove the statement by induction on
    $\mathsf{A}$. We only check the case
    $\mathsf{A} \Rightarrow \mathsf{B}$. It is
    straightforward to derive
    ``left-quasi-reflexivity'' from the definition
    of
    $\dlog_{\mathsf{A} \Rightarrow \mathsf{B}}$.
    This is why we modify the definition of
    differential logical relation given in
    \cite{DGY19}. For transitivity, we shall show
    that for any
    $(\mathsf{t},\mathsf{a},\mathsf{t}') \in
    \dlog_{\mathsf{A}}$,
    $(\mathsf{t}',\mathsf{a}',\mathsf{t}'') \in
    \dlog_{\mathsf{A}}$ and
    $(\mathsf{s},\mathsf{b},\mathsf{s'}) \in
    \dlog_{\mathsf{A}}$, we have
    $(\mathsf{t} \, \mathsf{s}, (\mathsf{a} +
    \mathsf{a'}) \, \mathsf{b} \, \mathsf{s},
    \mathsf{t''} \, \mathsf{s'}) \in
    \dlog_{\mathsf{B}}$. By the induction
    hypothesis, we obtain
    $(\mathsf{s},\mathsf{b},\mathsf{s}) \in
    \dlog_{\mathsf{A}}$. Therefore,
    \begin{equation*}
      (\mathsf{t} \, \mathsf{s},
      \mathsf{a} \, \mathsf{b} \,
      \mathsf{s},
      \mathsf{t'} \, \mathsf{s}
      ) \in
      \dlog_{\mathsf{B}},
      \qquad
      (\mathsf{t'} \, \mathsf{s},
      \mathsf{a'} \, \mathsf{b} \,
      \mathsf{s},
      \mathsf{t''} \, \mathsf{s'}
      ) \in
      \dlog_{\mathsf{B}}.
    \end{equation*}
    Then, by transitivity of $\dlog_{\mathsf{B}}$,
    we obtain
    $(\mathsf{t} \, \mathsf{s}, (\mathsf{a} +
    \mathsf{a'}) \, \mathsf{b} \, \mathsf{s},
    \mathsf{t''} \, \mathsf{s'}) \in
    \dlog_{\mathsf{B}}$. We can prove the fundamental lemma
    by induction on the derivation of $\mathsf{\Gamma}
    \vdash \mathsf{t} : \mathsf{B}$.
  \end{proof}
}

\subparagraph{Equational Metric}
\label{sec:equat-theory-dist}

We introduce a formal system to infer
$\dlog$-distances between terms. For terms
$\mathsf{\Gamma} \vdash \mathsf{t} : \mathsf{A}$
and
$\mathsf{\Gamma},\prt{\mathsf{\Gamma}} \vdash
\mathsf{a} : \mathsf{A}$ and
$\mathsf{\Gamma} \vdash \mathsf{t}' : \mathsf{A}$,
we write
$\mathsf{\Gamma} \vdash
(\mathsf{t},\mathsf{a},\mathsf{t'}) : \mathsf{A}$
when we can derive this judgment from the rules
given in Figure~\ref{fig:eqt}. Then, we define a
type-indexed ternary predicates
$\{\deq_{\mathsf{A}} \subseteq \cterm{\mathsf{A}}
\times \cterm{\prt{\mathsf{A}}} \times
\cterm{\mathsf{A}}\}_{\mathsf{A} \in \type}$ by
\begin{equation*}
  (\mathsf{t},
  \mathsf{a},
  \mathsf{t'}) 
  \in \deq_{\mathsf{A}}
  \iff
  \vdash
  (\mathsf{t},
  \mathsf{a},
  \mathsf{t'}) : \mathsf{A}.
\end{equation*}
We note that quasi-reflexivity and transitivity for
arbitrary $\mathsf{A}$ follows from left
quasi-reflexivity and transitivity for $\real$. We can
also show that $\deq$ is subsumed by $\dlog$.
\begin{proposition}\label{prop:deq}
  Let $\mathsf{A}$ be a type.
  \begin{itemize}
  \item If
    $(\mathsf{t},\mathsf{a},\mathsf{t}') \in
    \deq_{\mathsf{A}}$, then
    $(\mathsf{t},\mathsf{a},\mathsf{t}) \in
    \deq_{\mathsf{A}}$.
  \item If
    $(\mathsf{t},\mathsf{a},\mathsf{t}') \in
    \deq_{\mathsf{A}}$ and
    $(\mathsf{t}',\mathsf{a'},\mathsf{t}'') \in
    \deq_{\mathsf{A}}$, then
    $( \mathsf{t},\mathsf{add}_{\mathsf{A}} \,
    \mathsf{a} \, \mathsf{a'},\mathsf{t}'') \in
    \deq_{\mathsf{A}}$.
  \item For any term
    $\mathsf{x}_{1}:\mathsf{A}_{1},\ldots,\mathsf{x}_{n}:\mathsf{A}_{n}
    \vdash \mathsf{t}:\mathsf{A}$, and for any
    family
    $\{(\mathsf{s}_{i},\mathsf{a}_{i},\mathsf{s}'_{i})
    \in \deq_{\mathsf{A}_{i}}\}_{1 \leq i \leq
      n}$,
    \begin{equation*}
      (\mathsf{t}[\mathsf{s}_{1}/\mathsf{x}_{1},\ldots,\mathsf{s}_{n}/\mathsf{x}_{n}],
      \prt{\mathsf{t}}[\mathsf{s}_{1}/\mathsf{x}_{1},\ldots,\mathsf{s}_{n}/\mathsf{x}_{n},
      \mathsf{a}_{1}/\dot{\mathsf{x}}_{1},\ldots,\mathsf{a}_{n}/\dot{\mathsf{x}}_{n}],
      \mathsf{t}'[\mathsf{s}'_{1}/\mathsf{x}_{1},\ldots,\mathsf{s}'_{n}/\mathsf{x}_{n}]
      ) \in \deq_{\mathsf{A}}.
    \end{equation*}
  \item If
    $(\mathsf{t},\mathsf{a},\mathsf{t}') \in
    \deq_{\mathsf{A}}$, then
    $(\mathsf{t},\mathsf{a},\mathsf{t}') \in
    \dlog_{\mathsf{A}}$.
  \end{itemize}
\end{proposition}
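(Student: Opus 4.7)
The proof has four parts, and throughout I follow the structural template already established for $\dlog$ in Proposition~\ref{lem:dlog}: the idea is that Figure~\ref{fig:eqt} is designed so that each reasoning step available for $\dlog$ via the fundamental lemma is matched by a corresponding derivation rule in the formal system, and vice versa.

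For the first two bullets, I would argue by induction on $\mathsf{A}$ as suggested by the authors, reducing to quasi-reflexivity and transitivity at $\real$. At $\real$ these should be axioms (or easy consequences) of Figure~\ref{fig:eqt}, since $|r-r|=0$ and $|r-r''|\leq |r-r'|+|r'-r''|$. At $\mathsf{A} \times \mathsf{B}$ one projects each triple onto its components, applies the inductive hypotheses at $\mathsf{A}$ and $\mathsf{B}$, and reassembles via the pair constructor rule. At $\mathsf{A} \Rightarrow \mathsf{B}$ one applies the functions to a fresh variable $\mathsf{x}$ and dotted variable $\dot{\mathsf{x}}$, invokes the inductive hypothesis at $\mathsf{B}$, and re-abstracts; the definition of $\mathsf{add}_{\mathsf{A} \Rightarrow \mathsf{B}}$ as the pointwise extension of $\mathsf{add}_{\mathsf{B}}$ is exactly what makes transitivity lift through the abstraction step, because the resulting sum of applied distances is $\beta$-equal to the applied sum of distances.

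For the third bullet, the fundamental lemma for $\deq$, I would proceed by induction on the typing derivation of $\mathsf{\Gamma} \vdash \mathsf{t} : \mathsf{A}$. The cases of variables and constants are base cases (using the hypothesized data $(\mathsf{s}_i,\mathsf{a}_i,\mathsf{s}'_i) \in \deq_{\mathsf{A}_i}$ and the reflexivity axiom $(\underline{r},\underline{0},\underline{r}) \in \deq_{\real}$, respectively). For the remaining cases, the definition of $\prt{\mathsf{t}}$ in Figure~\ref{fig:diff-term} is specifically engineered so that each clause matches an inference rule in Figure~\ref{fig:eqt}: application corresponds to $\prt{(\mathsf{t} \, \mathsf{s})} = \prt{\mathsf{t}} \, \mathsf{s} \, \prt{\mathsf{s}}$, abstraction to $\prt{(\lam{\mathsf{x}}{\mathsf{A}}{\mathsf{t}})} = \lam{\mathsf{x}}{\mathsf{A}}{\lam{\dot{\mathsf{x}}}{\prt{\mathsf{A}}}{\prt{\mathsf{t}}}}$, and the operator case uses the rule for $\phi^{\mathrm{d}}$. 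Substitution is handled compositionally since each rule of Figure~\ref{fig:eqt} is closed under substitution of a $\deq$-triple for a variable.

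For the fourth bullet, soundness $\deq \subseteq \dlog$, I would do induction on the derivation of $\vdash (\mathsf{t}, \mathsf{a}, \mathsf{t}') : \mathsf{A}$ in Figure~\ref{fig:eqt} and verify that each rule is validated by a corresponding clause of Proposition~\ref{lem:dlog}: reflexivity and transitivity at $\real$ follow from the definition of $\dlog_{\real}$ and the triangle inequality; the congruence/fundamental-lemma rules of the system are matched by the third clause of Proposition~\ref{lem:dlog}; and closure under the equational theory of $\LL$ holds because $\dlog_{\real}$ is defined modulo provable equality of terms, which combined with the $\beta\eta$-equational soundness for $\psem{-}$ propagates to all types. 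The main obstacle common to all four parts is that the precise set of rules in Figure~\ref{fig:eqt} dictates the exact case analysis, so the argument is tightly coupled to that design; assuming the system is set up in the obvious way that makes Proposition~\ref{lem:dlog} transfer, each case is routine but notation-heavy, with the function-type step of bullet~1 being the most delicate, since one must produce quasi-reflexive data at $\mathsf{A}$ to feed into the application rule and then re-abstract while preserving the shape of $\mathsf{a}$.
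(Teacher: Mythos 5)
Your proposal matches the paper's own (extremely terse) proof, which consists of the single line ``By induction on $\mathsf{A}$'' together with the preceding remark that quasi-reflexivity and transitivity at arbitrary types reduce to the corresponding rules at $\real$; your case analyses --- type induction for the first two bullets, induction on the typing derivation for the fundamental lemma, and induction on the formal derivation checked against Proposition~\ref{lem:dlog} for soundness --- are exactly the intended decomposition. The only point worth making explicit is that for the first two bullets the induction hypothesis must be strengthened to judgments $\mathsf{\Gamma} \vdash (\mathsf{t},\mathsf{a},\mathsf{t}') : \mathsf{A}$ in arbitrary contexts rather than the closed-term relation $\deq_{\mathsf{A}}$, since your arrow-type step applies the functions to a fresh variable and its dotted companion before re-abstracting and closing with $\beta\eta$-conversion via the last rule of Figure~\ref{fig:eqt}.
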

\longv{
  \begin{proof}
    By induction on $\mathsf{A}$.
  \end{proof}
}

\begin{figure}
  \fbox{
    \begin{minipage}{0.9668\textwidth}
      \footnotesize \centering
      \begin{math}
        \begin{array}{c}
          \prftree{
          |r - r'| \leq s
          }{
          \mathsf{\Gamma} \vdash
          (\underline{r},
          \underline{s},
          \underline{r}') : \real
          }
          \qquad
          \prftree{
          \mathsf{\Gamma} \vdash
          (\mathsf{t}_{1},
          \mathsf{a}_{1},
          \mathsf{t'}_{1}) : \real
          }{
          \ldots
          }{
          \mathsf{\Gamma} \vdash
          (\mathsf{t}_{n},
          \mathsf{a}_{n},
          \mathsf{t'}_{n}) : \real
          }{
          \mathsf{\Gamma} \vdash
          (
          \phi
          (\mathsf{t}_{1},
          \ldots,
          \mathsf{t}_{n}),
          \phi^{\mathrm{d}}
          (
          \mathsf{t}_{1},
          \ldots,
          \mathsf{t}_{n},
          \mathsf{a}_{1},
          \ldots,
          \mathsf{a}_{n}
          ),
          \phi
          (\mathsf{t'}_{1},
          \ldots,
          \mathsf{t'}_{n})
          ) : \real
          }
          \\[7pt]
          \prftree{
          \mathsf{x}: \mathsf{A}
          \in \mathsf{\Gamma}
          }{
          \mathsf{\Gamma} \vdash
          (
          \mathsf{x},
          \dot{\mathsf{x}},
          \mathsf{x})
          : \mathsf{A}
          }
          \qquad
          \prftree{
          \mathsf{\Gamma} \vdash
          (\mathsf{t},\mathsf{a},\mathsf{t}')
          : \real
          }{
          \mathsf{\Gamma} \vdash
          (\mathsf{t}',\mathsf{a}',\mathsf{t}'')
          : \real
          }{
          \mathsf{\Gamma} \vdash
          (\mathsf{t},
          \mathsf{a} + \mathsf{a}',\mathsf{t}'')
          : \real
          }
          \qquad
          \prftree{
          \mathsf{\Gamma} \vdash
          (\mathsf{t},\mathsf{a},\mathsf{t'})
          : \real
          }{
          \mathsf{\Gamma} \vdash
          (\mathsf{t},\mathsf{a},\mathsf{t})
          : \real
          }
          \\[7pt]
          \prftree{
          \mathsf{\Gamma} , \mathsf{x}:\mathsf{A}, \dot{\mathsf{x}}:\prt{\mathsf{A}}
          \vdash
          (\mathsf{t},\mathsf{a},\mathsf{t'})
          : \mathsf{B}
          }{
          \mathsf{\Gamma} \vdash
          (
          \lam{\mathsf{x}}{
          \mathsf{A}}{
          \mathsf{t}},
          \lam{\dot{\mathsf{x}}}{
          \prt{\mathsf{A}}}{
          \lam{\mathsf{x}}{
          \mathsf{A}}{
          \mathsf{a}}},
          \lam{\mathsf{x}}{
          \mathsf{A}}{
          \mathsf{t'}}
          ) :
          \mathsf{A} \Rightarrow \mathsf{B}
          }
          \qquad
          \prftree{
          \mathsf{\Gamma} \vdash
          (\mathsf{t},\mathsf{a},\mathsf{t'})
          : \mathsf{A} \Rightarrow \mathsf{B}
          }{
          \mathsf{\Gamma} \vdash
          (\mathsf{s},\mathsf{b},\mathsf{s'})
          : \mathsf{A}
          }{
          \mathsf{\Gamma} \vdash
          (\mathsf{t} \, \mathsf{s},
          \mathsf{a} \, \mathsf{s}  \, \mathsf{b},
          \mathsf{t'} \, \mathsf{s'})
          : \mathsf{B}
          }
          \\[7pt]
          \prftree{
          \mathsf{\Gamma} \vdash
          (\mathsf{t},
          \mathsf{a},
          \mathsf{t'})
          : \mathsf{A} \times \mathsf{B}
          }{
          \mathsf{\Gamma} \vdash
          (\fst(\mathsf{t}),
          \fst(\mathsf{a}),
          \fst(\mathsf{t'}))
          : \mathsf{A}
          }
          \quad
          \prftree{
          \mathsf{\Gamma} \vdash
          (\mathsf{t},
          \mathsf{a},
          \mathsf{t'})
          : \mathsf{A} \times \mathsf{B}
          }{
          \mathsf{\Gamma} \vdash
          (\snd(\mathsf{t}),
          \snd(\mathsf{a}),
          \snd(\mathsf{t'}))
          : \mathsf{B}
          }
          \quad
          \prftree{
          \mathsf{\Gamma} \vdash
          (\mathsf{t}, \mathsf{a}, \mathsf{t'})
          : \mathsf{A}
          }{
          \mathsf{\Gamma} \vdash
          (\mathsf{s}, \mathsf{b}, \mathsf{s'})
          : \mathsf{B}
          }{
          \mathsf{\Gamma} \vdash
          (\pair{\mathsf{t}}{\mathsf{s}},
          \pair{\mathsf{a}}{\mathsf{b}},
          \pair{\mathsf{t'}}{\mathsf{s'}})
          : \mathsf{A} \times \mathsf{B}
          }
          \\[5pt]
          \prftree{
          \mathsf{\Gamma} \vdash
          \mathsf{t} = \mathsf{s} : \mathsf{A}
          }{
          \mathsf{\Gamma}' \vdash
          \mathsf{t}' = \mathsf{s}' : \mathsf{A}
          }{
          \mathsf{\Gamma},
          \prt{\mathsf{\Gamma}} \vdash
          \mathsf{a} = \mathsf{b} : \prt{\mathsf{A}}
          }{
          \mathsf{\Gamma} \vdash
          (\mathsf{t},\mathsf{a},\mathsf{t}')
          : \mathsf{A}
          }{
          \mathsf{\Gamma} \vdash
          (\mathsf{s},\mathsf{b},\mathsf{s}')
          : \mathsf{A}
          }
        \end{array}
      \end{math}
    \end{minipage}
  }
  \caption{Derivation Rules}
  \label{fig:eqt}
\end{figure}

\section{A Lattice of \MMM{}s?}
\label{sec:towards-lattice-mmms}

We conclude our presentation with a few open questions about the relations holding between the different notions of program difference introduced in this paper.
When considering program equivalence, various non-equivalent notions have been 
introduced, such as observational equivalences, equivalences
derived from denotational semantics or equational
theories. Since observational equivalences are the
coarsest equivalences and equational theories are
the finest equivalences in many situations,
denotational semantics gives various mathematical
reasoning principles for observational
equivalences as well as equational theories.

Similarly, in the last sections we have introduced various notions of program differences, all
defined in terms of some form of differential logical relations. 
Therefore, it is reasonable to expect that a similar
comparison should be possible for \mmm{}s. In particular, this suggests the following two questions:
\begin{itemize}
\item Does the type indexed family $\dlog$
  give rise to
  the ``coarsest family of \mmm{}s''?
\item Does the type indexed family $\deq$ give
  rise to the ``finest family of \mmm{}s''?
\end{itemize}
We note that, although such differences are
defined over
$\cterm{\prt{\mathsf{A}}}$, which is not a
quantale, we can easily associate $\dlog$ and
$\deq$ with \mmm{}s valued on the quantale
$\mathcal{P}\cterm{\prt{\mathsf{A}}}$ of subsets
of $\cterm{\prt{\mathsf{A}}}$, letting
e.g.~$(\mathsf{t},a,\mathsf{t}') \in
\tilde{\delta}^{\mathrm{log}}_{\mathsf{A}} \iff
\text{for all }\mathsf{a} \in a,\,
(\mathsf{t},\mathsf{a},\mathsf{t}') \in
\dlog_{\mathsf{A}}$, and similarly for
${\deq_{\mathsf{A}}}$.


However, unfortunately, it is
not straightforward to tackle these questions
because of the two main obstacles.
First, while two \mmm{}s valued over the same quantale can be easily compared, it is not clear how to compare two \mmm{}s defined over \emph{different} quantales. 
 Second, while in the case of logical relations, the argument that logical equivalence is the coarsest one relies on a notion of \emph{observational equivalence}, it is not clear how to define a similar notion of \emph{observational quasi$^2$-metric} for $\LL$:   since differences between programs describe
  relationship between differences of inputs and
  differences of outputs, when we are to measure
  differences between programs, we should observe
  differences between outputs of programs with
  respect to different contexts. Therefore, we
  should define a notion of differences between
  contexts \emph{before} we define observational
  \mmm{} for $\LL$. How can we define differences
  between contexts?




\section{Related Work}

Differential logical relations for a simply typed
language were introduced in \cite{DGY19}, and
later extended to languages with monads \cite{DG22}, and related to incremental computing \cite{DLG21}.
Moreover, a unified framework
for operationally-based logical relations, subsuming differential logical relations, 
was introduced in \cite{lmcs:11041}.
The connections with metric spaces and partial
metric spaces have been explored already in
\cite{Geoffroy2020, PistoneLICS}, on the one hand
providing a series of negative results that
motivate the present work, and on the other hand
producing a class of metric and partial metric
models based on a different relational
construction.

The literature on the interpretation of linear or
graded lambda-calculi in the category of metric
spaces and non-expansive functions is ample
\cite{Reed2010, Gaboardi2013, Gaboardi2017,
  Gavazzo2018, Hoshino2023}. A related approach is
that of quantitative algebraic theories
\cite{Plotk}, which aims at capturing metrics over
algebras via an equational presentation. These
have been extended both to quantale-valued metrics
\cite{Dahlqvist2023} and to the simply typed
(i.e.~non graded) languages \cite{Honsell2022},
although in the last case the non-expansivity
condition makes the construction of interesting
algebras rather challenging.

The literature on partial metric spaces is vast,
as well. Introduced by Matthews \cite{matthews},
they have been largely explored for the
metrization of domain theory \cite{Bukatin1997,
  Schellekens2003, Smyth2006} and, more recently,
of $\lambda$-theories \cite{maestracci2025}. An
elegant categorical description of partial metrics
via the quantaloid of \emph{diagonals} is
introduced in \cite{Stubbe2018}. As this
construction is obviously related to the notion of
quasi-reflexivity here considered, it would be
interesting to look for analogous categorical
descriptions of the quasi$^2$-metrics here
introduced.

\section{Conclusion}
\label{sec:conclusion}

In this paper we have explored the connections
between the notions of program distance arising
from differential logical relations and those
defined via quasi-metrics and partial
quasi-metrics. As discussed in Section 9, our
results suggest natural and important questions
concerning the comparison of all the notions of
distance considered in this paper. At the same
time, our results provide a conceptual bridge that
could be used to exploit methods and results from
the vast area of research on quantale-valued
relations \cite{Hofmann2014, Stubbe2014} for the
study of program distances in higher-order
programming languages. For instance, natural
directions are the characterization of limits and,
more generally, of topological properties via
logical relations, as suggested by recent work
\cite{BCDG22}, although in a qualitative setting.

While we here focused on non-symmetric differential logical relations, 
understanding the metric structure of the symmetric case, as in \cite{DGY19}, would be interesting as well.
Notice that this would require to abandon quasi-reflexivity, cf.~Remark \ref{rem:symmetry}.

Finally, while in this paper we only considered simple
types, the notion of quasi$^2$-metric is robust
enough to account for other constructions like
e.g.~monadic types as in \cite{DG22}. It is thus
natural to explore the application of methods
arising from quasi-metric or partial quasi-metrics
for the study of languages with effects like
e.g.~probabilistic choice.


\bibliography{main}

\begin{thebibliography}{10}

\bibitem{AP2019}
Mario Alvarez-Picallo, Alex Eyers-Taylor, Michael Peyton~Jones, and C.-H.~Luke
  Ong.
\newblock Fixing incremental computation.
\newblock In Lu{\'\i}s Caires, editor, {\em Programming Languages and Systems},
  pages 525--552, Cham, 2019. Springer International Publishing.

\bibitem{Gaboardi2017}
Arthur Azevedo~de Amorim, Marco Gaboardi, Justin Hsu, Shin-ya Katsumata, and
  Ikram Cherigui.
\newblock A semantic account of metric preservation.
\newblock In {\em Proceedings POPL 2017}, pages 545--556, New York, NY, USA,
  2017. Association for Computing Machinery.
\newblock \href {https://doi.org/10.1145/3009837.3009890}
  {\path{doi:10.1145/3009837.3009890}}.

\bibitem{BCDG22}
Gilles Barthe, Rapha\"{e}lle Crubill\'{e}, Ugo Dal~Lago, and Francesco Gavazzo.
\newblock On feller continuity and full abstraction.
\newblock {\em Proc. ACM Program. Lang.}, 6(ICFP), August 2022.
\newblock \href {https://doi.org/10.1145/3547651} {\path{doi:10.1145/3547651}}.

\bibitem{matthews}
Michael Bukatin, Ralph Kopperman, Steve Matthews, and Homeira Pajoohesh.
\newblock Partial metric spaces.
\newblock {\em American Mathematical Monthly}, 116:708--718, 10 2009.
\newblock \href {https://doi.org/10.4169/193009709X460831}
  {\path{doi:10.4169/193009709X460831}}.

\bibitem{Bukatin1997}
Michael~A. Bukatin and Joshua~S. Scott.
\newblock Towards computing distances between programs via {S}cott domains.
\newblock In Sergei Adian and Anil Nerode, editors, {\em Logical Foundations of
  Computer Science}, pages 33--43, Berlin, Heidelberg, 1997. Springer Berlin
  Heidelberg.
\newblock \href {https://doi.org/10.1007/3-540-63045-7_4}
  {\path{doi:10.1007/3-540-63045-7_4}}.

\bibitem{Giarrusso2014}
Yufei Cai, Paolo~G. Giarrusso, Tillmann Rendel, and Klaus Ostermann.
\newblock A theory of changes for higher-order languages: incrementalizing
  $\lambda$-calculi by static differentiation.
\newblock In {\em Proceedings of the 35th ACM SIGPLAN Conference on Programming
  Language Design and Implementation}, PLDI '14, pages 145--155, New York, NY,
  USA, 2014. Association for Computing Machinery.
\newblock \href {https://doi.org/10.1145/2594291.2594304}
  {\path{doi:10.1145/2594291.2594304}}.

\bibitem{lmcs:11041}
Francesco Dagnino and Francesco Gavazzo.
\newblock A fibrational tale of operational logical relations: Pure, effectful
  and differential.
\newblock {\em Logical Methods in Computer Science}, Volume 20, Issue 2, Apr
  2024.
\newblock URL: \url{https://lmcs.episciences.org/11041}, \href
  {https://doi.org/10.46298/lmcs-20(2:1)2024}
  {\path{doi:10.46298/lmcs-20(2:1)2024}}.

\bibitem{Dahlqvist2023}
Fredrik Dahlqvist and Renato Neves.
\newblock A complete v-equational system for graded lambda-calculus.
\newblock In Marie Kerjean and Paul~Blain Levy, editors, {\em Proceedings of
  the 39th Conference on the Mathematical Foundations of Programming Semantics,
  {MFPS} XXXIX, Indiana University, Bloomington, IN, USA, June 21-23, 2023},
  volume~3 of {\em {EPTICS}}. EpiSciences, 2023.
\newblock URL: \url{https://doi.org/10.46298/entics.12299}, \href
  {https://doi.org/10.46298/ENTICS.12299} {\path{doi:10.46298/ENTICS.12299}}.

\bibitem{DLG21}
Ugo {Dal Lago} and Francesco Gavazzo.
\newblock Differential logical relations, {P}art {II} increments and
  derivatives.
\newblock {\em Theoretical Computer Science}, 895:34--47, 2021.

\bibitem{DG22}
Ugo Dal~Lago and Francesco Gavazzo.
\newblock Effectful program distancing.
\newblock {\em Proc. ACM Program. Lang.}, 6(POPL), January 2022.
\newblock \href {https://doi.org/10.1145/3498680} {\path{doi:10.1145/3498680}}.

\bibitem{DGY19}
Ugo Dal~Lago, Francesco Gavazzo, and Akira Yoshimizu.
\newblock Differential logical relations, part {I:} the simply-typed case.
\newblock In Christel Baier, Ioannis Chatzigiannakis, Paola Flocchini, and
  Stefano Leonardi, editors, {\em 46th International Colloquium on Automata,
  Languages, and Programming, {ICALP} 2019, July 9-12, 2019, Patras, Greece},
  volume 132 of {\em LIPIcs}, pages 111:1--111:14. Schloss Dagstuhl -
  Leibniz-Zentrum f{\"{u}}r Informatik, 2019.
\newblock URL: \url{https://doi.org/10.4230/LIPIcs.ICALP.2019.111}, \href
  {https://doi.org/10.4230/LIPICS.ICALP.2019.111}
  {\path{doi:10.4230/LIPICS.ICALP.2019.111}}.

\bibitem{Honsell2022}
Ugo Dal~Lago, Furio Honsell, Marina Lenisa, and Paolo Pistone.
\newblock {On Quantitative Algebraic Higher-Order Theories}.
\newblock In Amy~P. Felty, editor, {\em 7th International Conference on Formal
  Structures for Computation and Deduction (FSCD 2022)}, volume 228 of {\em
  Leibniz International Proceedings in Informatics (LIPIcs)}, pages 4:1--4:18,
  Dagstuhl, Germany, 2022. Schloss Dagstuhl -- Leibniz-Zentrum f{\"u}r
  Informatik.
\newblock URL:
  \url{https://drops.dagstuhl.de/entities/document/10.4230/LIPIcs.FSCD.2022.4},
  \href {https://doi.org/10.4230/LIPIcs.FSCD.2022.4}
  {\path{doi:10.4230/LIPIcs.FSCD.2022.4}}.

\bibitem{Hoshino2023}
Ugo Dal~Lago, Naohiko Hoshino, and Paolo Pistone.
\newblock {On the Lattice of Program Metrics}.
\newblock In Marco Gaboardi and Femke van Raamsdonk, editors, {\em 8th
  International Conference on Formal Structures for Computation and Deduction
  (FSCD 2023)}, volume 260 of {\em Leibniz International Proceedings in
  Informatics (LIPIcs)}, pages 20:1--20:19, Dagstuhl, Germany, 2023. Schloss
  Dagstuhl -- Leibniz-Zentrum f{\"u}r Informatik.
\newblock URL:
  \url{https://drops.dagstuhl.de/entities/document/10.4230/LIPIcs.FSCD.2023.20},
  \href {https://doi.org/10.4230/LIPIcs.FSCD.2023.20}
  {\path{doi:10.4230/LIPIcs.FSCD.2023.20}}.

\bibitem{Dreyer2009}
Derek Dreyer, Amal Ahmed, and Lars Birkedal.
\newblock Logical step-indexed logical relations.
\newblock In {\em Proceedings of the 2009 24th Annual IEEE Symposium on Logic
  In Computer Science}, LICS '09, pages 71--80, USA, 2009. IEEE Computer
  Society.
\newblock \href {https://doi.org/10.1109/LICS.2009.34}
  {\path{doi:10.1109/LICS.2009.34}}.

\bibitem{Gaboardi2013}
Marco Gaboardi, Andreas Haeberlen, Justin Hsu, Arjun Narayan, and Benjamin~C.
  Pierce.
\newblock Linear dependent types for differential privacy.
\newblock {\em SIGPLAN Not.}, 48(1):357--370, jan 2013.
\newblock \href {https://doi.org/10.1145/2480359.2429113}
  {\path{doi:10.1145/2480359.2429113}}.

\bibitem{Gavazzo2018}
Francesco Gavazzo.
\newblock Quantitative behavioural reasoning for higher-order effectful
  programs: Applicative distances.
\newblock In {\em Proceedings of the 33rd Annual ACM/IEEE Symposium on Logic in
  Computer Science}, LICS '18, pages 452--461, New York, NY, USA, 2018.
\newblock \href {https://doi.org/10.1145/3209108.3209149}
  {\path{doi:10.1145/3209108.3209149}}.

\bibitem{Geoffroy2020}
Guillaume Geoffroy and Paolo Pistone.
\newblock A partial metric semantics of higher-order types and approximate
  program transformations.
\newblock In {\em Computer Science Logic 2021 (CSL 2021)}, volume 183 of {\em
  LIPIcs--Leibniz International Proceedings in Informatics}, pages 35:1--35:18,
  2021.
\newblock \href {https://doi.org/10.4230/LIPIcs.CSL.2021.23}
  {\path{doi:10.4230/LIPIcs.CSL.2021.23}}.

\bibitem{Girard1989}
Jean-Yves Girard, Yves Lafont, and Paul Taylor.
\newblock {\em Proofs and Types}, volume~7 of {\em Cambridge Tracts in
  Theoretical Computer Science}.
\newblock Cambridge University Press, 1989.

\bibitem{Goubault-Larrecq_2013}
Jean Goubault-Larrecq.
\newblock {\em Non-Hausdorff Topology and Domain Theory: Selected Topics in
  Point-Set Topology}.
\newblock New Mathematical Monographs. Cambridge University Press, 2013.

\bibitem{GL2002}
Jean Goubault-Larrecq, Slawomir Lasota, and David Nowak.
\newblock Logical relations for monadic types.
\newblock In Julian Bradfield, editor, {\em Computer Science Logic}, pages
  553--568, Berlin, Heidelberg, 2002. Springer Berlin Heidelberg.

\bibitem{Hermida2014}
Claudio Hermida, Uday~S. Reddy, and Edmund~P. Robinson.
\newblock Logical relations and parametricity - {A} {R}eynolds programme for
  category theory and programming languages.
\newblock In {\em Proceedings of the Workshop on Algebra, Coalgebra and
  Topology (WACT 2013)}, volume 303 of {\em Electronic Notes in Theoretical
  Computer Science}, pages 149--180. Elsevier, 2014.

\bibitem{Hofmann2014}
Dirk Hofmann, Gavin~J Seal, and W~Tholen.
\newblock {\em Monoidal {T}opology: a {C}ategorical {A}pproach to {O}rder,
  {M}etric and {T}opology}.
\newblock Cambridge University Press, New York, 2014.

\bibitem{Stubbe2018}
Dirk Hofmann and Isar Stubbe.
\newblock Topology from enrichment: the curious case of partial metrics.
\newblock {\em Cahiers de Topologie et G\'eom\'etrie Diff\'erentielle
  Cat\'egorique, LIX}, 4:307--353, 2018.

\bibitem{KUNZI2006}
H.-P.A. K{\"u}nzi, H.~Pajoohesh, and M.P. Schellekens.
\newblock Partial quasi-metrics.
\newblock {\em Theoretical Computer Science}, 365(3):237--246, 2006.
\newblock Spatial Representation: Discrete vs. Continuous Computational Models.
\newblock URL:
  \url{https://www.sciencedirect.com/science/article/pii/S0304397506004993},
  \href {https://doi.org/10.1016/j.tcs.2006.07.050}
  {\path{doi:10.1016/j.tcs.2006.07.050}}.

\bibitem{cft}
Roy L.Crole.
\newblock {\em Categories for Types}.
\newblock Cambridge University Press, 1993.

\bibitem{maestracci2025}
Valentin Maestracci and Paolo Pistone.
\newblock {The Lambda Calculus Is Quantifiable}.
\newblock In J\"{o}rg Endrullis and Sylvain Schmitz, editors, {\em 33rd EACSL
  Annual Conference on Computer Science Logic (CSL 2025)}, volume 326 of {\em
  Leibniz International Proceedings in Informatics (LIPIcs)}, pages
  34:1--34:23, Dagstuhl, Germany, 2025. Schloss Dagstuhl -- Leibniz-Zentrum
  f{\"u}r Informatik.
\newblock URL:
  \url{https://drops.dagstuhl.de/entities/document/10.4230/LIPIcs.CSL.2025.34},
  \href {https://doi.org/10.4230/LIPIcs.CSL.2025.34}
  {\path{doi:10.4230/LIPIcs.CSL.2025.34}}.

\bibitem{Plotk}
Radu Mardare, Prakash Panangaden, and Gordon Plotkin.
\newblock Quantitative algebraic reasoning.
\newblock In {\em Proceedings LICS 2016}. IEEE Computer Society, 2016.

\bibitem{Mittal2016}
Sparsh Mittal.
\newblock A survey of techniques for approximate computing.
\newblock {\em ACM Comput. Surv.}, 48(4), March 2016.
\newblock \href {https://doi.org/10.1145/2893356} {\path{doi:10.1145/2893356}}.

\bibitem{PistoneLICS}
Paolo Pistone.
\newblock On generalized metric spaces for the simply typed lambda-calculus.
\newblock In {\em 36th Annual {ACM/IEEE} Symposium on Logic in Computer
  Science, {LICS} 2021, Rome, Italy, June 29 - July 2, 2021}, pages 1--14.
  {IEEE}, 2021.
\newblock \href {https://doi.org/10.1109/LICS52264.2021.9470696}
  {\path{doi:10.1109/LICS52264.2021.9470696}}.

\bibitem{Plotkin1993}
Gordon Plotkin and Martin Abadi.
\newblock A logic for parametric polymorphism.
\newblock In {\em TLCA '93, International Conference on Typed Lambda Calculi
  and Applications}, volume 664 of {\em Lecture Notes in Computer Science},
  pages 361--375. Springer Berlin Heidelberg, 1993.

\bibitem{Reed2010}
Jason Reed and Benjamin~C. Pierce.
\newblock Distance makes the types grow stronger.
\newblock {\em Proceedings {ICFP} 2010}, pages 157--168, 2010.

\bibitem{Reynolds1983}
John~C. Reynolds.
\newblock Types, abstraction and parametric polymorphism.
\newblock In R.E.A. Mason, editor, {\em Information Processing '83}, pages
  513--523. North-Holland, 1983.

\bibitem{Sangiorgi2007}
Davide Sangiorgi, Naoki Kobayashi, and Eijiro Sumii.
\newblock Logical bisimulations and functional languages.
\newblock In Farhad Arbab and Marjan Sirjani, editors, {\em International
  Symposium on Fundamentals of Software Engineering}, pages 364--379, Berlin,
  Heidelberg, 2007. Springer Berlin Heidelberg.

\bibitem{Schellekens2003}
Michel~P. Schellekens.
\newblock A characterization of partial metrizability: domains are
  quantifiable.
\newblock {\em Theor. Comput. Sci.}, 305(1-3):409--432, 2003.
\newblock Topology in Computer Science.
\newblock \href {https://doi.org/10.1016/S0304-3975(02)00705-3}
  {\path{doi:10.1016/S0304-3975(02)00705-3}}.

\bibitem{Smyth2006}
Michael~B. Smyth.
\newblock The constructive maximal point space and partial metrizability.
\newblock {\em Ann. Pure Appl. Log.}, 137(1-3):360--379, 2006.
\newblock \href {https://doi.org/10.1016/j.apal.2005.05.032}
  {\path{doi:10.1016/j.apal.2005.05.032}}.

\bibitem{Stubbe2014}
Isar Stubbe.
\newblock An introduction to quantaloid-enriched categories.
\newblock {\em Fuzzy Sets and Systems}, 256:95 -- 116, 2014.
\newblock Special Issue on Enriched Category Theory and Related Topics
  (Selected papers from the 33rd Linz Seminar on Fuzzy Set Theory, 2012).
\newblock \href {https://doi.org/10.1016/j.fss.2013.08.009}
  {\path{doi:10.1016/j.fss.2013.08.009}}.

\end{thebibliography}

\appendix

\end{document}